\documentclass[letterpaper,11pt]{article}
\def\anon{0}

\PassOptionsToPackage{table,xcdraw}{xcolor}
\usepackage[utf8]{inputenc}
\usepackage{amsmath, amsthm, amssymb, thm-restate}
\usepackage{algorithmicx}
\usepackage[ruled,vlined,linesnumbered]{algorithm2e}
\usepackage{setspace}
\usepackage{mathtools}
\usepackage[numbers]{natbib}
\usepackage{comment} 
\usepackage[most]{tcolorbox} 
\usepackage{xfrac}
\usepackage{hyperref}
\usepackage{multirow}
\usepackage{caption}
\usepackage{bm}
\usepackage{float}
\usepackage{newfloat}
\usepackage{enumitem}
\usepackage{dblfloatfix} 
\usepackage{wrapfig}
\usepackage{tikz}
\usepackage{csquotes}
\usetikzlibrary{arrows, automata, calc}
\usetikzlibrary{trees, shapes, decorations.pathreplacing, angles, quotes, decorations.pathmorphing}
\usetikzlibrary{shapes.geometric, positioning, arrows.meta}

\usepackage{fancyhdr}

\let\oldnl\nl
\newcommand{\nonl}{\renewcommand{\nl}{\let\nl\oldnl}}

\usepackage[margin=1in]{geometry}
\usetikzlibrary{fit}
\allowdisplaybreaks

\definecolor{mygreen}{RGB}{10,110,230}
\definecolor{myred}{RGB}{10,110,230}

\hypersetup{
     colorlinks=true,
     citecolor= mygreen,
     linkcolor= myred,
     urlcolor=black
}

\renewcommand{\epsilon}{\varepsilon}

\ifnum\anon=0
\newcommand{\sbcomment}[1]{{\textcolor{blue}{[\textbf{Soheil:} #1]}}}
\newcommand{\agcomment}[1]{{\textcolor{orange}{[\textbf{Alma:} #1]}}}
\newcommand{\mscomment}[1]{{\textcolor{red}{[\textbf{Madhu:} #1]}}}
\newcommand{\amcomment}[1]{{\textcolor{purple}{[\textbf{Amir:} #1]}}}
\newcommand{\vccomment}[1]{{\textcolor{green}{[\textbf{Vincent:} #1]}}}
\else
\newcommand{\sbcomment}[1]{}
\newcommand{\agcomment}[1]{}
\newcommand{\mscomment}[1]{}
\newcommand{\amcomment}[1]{}
\newcommand{\vccomment}[1]{}
\fi

\newcommand{\mA}{\mathcal{A}}
\newcommand{\mAh}{\hat{\mA}}

\newcommand{\p}{{\mathcal{P}}}
\renewcommand{\P}{{\mathcal{P}}}

\newcommand{\Z}{\mathbb{Z}}

\newcommand{\hiddencomment}[1]{}

\newcommand{\mc}[1]{\ensuremath{\mathcal{#1}}}

\newcommand{\ah}{{\hat{a}}}

\newcommand{\Ph}{{\hat{P}}}
\newcommand{\Mh}{{\hat{M}}}

\newcommand{\sink}[0]{\ensuremath{s}}
\newcommand{\QC}[0]{\ensuremath{\mathsf{QC}}}
\newcommand{\AQC}[0]{\ensuremath{\mathsf{QC}}}
\newcommand{\NAQC}[0]{\ensuremath{\mathsf{NAQC}}}
\newcommand{\tvd}{d_{\mathrm{TV}}}

\DeclarePairedDelimiter\card{\lvert}{\rvert}

\newcommand{\Yes}{{\mathrm{YES}}}
\newcommand{\No}{{\mathrm{NO}}}
\newcommand{\DYes}{{\mathcal{D}_{\Yes}}}
\newcommand{\DNo}{{\mathcal{D}_{\No}}}

\DeclareMathOperator*{\argmax}{arg\,max}

\usepackage[noabbrev,nameinlink]{cleveref}
\crefname{lemma}{Lemma}{Lemmas}
\crefname{theorem}{Theorem}{Theorems}
\crefname{property}{Property}{Properties}
\crefname{claim}{Claim}{Claims}
\crefname{definition}{Definition}{Definitions}
\crefname{observation}{Observation}{Observations}
\crefname{proposition}{Proposition}{Propositions}
\crefname{assumption}{Assumption}{Assumptions}
\crefname{line}{Line}{Lines}
\crefname{figure}{Figure}{Figures}
\creflabelformat{property}{(#1)#2#3}
\crefname{equation}{}{}
\crefname{section}{Section}{Sections}
\crefname{appendix}{Appendix}{Appendices}
\crefname{algCounter}{Algorithm}{Algorithms}
\Crefname{algCounter}{Algorithm}{Algorithms}

\newtheorem{lemma}{Lemma}[section]
\newtheorem{theorem}[lemma]{Theorem}
\newtheorem*{lemma*}{Lemma}
\newtheorem*{theorem*}{Theorem}

\newtheorem{corollary}[lemma]{Corollary}

\newtheorem{definition}[lemma]{Definition}
\newtheorem{claim}[lemma]{Claim}
\crefname{corollary}{Corollary}{Corollaries}

\newtheorem{remark}[lemma]{Remark}

\newtheorem{open}{Open Problem}
\newtheorem*{open*}{Open Problem}

\newtheorem*{problem*}{Problem}

\counterwithin{figure}{section}

\definecolor{mylightgray}{RGB}{240,240,240}

\algnewcommand{\IIf}[2]{\textbf{if} #1 \textbf{then} #2}
\algnewcommand{\EndIIf}{\unskip\ \algorithmicend\ \algorithmicif}
\algnewcommand{\IElse}[1]{\textbf{else} #1}

\newenvironment{graytbox}{
\par\addvspace{0.1cm}
\begin{tcolorbox}[width=\textwidth,
                  boxsep=5pt,
                  left=1pt,
                  right=1pt,
                  top=2pt,
                  bottom=2pt,
                  boxrule=1pt,
                  arc=0pt,
                  colback=mylightgray,
                  colframe=black,
                  ]
}{
\end{tcolorbox}
}

\newenvironment{whitetbox}{
\par\addvspace{0.1cm}
\begin{tcolorbox}[width=\textwidth,
                  boxsep=5pt,
                  left=1pt,
                  right=1pt,
                  top=2pt,
                  bottom=2pt,
                  boxrule=1pt,
                  arc=0pt,
                  colframe=black,
                  colback=white
                  ]
}{
\end{tcolorbox}
}

\newcounter{algCounter}

\let\oldlemma\lemma
\renewcommand{\lemma}{%
  \renewcommand{\emph}[1]{\textbf{##1}}
  \oldlemma}

\let\olddefinition\definition
\renewcommand{\definition}{%
  \renewcommand{\emph}[1]{\textbf{##1}}
  \olddefinition}

\ifnum\anon=0
\newcommand{\snote}[1]{{\color{brown} [Soheil: #1]}}
\newcommand{\mnote}[1]{{\color{red} [Madhu: #1]}}
\newcommand{\aanote}[1]{{\color{pink} [Amir: #1]}}
\newcommand{\agnote}[1]{{\color{blue} [Alma: #1]}}
\else  
\newcommand{\aanote}[1]{}
\newcommand{\mnote}[1]{}
\newcommand{\agnote}[1]{}
\newcommand{\snote}[1]{}
\fi 

\makeatletter
\renewcommand{\paragraph}{%
  \@startsection{paragraph}{4}%
  {\z@}{10pt}{-1em}%
  {\normalfont\normalsize\bfseries}%
}
\makeatother

\makeatletter
\patchcmd{\@algocf@start}
  {-1.5em}
  {0pt}
  {}{}
\makeatother

\title{Markov Chains with Rewinding}

\if\anon1{}\else{
\author{
Amir Azarmehr\thanks{Northeastern University. Supported in part by NSF CAREER Award CCF-2442812 and a Google
Research Award. Emails: \texttt{\{azarmehr.a, s.behnezhad, ghafari.m\}@northeastern.edu}} \and 
Soheil Behnezhad\footnotemark[1] \and
Alma Ghafari\footnotemark[1] \and
Madhu Sudan\thanks{School of Engineering and Applied Sciences, Harvard University, Cambridge, Massachusetts, USA. Supported in part by a Simons Investigator Award, NSF Award CCF 2152413 and AFOSR award FA9550-25-1-0112. Email: \texttt{madhu@cs.harvard.edu}}
}
}\fi 

\date{}

\begin{document}

\maketitle
 
\thispagestyle{empty}
 \begin{abstract}
\setlength{\parskip}{0.2cm}

Motivated by techniques developed in recent progress on lower bounds for sublinear time algorithms 
(Behnezhad, Roghani and Rubinstein, STOC 2023, FOCS 2023, and STOC 2024) we introduce and study a new class of {\em randomized algorithmic}
processes that we call ``Markov Chains with Rewinding''. In this setting an agent/{\em algorithm} interacts with a (partially observable) Markovian {\em random} evolution by periodically/strategically rewinding the Markov chain to previous states. Depending on the application this may lead the evolution to desired states faster, or allow the agent to efficiently learn or test properties of the underlying Markov chain that may be infeasible or inefficient with passive observation. 

We study the task of identifying the initial state in a given partially observable Markov chain.  Analysis of this question in specific Markov chains is the central ingredient in the above cited works and we aim to systematize the analysis in our work. Our first result is that any pair of states distinguishable with any rewinding strategy can also be distinguished with a {\em non-adaptive} rewinding strategy (i.e., one whose rewinding choices are predetermined before observing any outcomes of the chain). Therefore, while rewinding strategies can be shown to be strictly more powerful than passive strategies (i.e., those that do not rewind back to previous states), adaptivity does {\em not} give additional power to a rewinding strategy in the absence of efficiency considerations. 

The difference becomes apparent however when we introduce a natural efficiency measure, namely the {\em query complexity} (i.e., the number of observations they need to identify distinguishable states). Our second main contribution is to quantify this efficiency gap. We present a non-adaptive rewinding strategy whose query complexity is within a polynomial of that of the optimal (adaptive) strategy, and show that such a polynomial loss is necessary in general.

\end{abstract}

\newpage

{
\clearpage
\hypersetup{hidelinks}
\vspace{1cm}
\renewcommand{\baselinestretch}{0.1}
\setcounter{tocdepth}{2}
\tableofcontents{}
\thispagestyle{empty}
\clearpage
}

\setcounter{page}{1}
\section{Introduction}\label{sec:intro}

In this work, we formally define and study a new class of processes involving interaction between randomness and algorithms. In these processes, that we call \emph{Markov chains with rewinding}, a partially observable Markov chain interacts with an algorithmic agent that has the ability, at any point in time, to rewind the Markov chain to a previous point of time.
We study the problem of \emph{identifying the initial state} in this model, where the goal is for the algorithm to use the observations and the power of rewinding to identify a given (hidden) initial state.

Partially observable Markov chains form a well-established element of the toolkit in computing and arise in both the design and analysis of algorithms. Rewinding of Markov chains captures a natural ability of algorithms that work with Markov chains, and as a result, they have occurred implicitly in a variety of settings (an overview is given in \cref{sec:related-work}).

This is highlighted centrally in a class of applications of Markov chains with rewinding to {\em sublinear time algorithms}. For the maximum matching problem, \citet*{BehnezhadRR-STOC23,BehnezhadRR-FOCS23,BehnezhadRR-STOC24} prove strong lower bounds on the query complexity, and hence running time, of sublinear time algorithms by using Markov chains with rewinding to model sublinear algorithms, and then analyzing these chains. 
Recent works on non-adaptive sublinear lower bounds follow similar paradigms, while not explicitly defining a Markov chain \cite{AzarmehrBGS25,Shah26}.
Additionally, lower bounds for other graph problems in the sublinear-time model, such as edge orientation \cite{MitrovicRS24} and acyclicity \cite{BenderR00}, can also be viewed as instances of this general model. 

Despite the prevalence of this phenomenon, rewinding does not appear to have been studied formally in the literature and in particular tools to analyze the power (and limitations) of Markov chains with rewinding have been ad-hoc.
Our work is motivated principally by these and aims to build a more systematic toolkit for such analysis.

\paragraph{Markov chains with rewinding:}
We start with an example to illustrate the problem.
Consider the partially observable Markov chain depicted below, where the only observable variable is the color of the current state (i.e., whether it is in $\{a, a', b, b'\}$ or $\{s\}$).
The initial state is promised to be either $a$ or $a'$, and our goal is to identify this initial state.

\begin{figure}[H]
    \centering
    \vspace{-0.2cm}

\begin{tikzpicture}[->, >=stealth, auto, semithick, scale=0.8]
  \tikzstyle{every state}=[fill=none,draw=black,text=black,minimum size =0.8cm]
  \tikzstyle{sink state}=[fill=gray!30,draw=black,text=black,minimum size=0.7cm]

  \node[state] (a) at (0, 0) {$a$};
  \node[state] (b) at (2, 0) {$b$};
  \node[sink state] (s) at (4, 0) {$s$};
  \node[state] (bp) at (6, 0) {$b'$};
  \node[state] (ap) at (8, 0) {$a'$};

  \path (a) edge[loop above, looseness=7] node {$\frac{1}{2}$} (a)
        (a) edge node {$\frac{1}{2}$} (b)
        (b) edge node {$1$} (s)
        (ap) edge[above] node {$1$} (bp)
        (bp) edge[loop above, looseness=7] node {$\frac{1}{2}$} (bp)
        (bp) edge[above] node {$\frac{1}{2}$} (s)
        (s) edge[loop above, looseness=7] node {$1$} (s);
\end{tikzpicture}
\end{figure}

First, note that ``passive observations'' cannot distinguish $a$ from $a'$: Whether we start from $a$ or $a'$, it takes exactly $\text{Geometric}(1/2)+1$ steps until we reach $s$ and remain there forever. But suppose we are allowed to {\em rewind} back to any previous state at any time. In this case, we can run the Markov chain for one step to reach state $X$, and then run $D$ independent steps from $X$ for a sufficiently large $D$. The idea is that if $X = b'$ (which implies the initial state is $a'$), then it will most likely have a mix of $s$ and not $s$ next steps, whereas if $X \in \{a, b\}$ (implying that the initial state is $a$) all next steps will be the same. The following figure illustrates this.

\begin{figure}[H]
    \centering
\begin{tikzpicture}[
  scale=0.8,
  level distance=12mm,
  level 1/.style={sibling distance=32mm},
  level 2/.style={sibling distance=8mm},
  circ/.style={draw, circle, minimum size=4.5mm, inner sep=0pt, fill=white},
  sq/.style={draw, rectangle, minimum size=4.5mm, inner sep=0pt, fill=gray!30}
]

\begin{scope}[shift={(0,0)}]
\node[circ] (r1) {}
  child { node[circ] {}
    child { node[sq]{$s$} }
    child { node[circ]{} }
    child { node[circ]{} }
    child { node[sq]{$s$} }
    child { node[sq]{$s$} }
    child { node[circ]{} }
  };
\node[below=19mm of r1] {Initial state must be $a'$};
\end{scope}

\begin{scope}[shift={(6,0)}]
\node[circ] (r2) {}
  child { node[circ] {}
    child { node[circ]{} }
    child { node[circ]{} }
    child { node[circ]{} }
    child { node[circ]{} }
    child { node[circ]{} }
    child { node[circ]{} }
  };
\end{scope}

\begin{scope}[shift={(12,0)}]
\node[circ] (r3) {}
  child { node[circ] {}
    child { node[sq]{$s$} }
    child { node[sq]{$s$} }
    child { node[sq]{$s$} }
    child { node[sq]{$s$} }
    child { node[sq]{$s$} }
    child { node[sq]{$s$} }
  };
\end{scope}

\node at ($(r2)!0.5!(r3) + (0,-30.5mm)$) {Initial state is $a$ with probability $1-2^{-D+1}> 0.96.$};

\end{tikzpicture}
\end{figure}

We now present a more general description of the model, deferring its formalization to \cref{sec:model}.
The input consists of a partially observable Markov chain $M$, two candidate initial states $a$ and $a'$, and a hidden initial state $X_0 \in \{a, a'\}$.
The algorithm interacts with the Markov chain through the hidden initial state $X_0$.
It can either draw states according to the transition probabilities, or rewind the state to an earlier point in time.
The drawn states $\{X_t\}_t$ are hidden from the algorithm, but observations $\{Z_t\}_t$ are available, where $Z_t = O(X_t)$.
Eventually, the algorithm must decide whether the initial state $X_0$ is $a$ or $a'$, based on the observations.
We study the runtime and the number of queries (i.e.\ drawn states) required to do so.
Other than the motivation from sublinear-algorithm lower bounds, rewinding formalizes a natural approach to interacting with Markov chains; variations and special cases have been considered repeatedly in previous work (see \cref{sec:related-work}).

\subsection{Our Contributions}

Our main focus in this work is on the following fundamental problem: 
\begin{quote}{\em Given a partially observable Markov chain, can a nearly-optimal rewinding strategy be computed efficiently?}\end{quote}

To better understand this question, we first need to formalize the model. We present the formalization in \cref{sec:model}; We define \emph{rewinding strategies} 
and \emph{state identification algorithms}.
Rewinding strategy refers to the choice of rewinding times throughout the process.
A state identification algorithm computes the rewinding strategy through which it interacts with the partially observable Markov chain, and ultimately identifies the initial state based on the observations.
We study and compare the power of {\em adaptive} and {\em non-adaptive} rewinding strategies for this task, where a non-adaptive strategy determines how to rewind independently of the observations. 
These lead us to our key complexity measures $\AQC_M(a,b)$ (resp.\ $\NAQC_M(a,b)$) for the {adaptive} (resp.\ non-adaptive) {\em query complexity} of distinguishing state $a$ from $b$ in chain $M$,
i.e.\ the minimum number of observations required by a rewinding strategy to distinguish the two states.
Query complexity provides a benchmark for measuring the efficiency of an algorithm, and a means of proving lower bounds for its runtime. 

We show that there is indeed a polynomial-time algorithm for the question stated at the beginning of this section. Formally, we prove the following theorem.

\begin{graytbox}
    \begin{restatable}{theorem}{thmnaub}
    \label{thm:NA-ub}
    Given a partially observable Markov chain $M = (\Omega, P, O)$, there exists a non-adaptive algorithm and a constant $c = c(\card{\Omega})$ that can distinguish between any two states $a, b \in \Omega$ in time 
        $$
        c \cdot \QC_M(a, b)^{O(|\Omega|)}.
        $$
    \end{restatable}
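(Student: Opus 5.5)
The proof has two parts. First we reduce an arbitrary adaptive strategy to a structured family of non-adaptive \emph{tree-shaped} rewinding strategies. Then we show that a short, explicitly enumerable list of such trees already achieves polynomially related query complexity.

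\textbf{Step 1: normal form.} Any rewinding strategy produces a rooted tree of drawn states: each node is a step of the chain, and rewinding to time $t$ spawns a new child from the state at $t$. I will argue that only a tree of depth at most $\card{\Omega}$ and branching pattern depending on depth is needed. To see this, consider the joint law of observations. The law of the observation tree below a node at state $x$ depends only on $x$. So the information obtainable from depth $d$ is captured by the vectors of probabilities, over hidden states, of the observation patterns of subtrees. These vectors live in $\mathbb{R}^{\Omega}$, so the relevant linear spaces stabilize after at most $\card{\Omega}$ iterations. This mirrors the classical linear-algebraic argument for equivalence of HMM states.

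Concretely, define $V_0$ as the span of indicator functions of observations. Define $V_{k+1}$ as the span of $V_k$ together with $P$ applied to products of functions in $V_k$, since independent children from the same node multiply their likelihoods. Then $a$ and $b$ are distinguishable iff some $f\in V_{\card{\Omega}}$ has $f(a)\neq f(b)$. A non-adaptive strategy can estimate any such $f$ as a polynomial of bounded degree in empirical frequencies by drawing a fixed number $k$ of children at each level.

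\textbf{Step 2: quantitative bounds.} Let $q=\QC_M(a,b)$. The adaptive algorithm's success implies a lower bound on the total variation distance between observation trees from $a$ and $b$ of size $q$. I want to convert this into a gap $|f(a)-f(b)|\ge q^{-O(\card{\Omega})}$ for some test $f$ built from a product of at most $\mathrm{poly}(q)$ children per level and depth at most $\card{\Omega}$, with coefficients bounded by $c(\card{\Omega})$. The non-adaptive algorithm then takes branching $K=q^{O(1)}$ at each of $\card{\Omega}$ levels, for $K^{O(\card{\Omega})}$ queries. It estimates all low-degree symmetric statistics of the observed tree and uses Chernoff bounds to resolve a gap of $q^{-O(\card{\Omega})}$ with $q^{O(\card{\Omega})}$ repetitions. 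The runtime is dominated by simulating and tabulating these statistics, which is also $c\cdot q^{O(\card{\Omega})}$.

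\textbf{Main obstacle.} The hard step is the robust version of Step 1. An adaptive strategy of size $q$ gives constant TV distance, and from that I must extract an explicit polynomial statistic with an inverse-polynomial gap. Exact linear dependence among the $V_k$ does not suffice; I need a quantitative statement: if all depth-$\card{\Omega}$ product statistics of degree at most $q$ agree up to $\delta$ between $a$ and $b$, then every tree of size $q$ has TV distance $\mathrm{poly}(q)\cdot\delta^{1/O(\card{\Omega})}$-small.

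My first attempt would be a hybrid argument over the levels of the adaptive tree, replacing the subtree law at each node by its projection. The errors would be controlled by condition numbers of the Gram matrices of $V_k$, which depend only on $M$ and are absorbed into $c(\card{\Omega})$. Where that fails, for nearly-degenerate directions, I would fall back on a polynomial-method argument: moments of degree $q$ determine distributions on $\card{\Omega}$ points up to polynomial loss, which yields the $q^{O(\card{\Omega})}$ exponent.
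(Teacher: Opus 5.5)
\paragraph{Verdict: there is a genuine gap in the quantitative step.}
Your Step 1 is a sound qualitative skeleton. Since $\Omega$ is finite, the algebras of subtree-likelihood functions you generate correspond to partitions of $\Omega$. Your stabilizing chain is then essentially the exact ($\theta=0$) version of the paper's refinement chain, in which $\P_{k+1}$ keeps $x,y$ together iff $\P_k(x)=\P_k(y)$ and $\tvd^{\P_k}(x,y)=0$.

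However, the entire content of the theorem lies in the quantitative step you call the main obstacle, and that step is not proved. Moreover, the route you sketch cannot give the stated bound. Condition numbers of Gram matrices of the $V_k$ depend on the transition probabilities of $M$, not only on $\card{\Omega}$. The same holds for the coefficients needed to express class indicators as polynomials in elements of $V_k$. So these quantities cannot be absorbed into $c(\card{\Omega})$.

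This is not a technicality. Near-degenerate directions are exactly where the query complexity is large: in the chain of \cref{fig:Gap-A-NA}, everything is driven by the parameter $d$ at fixed $\card{\Omega}$. If $M$-dependent constants were allowed, the statement would be trivial, since one could take $c(M)$ to be the size of a finite separating non-adaptive tree, which your Step 1 already provides. The ``polynomial-method fallback'' is too unspecified to repair this.

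\paragraph{How the paper closes the gap.}
The paper uses two ideas whose losses depend only on $n=\card{\Omega}$.

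First, it measures closeness with the partition-relative distance $\tvd^{\P}$, which is a (pseudo)metric. Suppose two states in a class of $\P$ are at distance $D$. Refine $\P$ by the connected components of the graph $\{(x,y):\tvd^{\P}(x,y)<D/(n-1)\}$. This separates the two states, while every newly separated pair is at distance at least $D/(n-1)$ (\cref{lem:separating-pairs}). This is the robust replacement for exact linear dependence, and it involves no conditioning.

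Second, instead of a hybrid argument over projections, it directly couples the adaptive query trees started from $a$ and $b$, node by node. The coupling maintains that corresponding nodes lie in the same class of $\P$, hence produce identical observations, so the strategy makes the same choices in both runs. If every within-class pair has $\tvd^{\P}\le\theta$, each query breaks this invariant with probability at most $\theta$. Hence a $Q$-query strategy has advantage at most $Q\theta$ (\cref{clm:almost-coupling}).

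Combining the two, any partition that does not separate $a,b$ has an outgoing refinement of weight at most $(3(n-1)\AQC_M(a,b))^2$. So at most $n-2$ such refinements reach a separating partition. The non-adaptive tree, with level degrees proportional to these weights, then identifies classes level by level using Hoeffding and a union bound.
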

\end{graytbox}

In the statement above, we let $M$ be a Markov chain on state space $\Omega$ with a special {\em sink} state. The only observable signal available to the rewinding algorithm is whether the chain is in the sink state or not (similar to the example in \cref{sec:intro}). We call such Markov chains {\em canonical}
and show that identifying the initial state in canonical Markov chains is essentially as hard as general partially observable Markov chains (see \cref{sec:generality}). 

Observe that if non-adaptive algorithms cannot distinguish two states $a$ and $b$ of a Markov chain $M$, then \cref{thm:NA-ub} implies that no adaptive algorithm can do so either. Therefore, non-adaptive algorithms are as powerful as adaptive ones in terms of which pairs of states they can distinguish. 

Furthermore, formalized as \cref{thm:NA-A-gap}, we show that the (polynomial) power of $\QC_M(a, b)$ in \cref{thm:NA-ub} is tight, up to a constant, for the runtime of non-adaptive algorithms. More specifically, we present a partially observable Markov chain $M$ and states $a$ and $b$, where a polynomial gap exists between the adaptive and non-adaptive query complexity of distinguishing $a$ and $b$. 

\begin{graytbox}
    \begin{theorem}\label{thm:NA-A-gap}
        There is a choice of $M$ and pair $a, b \in \Omega$ such that for some $c = c(\card{\Omega})$,
        $$
         \NAQC_M(a, b) \geq c \cdot \AQC_M(a, b)^{(1-o(1))|\Omega|}.
        $$
        This holds for infinitely many choices of $|\Omega|$ and query complexities.
    \end{theorem}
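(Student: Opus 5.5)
We construct an explicit canonical chain in which the adaptive strategy succeeds because it \emph{sees} where it is, while a non-adaptive strategy must cover many possibilities blindly. The plan is a chain of $k$ ``levels'' generalizing the introductory example. Each level is a gadget in which a signal is revealed only at some random time, and the next level starts only after that. The times are roughly geometric with rate $1/n$, so the number of levels is $k=\Theta(|\Omega|)$ with $O(1)$ states per level. Starting from $a$ or $b$, the passive distributions of observations coincide. The two chains differ only at the deepest level: there one branch has an intermediate state with a mixed transition to the sink and the other does not, as with $b'$ versus $b$ in the introduction.

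\textbf{Upper bound on $\AQC$.} The adaptive strategy walks down the levels. At each level it steps forward until it observes the event marking completion of that level (a visible transition to or from the sink). It then rewinds to just before that event and branches. Since each level costs $O(n)$ expected queries and there are $k$ levels, plus $O(\log)$ repeated samples at the bottom to test for mixing, we get $\AQC_M(a,b)=O(kn)=n^{1+o(1)}$ for fixed $|\Omega|$.

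\textbf{Lower bound on $\NAQC$.} A non-adaptive strategy is a fixed rewinding tree: a rooted tree whose nodes are queried times, with branching chosen in advance. To distinguish, it must contain a node at which the chain sits in the distinguishing intermediate state at the bottom level, together with $\Omega(1)$ sibling continuations. The times at which levels complete are independent, each spread over a window of width about $n$. A fixed branch point therefore lands exactly in the correct narrow time slot with probability about $1/n$ per level, and by independence about $n^{-k}$ overall, conditioned on the path.

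Make this precise by a hybrid/coupling argument. Condition on the realized path, and show that the observation transcripts from $a$ and $b$ are identical unless some tree node is ``well placed'' at every level. Then a union bound over the at most $Q$ tree nodes gives $\tvd\le Q\cdot n^{-k}\cdot\mathrm{poly}(k)$. Hence $\NAQC\ge c\,n^{k}\ge c\,\AQC^{(1-o(1))|\Omega|}$ after choosing $k=(1-o(1))|\Omega|/O(1)$. Tuning the gadget to one state per level plus $O(1)$ overhead gives the exponent $(1-o(1))|\Omega|$. Letting $n$ vary gives infinitely many query complexities.

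\textbf{Main obstacle.} The key difficulty is designing the gadget so that a non-adaptive branch point gets no partial credit from being approximately on time. Each level must have a ``window'' of exactly one step, beyond which the information is destroyed: the chain moves deterministically on, or the mixed state becomes indistinguishable. We also need every observation outside the good event to have identical law under $a$ and $b$. I would first attempt a level design with states $u_i\to v_i\to u_{i+1}$, where $u_i$ self-loops with probability $1-1/n$ and $v_i$ is observable (sink-like) for exactly one step. I would then verify the indistinguishability claim by induction on the tree depth.
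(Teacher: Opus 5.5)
\textbf{The lower-bound mechanism fails.} Your bound $\tvd\le Q\cdot n^{-k}\cdot\mathrm{poly}(k)$ rests on the claim that a non-adaptive tree must place a branch point in the correct one-step slot at \emph{every} level, with the levels independent. Nothing in your construction forces this. The gadget you would try first ($u_i\to v_i\to u_{i+1}$, with $v_i$ moving deterministically on) does the opposite: a single unbranched walk passes through all $k$ levels by itself. The completion times therefore add rather than multiply; the bottom is reached after about $kn$ steps, with spread $O(\sqrt{k}\,n)$.

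Concretely, a non-adaptive strategy can query one spine of length $O(kn\log(kn))$, which reaches the bottom level with high probability, and hang $O(\log(kn))$ children off every spine node. Because your level signals are visible along the spine, it can afterwards locate the spine node at the distinguishing bottom state and read off its children, which is exactly the test you say suffices. This costs $\tilde O(kn)$ queries, so for this chain $\NAQC$ is polynomially the same as $\AQC$.

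More generally, a fixed tree can always cover uncertainty about \emph{when} something happens along a path at linear cost. A multiplicative blow-up needs a per-step penalty on long unbranched paths that an adaptive strategy can cheaply detect and undo. You flag this as the ``main obstacle'' (``information is destroyed beyond the window''), but it is the missing idea of the proof, not a detail.

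The paper supplies it with the dummy state $D$. On the path $q_1\to\cdots\to q_{|\Omega|-2}\to s$, each $q_i$ moves forward with probability $1/(2d)$, self-loops with probability $(1-1/d)/2$, and jumps to $D$ with probability $1/2$. From $D$ the chain moves to $s$ regardless of where it came from. An adaptive strategy recognizes $D$ with an $O(1)$-query test (all children are sinks) and rewinds to the parent. It thus effectively walks the path with forward probability $1/d$, and distinguishes $a=q_1$ from $b=q_2$ by averaging $O(|\Omega|)$ hitting times, for $O(|\Omega|^2 d)$ queries. A non-adaptive tree cannot filter $D$ on the fly. A root-to-node path of length $L$ can separate the two runs only if it avoids $D$ for all $L$ steps and makes all $|\Omega|-2$ forward moves. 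This has probability at most $2^{-L}\binom{L}{|\Omega|-2}d^{-(|\Omega|-2)}\le d^{-(1-o(1))|\Omega|}$ for every $L$, and a union bound over nodes gives the lower bound.

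Your plan also has two secondary problems. First, a state that is observable ``for exactly one step'', or a transition out of the sink, is impossible in a canonical chain: there the only observable state is the absorbing sink. Second, your gadget uses two states per level, and possibly separate copies below $a$ and $b$ as in the $b$ versus $b'$ example you cite. That yields an exponent of at most about $|\Omega|/2$, not $(1-o(1))|\Omega|$. The paper gets the full exponent by placing $a$ and $b$ on the same path, one level apart, with a single state per level.
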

\end{graytbox}

We remark that the adaptive rewinding strategy presented for this gap can be easily turned into an algorithm with the same run time.
Therefore, the polynomial gap is also present between the optimal adaptive and non-adaptive run times of state identification algorithms.

We leave open the question of whether the number of states in the Markov chain can be removed from the run-time exponent. Any solution would require new techniques and necessarily rely on adaptivity, as remarked above.
Formally, we ask:

\begin{open}
    Given a partially observable Markov chain $M = (\Omega, P, O)$, candidate initial states $a, b \in \Omega$, and hidden initial state $X_0 \in \{a, b\}$, is there an adaptive state identification algorithm that successfully identifies the initial state in time $\AQC_M(a, b)^{C}$ where constant $C$ is independent of the number of states?
\end{open}

We believe it is also interesting to compare non-adaptive algorithms against the best non-adaptive benchmark for future work. In particular, we ask: 

\begin{open}
    Given a partially observable Markov chain $M = (\Omega, P, O)$, candidate initial states $a, b \in \Omega$, and hidden initial state $X_0 \in \{a, b\}$, is there a non-adaptive state identification algorithm that successfully identifies the initial state in time $\NAQC_M(a, b)^C$ where constant $C$ is independent of the number of states?
\end{open}

\subsection{Related Work} 
\label{sec:related-work}

We overview settings where partially observable Markov chains with rewinding have occurred naturally.
A highly prevalent instance is the folklore method of converting expected {\em run times of algorithms} into high probability bounds by restarting the algorithm when it does not halt promptly.
This is a special case of rewinding where the underlying state of the algorithm is captured by a Markov chain, and the observation is limited to knowing whether the algorithm has halted or not. 

Restarting/rewinding can also help boost the success of algorithms as exemplified in the famed randomized algorithm for 3SAT due to Sch\"oning~\cite{Schoning99}.
In the case of random walks on multi-dimensional lattices, \citet*{JansonP12} showed that the expected hitting time of the origin is finite when the algorithm is allowed to restart the walk back from the starting point.
More general forms of rewinding lead to elegant new approaches to {\em optimization} and {\em random sampling} as in the ``Go-with-the-winners'' algorithm of Aldous and Vazirani~\cite{AldousVazirani}. 
Selecting which state to continue the process from is also studied by \citet*{DumitriuTW03}, where the algorithm decides which Markov chain to advance, in order to minimize the expected time that one of them reaches the target state.
In {\em cryptography}, rewinding arguments play a celebrated role in proving the security of protocols starting with the seminal work of Goldwasser, Micali, and Rackoff~\cite{GoldwasserMR89}. Rewinding is also a central element in defining strong notions of security as in the notion of ``resettable zero knowledge'' due to Canetti, Goldreich, Goldwasser, and Micali~\cite{CanettiGGM00}.

\subsection{Connection to Sublinear-Time Graph Algorithms}
\label{sec:connections-to-sublinear}

To further motivate our model, we overview its connection to sublinear time graph algorithms (or more precisely, lower bounds).
Here, we provide a high-level description of this connection in general. 
Later in \cref{sec:sublinear}, we present an explicit connection to the lower bound for testing acyclicity. That is, we show how the lower bound is captured by Markov chains with rewinding.

Let us first briefly overview the model of sublinear-time algorithms for graphs. We are given a graph $G=(V, E)$ to which we have adjacency list query access.\footnote{Another common access model is the adjacency matrix model.} Each query can specify a vertex $v$ and an integer $i \geq 1$. The answer to such a query is the $i$-th neighbor of vertex $v$ stored in an arbitrarily ordered list, or $\perp$ if $\deg(v) < i$. Now, the main question of interest, is the number of such queries needed to estimate a property of the graph. For example, the works of \cite{BehnezhadRR-STOC23,BehnezhadRR-FOCS23,BehnezhadRR-STOC24} focus on proving lower bounds on the query complexity of estimating the size of maximum matching in the graph (see also \cite{Behnezhad21,YoshidaYI09,NguyenO08,BhattacharyaKS23} for some algorithms). Many other problems can, and have been, studied in this setting, including graph coloring \cite{AssadiCK19}, correlation clustering \cite{Assadi022}, metric properties such as minimum spanning trees \cite{CzumajS04,ChazelleRT01}, Steiner trees \cite{CKT23}, TSP \cite{CKT-ICALP23,BehnezhadRRS-ICALP24}, and many others.

Remarkably, for many graph properties, we have had {\em unconditional} query lower bounds that can be matched algorithmically with sublinear time algorithms. However, these lower bound arguments are often ad hoc, and there is a lack of systematic tools to prove them. Our hope is that our model and techniques can serve as a first step towards developing such generic lower bound tools. Let us expand on this connection a bit further.

 Typically, sublinear time lower bounds are proved by specifying two distributions $\mc{D}_{YES}$ and $\mc{D}_{NO}$ of graphs, where graphs drawn from $\mc{D}_{YES}$ have the desired property and those drawn from $\mc{D}_{NO}$ do not. The lower bound then follows by showing that distinguishing these distributions requires many queries to the graph. In most cases, these distributions are constructed by having a few groups $A_1, A_2, \ldots, A_k$ that each vertex may belong to and is kept hidden from the algorithm. The distribution of edges between these groups differs in $\mc{D}_{YES}$ and $\mc{D}_{NO}$, and thus the task of distinguishing these distributions reduces to identifying these groups. The only useful signal that the algorithm receives, and helps in identifying the groups, is the vertex degrees. This is precisely how the hard instances of \cite{BehnezhadRR-STOC23,BehnezhadRR-FOCS23,BehnezhadRR-STOC24} for maximum matchings are constructed. Lower bounds for other graph problems in the sublinear-time model, such as edge orientation \cite{MitrovicRS24} and acyclicity \cite{BenderR00}, can also be viewed as instances of this general framework.

 Now note that each group $A_i$ can be viewed as a different state of the Markov chain. We let $p_{ij}$ (i.e., the transition probability from $A_i$ to $A_j$) to be the fraction of edges of group $A_i$ that go to $A_j$, representing the fact that a random adjacency list neighbor of an $A_i$ node belongs to $A_j$ with probability $p_{ij}$. We construct these states separately for groups in $\mc{D}_{YES}$ and $\mc{D}_{NO}$. Then, assuming that the algorithm cannot find cycles (which holds in many lower bounds such as \cite{BehnezhadRR-STOC23,BehnezhadRR-FOCS23,GoldreichR97,BenderR00}) the task of distinguishing the two distributions reduces to distinguishing if the initial state is among the YES states or the NO states of the Markov chain. Note, in particular, that because the sublinear time algorithm is not constrained to follow a single path in the graph and can adaptively make adjacency list queries to any previously visited vertex, this corresponds to a rewinding strategy on the Markov chain. Therefore, the study of the power and limitations of such rewinding strategies for general Markov chains closely relates to the query complexity of sublinear-time graph algorithms, and can serve as a rich toolkit for proving lower bounds against them.

\subsection*{Paper Arrangement}
More examples of Markov chains with rewinding are presented in \cref{sec:examples}.
An overview of our techniques appears in \cref{sec:techniques}.
The model is formally defined in \cref{sec:model}.
We examine state partitions, a key tool for our algorithm, in \cref{sec:partitions}.
\cref{thm:NA-ub,thm:NA-A-gap} are proven in \cref{sec:prim,sec:gap} respectively.
Finally, in \cref{sec:generality}, we prove the generality of canonical Markov chains.

\section{Motivating Examples \& Technical Overview}
\label{sec:examples}

We start this section by presenting examples of rewindable Markov chains along with optimal strategies to distinguish between two candidate initial states. 
These examples further clarify the model and demonstrate how rewinding strategies can be \emph{non-trivially} effective, and thus hard to prove lower bounds against.
After these examples, we outline the techniques used in our algorithms and analyses.

\subsection{Example 1: The non-trivial power of (non-adaptive) rewinding}
\label{sec:example-stoc23}
\label{sec:example-1}

\begin{wrapfigure}[9]{r}{0.35\textwidth}  
    \centering
    
  \vspace{-0.9cm}
\begin{center}
\resizebox{!}{4cm}{
\begin{tikzpicture}[->, >=stealth', auto, semithick, scale=0.9]
  \tikzstyle{every state}=[fill=none,draw=black,text=black]
  \tikzstyle{sink state}=[fill=gray!30,draw=black,text=black,minimum size=0.8cm]

  \node[state] (a) at (0, 0) {$a$};
  \node[state] (b) at (3, 0) {$b$};
  \node[state] (ap) at (0, -3) {$a'$};
  \node[state] (bp) at (3, -3) {$b'$};
  \node[sink state] (sink) at (5.5, -1.5) {$s$};

  \path (a) edge[bend left] node {1} (b)
        (b) edge[bend left] node {$1 - \frac{1}{d}$} (a)
        (b) edge node[below] {$\frac{1}{d}$} (sink)
        (ap) edge[loop left] node {$\frac{1}{d}$} (ap)
        (ap) edge[bend left] node[above] {$1 - \frac{1}{d}$} (bp)
        (bp) edge[bend left] node[below] {$1 - \frac{1}{d}$} (ap)
        (bp) edge node[above] {$\frac{1}{d}$} (sink)
        (sink) edge[loop right] node {$1$} (sink);
\end{tikzpicture}
}
\end{center}
\end{wrapfigure}

Our first example involves a simple Markov chain on 5 states, inspired by the hard distribution of \cite{BehnezhadRR-STOC23} for graph matchings. 
The chain is parameterized by an integer $d$, and its transition probabilities are multiples of $1/d$ as illustrated by the arrows in the figure. 
There is a special sink state $\sink$, and the algorithm can only observe whether a state is sink or non-sink.
The goal is to distinguish whether the initial state is $a$ or $a'$.
We show first how the natural strategy to distinguish two candidate initial states $a$ and $a'$ leads to query complexity $\widetilde{O}(d^2)$. A quick look at the chain and the rewinding strategy seems to suggest that such a $d^2$ complexity is essential; but we show later that a more sophisticated strategy actually distinguishes $a$ from $a'$ in $O(d)$ queries.

\paragraph{A simple algorithm with $\widetilde{O}(d^2)$ query complexity:} We say a node in the tree is an $A$-node if its state is $a$ or $a'$ and a $B$-node if its state is $b$ or $b'$.  First, we claim that given a node $v$ in the tree, we can determine with probability $1-\epsilon$ if it is an $A$ or $B$ node with $O(d \log(1/\epsilon))$ queries---we call this an $A$/$B$ test. 
To do this, note that if $v$ is an $A$-node, any child of $v$ is non-sink (specifically, it is a $B$-node). Moreover, if $v$ is a $B$-node, each child has a probability of $1/d$ of being a sink node. Therefore, opening $O(d\log(1/\epsilon))$ children for a node $v$ is sufficient for $A$/$B$ testing $v$.

 \begin{wrapfigure}[7]{r}{0.35\textwidth}  
    \centering
    
  \vspace{-0.7cm}
  
  \resizebox{!}{3cm}{
     \begin{tikzpicture}[
      level 1/.style={sibling distance=20mm},
      level 2/.style={sibling distance=10mm},
      edge from parent/.style={draw, -latex},
      every node/.style={circle, draw, minimum size=4mm},
      no border/.style={draw=none},
      scale=0.8
    ]
    
    \node {} 
      child {node {}
        child {node {}}
        child {node[draw=none] {...}}
        child {node {}}
      }
      child {node[draw=none] {...}}
      child {node {}
        child {node {}}
        child {node[draw=none] {...}}
        child {node {}}
      };
    
    \draw [-] (-0.7,0) arc[start angle=180,end angle=360,radius=0.7cm];
    
    \draw [-] (1.3,-1.5) arc[start angle=180,end angle=360,radius=0.7cm];
    
    \node[draw=none] at (1.8, 0) {$\Theta(d \log d)$};
    \node[draw=none] at (3.8, -1.5) {$\Theta(d \log d)$};
    
    \end{tikzpicture}
    }

  \vspace{-0.2cm}
\end{wrapfigure}

 Next, having the $A$/$B$ test available to us, we aim to assert whether a given $A$-node $v$ has state $a$ or $a'$. To do so, we first open $d \log (1/\epsilon)$ children for $v$. If $v$ has state $a$, then all of its children must be $B$-nodes. On the other hand, if $v$ has state $a'$, each child is an $A$-node with probability $1/d$, and so at least one must be an $A$-node with probability $1-(1-1/d)^{d\log(1/\epsilon)} \geq 1-\epsilon$. It then suffices to run the $A$/$B$ test on these children to check if any of them is an $A$ node. This takes total time $O(d^2 \log^2(1/\epsilon))$. Choosing $\epsilon = 1/d^3$  so that all $A$/$B$ tests are correct w.h.p., we get an overall query complexity of $O(d^2 \log^2 d)$. The final queried tree is non-adaptive and is illustrated on the right.

\begin{wrapfigure}[9]{r}{0.35\textwidth}  
    \centering
    
  \vspace{-0.3cm}
    \resizebox{!}{4.5cm}{ 
    \begin{tikzpicture}[
      sibling distance=20mm,
      edge from parent/.style={draw, -latex},
      every node/.style={circle, draw, minimum size=4mm},
      no border/.style={draw=none},
      scale=0.7
    ]
    
    \node {} 
      child {node {}
                      child {node[draw=none] {...}
                        child {node {}
                          child {node {}}
                          child {node[draw=none] {...}}
                          child {node {}}
                        }
        }
      };
    
    \draw [-] (-0.7,-4.5) arc[start angle=180,end angle=360,radius=0.7cm];
    \node[draw=none] at (4.5, -4.5) {$\Theta(d)$ (for $A$/$B$ testing)};
    
    \draw [decorate,decoration={brace,mirror,amplitude=10pt},xshift=-4pt] 
      (-0.3,-0.2) -- (-0.3,-4.5) node [midway,xshift=-15pt,draw=none] {};
    \node[draw=none] at (-1.4, -2.3) {$2d$};
    
    \end{tikzpicture}
    }
  \end{wrapfigure}
  
 \paragraph{Improving query complexity to $O(d)$:} We now describe how we can improve over the algorithm above, and distinguish the $a$ and $a'$ states with just $O(d)$ queries. Here is the key insight. Suppose that we open a path of length $2d$ from the root and let us condition on not seeing the $\sink$ node at all (which happens with constant probability). Then, if we start from the $a$ state, we must alternatively visit $a$ and $b$, finally arriving back at $a$ since the length of the path is even. However, if we start from $a'$, there is a constant probability that we go from $a'$ to $a'$ exactly once throughout the process since $P(a', a')=1/d$ and the length of the walk is $2d$. If this event happens, we end up in state $b'$. Therefore, it suffices to run the $A$/$B$ test on the last vertex of the path---if it happens to be $B$, we will be sure that we started from $a'$. Note that the success probability is a small constant, but we can boost it by repeating this process multiple times. The figure on the right shows the final query tree (for a single repetition of this process), which again is non-adaptive.

 \subsection{Example 2: The power of adaptivity}
\label{sec:example-2}
Our second example shows the power of adaptive rewinding strategies compared to non-adaptive ones. This example, illustrated in \cref{fig:Gap-A-NA}, is later used to prove a polynomial gap between the non-adaptive and adaptive query complexities in the worst case (\cref{thm:NA-A-gap}). We formalize this in \cref{sec:gap}, but here provide an intuition of how adaptivity helps. 

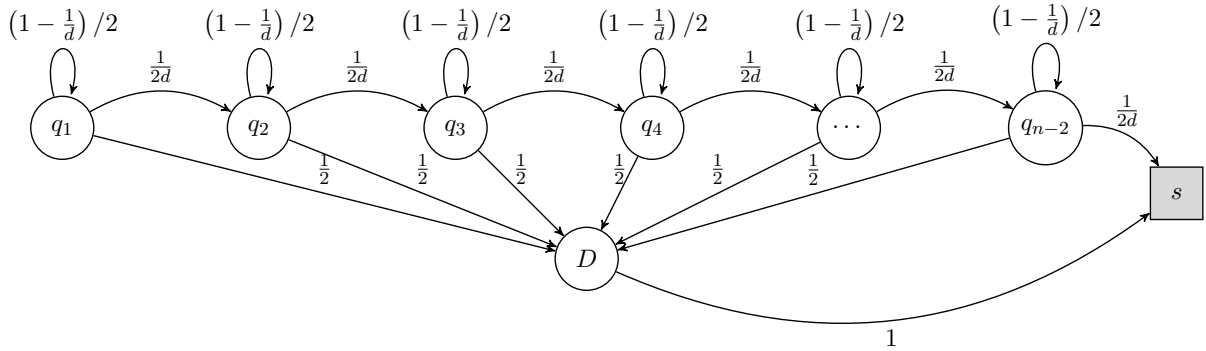
\begin{figure}
    \centering
    \resizebox{!}{4.8cm}{\begin{tikzpicture}[->, >=stealth', auto, semithick, scale=2]
  \tikzstyle{every state}=[fill=none,draw=black,text=black]
  \tikzstyle{sink state} = [
            rectangle,
            fill=gray!30,
            draw=black,
            minimum size=0.8cm,
            text centered
        ]
  \node[state] (q_1) at (0, 0)  {$q_1$};
  \node[state] (q_2) at (1.5, 0) {$q_2$};
  \node[state] (q_3) at (3, 0) {$q_3$};
  \node[state] (q_4) at (4.5, 0) {$q_4$};
  \node[state] (dots) at (6, 0) {$\cdots$};
  \node[state] (q_n) at (7.5, 0) {$q_{n-2}$};
  \node[sink state] (sink) at (8.5, -0.5) {$s$};
  \node[state] (D) at (4, -1) {$D$};

  \path[->]
  
    (q_1) edge[bend left] node[midway,above] {$\frac{1}{2d}$} (q_2)

    edge[loop above] node {$\left(1-\frac{1}{d}\right)/2$} (q_1)
          
    (q_2) edge[bend left] node[midway,above] {$\frac{1}{2d}$}  (q_3)
          edge[loop above] node {$\left(1-\frac{1}{d}\right)/2$} (q_2)
    (q_3) edge[bend left] node[midway,above] {$\frac{1}{2d}$}  (q_4)
          edge[loop above] node {$\left(1-\frac{1}{d}\right)/2$} (q_3)

    (q_4) edge[bend left] node[midway,above] {$\frac{1}{2d}$}  (dots)
          edge[loop above] node {$\left(1-\frac{1}{d}\right)/2$} (q_4)
    
    (q_n) edge[bend left] node[midway,above] {$\frac{1}{2d}$}  (sink)
          edge[loop above] node {$\left(1-\frac{1}{d}\right)/2$} (q_n)
    (D) edge[bend right] node[midway,below] {$1$} (sink)

    (dots) edge[bend left] node[midway,above] {$\frac{1}{2d}$} (q_n)
          edge[loop above] node {$\left(1-\frac{1}{d}\right)/2$}(dots)

    (q_1) edge[right] node[midway,above] {$\frac{1}{2}$} (D)
            
    (q_2) edge[right] node[midway,above] {$\frac{1}{2}$} (D)
            
    (q_3) edge[right] node[midway,above] {$\frac{1}{2}$} (D)
    (q_4) edge[right] node[midway,above] {$\frac{1}{2}$} (D)

    (q_n) edge[right] node[midway,above] {$\frac{1}{2}$} (D)

    (dots) edge[right] node[midway,above] {$\frac{1}{2}$} (D);

\end{tikzpicture}}
    \caption{A canonical Markov chain for which there is a (large) polynomial gap between adaptive and non-adaptive strategies of distinguishing states $q_1$ and $q_2$.}
    \label{fig:Gap-A-NA}
\end{figure}

Given a parameter $d \geq 1$, the Markov chain is formed by combining a \enquote{simple Markov chain} and a \enquote{dummy state}.
The simple Markov chain consists of a path of states, where each state transitions to the next with probability $\frac{1}{d}$, and stays in the same state otherwise.
States can be easily distinguished in the simple Markov chain using $O(n^2 d)$ queries, where $n = |\Omega|$ is the number of states.
To do so, it suffices to take random walks from the initial state and examine the time it takes to move to the sink (see \cref{sec:gap} for the proof).

However, when combined with the dummy state, every state has a $\frac{1}{2}$ chance of jumping to the dummy state (otherwise, it proceeds as before).
Then, the dummy state directly moves to the sink with probability $1$.
As all the states have the same probability of transitioning to the dummy state, moving to the sink through the dummy state reveals no information about the initial state.

Adaptive strategies can easily filter out the dummy state. 
That is, they can simulate a random walk on the simple Markov chain by checking if the walk has transitioned to the dummy state along the way, and rewinding back to the previous step if it has.
As a result, adaptive strategies retain the $O(n^2 d)$ query complexity even with the dummy state added.

The non-adaptive strategies cannot do the same,
i.e.\ they cannot check whether a random walk has transitioned to the dummy state on the fly (they can only infer it after all the queries are made).
Intuitively, to make up for this, they have to take many more random walks from the initial state, so as to guarantee that plenty of them reach the sink without moving through the dummy state. Therefore, the non-adaptive query complexity suffers when the dummy state is added and becomes (approximately) $\Omega(d^{n})$. See \cref{sec:gap} for the proof.

\subsection{Technical Overview} \label{sec:techniques}

Recall that our main goal is to bring a systematic understanding of distinguishing two candidate initial states in a known Markov chain. 
In this section, we provide a brief overview of our methods. 
There are two main aspects associated with our task: (1) computing rewinding strategies to distinguish between two initial states, and (2) proving lower bounds for the query complexity of this task. Together the two aspects allow us to establish the efficiency of our algorithm, i.e.\ that its run time is competitive with the optimal query complexity.
Our approach to both aspects heavily utilizes partitions of the state set $\Omega$.

Let $\P$ be a partition of the states,
and for a state $a \in \Omega$ let $\P(a)$ denote the ``class'' of $a$ in $\P$, i.e., the set within the partition $\P$ that contains $a$. 
As a subroutine, we study the problem of identifying the class of any given (hidden) state in $\P$, i.e., the task of computing $\P(X_0)$ for a given chain $M$ and partition $\P$ (note that this is a generalization of the A/B test in \cref{sec:example-stoc23}). 
By choosing $\P$ carefully, this allows us to distinguish between two states $a, b \in \Omega$. Specifically,  it suffices to identify the class of the initial state for some partition $\P$ that separates $a$ and $b$, i.e., where $\P(a) \ne \P(b)$.

Intuitively, identifying the class within a given partition becomes more difficult as the partition becomes more refined.
As an extreme case, the test for the trivial partition $\P_0 = \{ \Omega \setminus \{\sink\}, \{\sink\}\}$ can be performed 
without any further queries
in a canonical Markov chain since the partial observation at any state determines its class in $\P_0$.  
Our key tool in the design of algorithms is to assume that for some partition $\P_1$ we know how to identify the states, and then to use this algorithm to build a more refined partition $\P_2$ where we can identify states. Starting this with the trivial partition, we can 
use this method to work our way up to more and more refined partitions, till we reach a partition that separates $a$ and $b$ at which stage our problem would be solved. 
The exact details of this refinement step (and how we estimate their query complexity)
are described below.

Recall that the goal of  \cref{thm:NA-ub} is to 
develop a non-adaptive algorithm that distinguishes between two states $a$ and $b$ in time $O_n(1) \cdot \AQC_M(a, b)^{O(n)}$, where $n$ is the number of states and $O_n(1)$ suppresses terms dependent only on $n$.
To do so, we define a weighted directed graph on the partitions: Namely, every partition $\P$ of $\Omega$ is a vertex of this graph and there is a weighted edge between
every pair of partitions $\P_1 \to \P_2$ where $\P_2$ refines $\P_1$. 
The weights $w(\P_1,\P_2)$ are formally defined shortly.
Intuitively, the main idea is to get a quantity that roughly allows us to determine the class of the initial state in $\P_2$ using approximately $w(\P_1,\P_2)$ calls to the class-identifier for $\P_1$. 
Note that if there is a path $\P_1\to\P_2 \to \P_3$ in this graph, then this amounts to saying that classes of $\P_3$ can be identified using roughly $w(\P_1,\P_2)\cdot w(\P_2,\P_3)$ calls to the class identifier for $\P_1$. Thus, given this graph, we can easily devise a non-adaptive strategy for distinguishing between $a$ and $b$. 
 
 For every partition $\P$ we compute the minimum distance from the trivial partition $\P_0$ (using this multiplicative measure of length of a path), and then use the partition $\P$ that separates $a$ from $b$ and has the smallest distance among all such $\P$. 
In particular, if we let $\P_0, \P_1, \ldots, \P_k = \P$ be the shortest path to $\P$, the test for each $\P_i$ can be recursively performed by calling the tests $\P_{i-1}$.
This results in a test for $\P_k$ (which recall differentiates between $a$ and $b$) that uses $c(\P_k) = \prod_{1 \leq i \leq k} w(\P_{i-1}, \P_i)$ tests.
For \cref{thm:NA-ub}, we also need to prove $c(\P_k) \leq\AQC_M(a, b)^{O(n)}$ which we discuss shortly, but first we describe how we compute $w(\P_1,\P_2)$. 

To understand how a test for $\P_1$ can be used to identify the class in $\P_2$ of a given state,
consider, as a warm-up, the original problem of distinguishing between $a$ and $b$.
Equipped with the test for $\P_1$, one can differentiate between $a$ and $b$ as follows:
Given a state $x \in \{a, b\}$, we take many samples from the next state (rewinding back to $x$ after each sample) and examine the distribution of their classes in $\P_1$.
If the distribution of these classes for $a$ and $b$ are $\delta$-apart in total variation distance,
then $x$ can be identified successfully with probability $1 - \epsilon$ using $O(\log(1/\epsilon)/\delta^2)$ samples.
A similar approach can be taken to distinguish classes of $\P_2$ (instead of two states $a$ and $b$).
This is again captured by the total variation distance of certain distributions which gives us the weight $w(\P_1, \P_2)$ in the graph. Specifically, we can take 
$$w(\P_1,\P_2) = \max_{\substack{c, d: \\ \P_1(c) = \P_1(d) \\ \P_2(c) \neq \P_2(d)}} \frac{1}{\tvd^\P(c, d)^2},$$ 
where for $x \in \{c,d\}$, and $\tvd^\P(c, d)$ denotes the total variation distance between next states from $c$ and $d$ when projected on $\P$. 
That is, letting $X^c$ and $X^d$ be the next state of the Markov chain conditioned on the last state being $c$ and $d$ respectively, $\tvd^P(c, d) := \tvd(\P(X^c), \P(X^d))$. 
Thus, this expression captures the hardest to separate pair of states $c$ and $d$ from different classes of $\P_2$ (and not already separated in $\P_1$) after taking one step on the Markov chain starting at these states, using only a class identifier for $\P_1$.

Turning to our lower bound, recall that we wish to express the shortest path length in terms of the query complexity, i.e.\ to prove that $\prod_{1 \leq i \leq k} w(\P_{i-1}, \P_i)$ is at most  ${O_n(1) \cdot (\AQC_M(a,b))^{O(\card{\Omega})}} $, where $\P_0\to\P_1 \cdots \to \P_k$ is the shortest path found by our algorithm. 
The argument has two main components, given a partition $\P$:
(1) if two states $c$ and $d$ with $\P(c) = \P(d)$ are far in terms of $\tvd^\P$, then $\P$ has a small outgoing edge separating $c$ and $d$, and
(2) if every two states within the same class of $\P$ are close in terms of $\tvd^\P$, then it is difficult to distinguish any two states in the same class of $\P$.
Note, in particular, that if such a $\P$ does not separate $a$ and $b$, then it would serve as a \enquote{certificate} for the difficulty of distinguishing between them.

We begin by explaining how these components can be applied to obtain upper bounds on the shortest path (equivalently, lower bounds on the query complexity).
Consider any partition $\P$, where $a$ and $b$ are not separated.
Given that $a$ and $b$ are distinguishable with $\QC_M(a, b)$ queries, we can employ (2) to show there are two states $c$ and $d$ with $\P(c) = \P(d)$ such that $\tvd^\P(c, d) = \Omega(1/\QC_M(a,b))$. 
Then, using (1), we can infer that $\P$ has an outgoing edge of weight at most $O_n(1) / \tvd^\P(c, d)^2 = O_n(1) \cdot \QC_M(a, b)^2$.
As a result, we can start at $\P_0$ and iteratively take the smallest outgoing edge, which has weight at most $O_n(1) \cdot \QC_M(a, b)^2$, until we reach a partition that separates $a$ and $b$.
The path can take a length of at most $n-1$ since each partition is refined by the next.
This gives us a path of length at most $O_n(1) \cdot \QC_M(a, b)^{2n}$ from $\P_0$ to a partition that separates $a$ and $b$.

Now we discuss each component separately.
First, consider a partition $\P$ with that two states $c$ and $d$ in the same class of $\P$ that have $\tvd^\P(c, d) = D$.
Let $G$ be an auxiliary graph where the vertices are the Markov chain states, and two states are connected if they are closer than $D / n$ in $\tvd^\P$.
We construct a refinement $\P'$ of $\P$ by letting two states be in the same class of $\P'$ if they are in the same class of $\P$ and the same connected component of $G$.
Observe that $\P'$ separates $c$ and $d$ since $\tvd^\P$ forms a metric, and $c$ and $d$ are more are $D$ apart, whereas any two directly connected states are at most $D / n$ apart.
Furthermore, the only states that are newly separated by $\P$ are at least $D/n$ apart.
Therefore, $\P$ has an outgoing edge of weight at most $(n / D)^2$ to $\P'$.

For the second component, take a partition $\P$ where any two states in the same class of $\P$ are at most $\theta$ apart in $\tvd^\P$. We show that for any two states $a$ and $b$ with $\P(a) = \P(b)$, it holds $\QC_M(a, b) = \Omega(1/\theta)$.
Given two possible initial states $a$ and $b$ to distinguish,
we consider two instances of running the rewinding strategy: one with $a$, and one with $b$.
Let $T_a$ and $T_b$ be the query trees that the rewinding strategy creates in each case.
We couple $T_a$ and $T_b$ such that they look the same to the strategy with a large probability.
Here, by looking the same, we mean that the states generated in $T_a$ and $T_b$ yield the same observations, i.e.\ they are both sink or non-sink.
In addition, the coupling is such that $T_a$ and $T_b$ are isomorphic w.r.t.\ $\P$, meaning the corresponding nodes in $T_a$ and $T_b$ have states that are in the same class of $\P$, with a large probability.

We construct our couplings in an inductive manner.
In every step, a state $u_a \in T_a$ and the corresponding state $u_b \in T_b$ are chosen by the rewinding strategy to derive the next step.
At this point, we need a method to couple the next step drawn from $u_a$ and $u_b$ such that they are both sink or non-sink with a large probability.
Such a coupling is not possible for arbitrary states $u_a$ and $u_b$ (e.g.\ $\P_0(u_a) = \P_0(u_b)$ does not imply the same is true for their next step with a large probability).
To do so, we employ the additional property that $\P(a) = \P(b)$, and show that the drawn states can be coupled such that they are in the same class of $\P$ with probability $1 - \theta$.
Intuitively, if a rewinding strategy distinguishes between $a$ and $b$, it should make at least $\Omega(1 / \theta)$, so that the above scheme \enquote{fails} and the drawn states are in different classes of $\P$.

We remark that something as strong as the shortest path algorithm is not required for \cref{thm:NA-ub}. While it is intuitive to use the shortest path as it corresponds to \enquote{the cheapest way} for testing $\P$ in our graph, something as simple as Prim's algorithm also works. That is, to prove \cref{thm:NA-ub}, we essentially show that there exists a partition $\P$ separating $a$ and $b$ which is reachable from $\P_0$ through edges of weight at most $O_n(1) \cdot \QC_M(a,b)^2$. Therefore, Prim's algorithm can find such a $\P$, by starting from $\P_0$ and in each step, adding the minimum outgoing edge to the tree. Also, note that while the shortest path algorithm may use fewer queries in many cases, in the worst case, it uses (asymptotically) the same number of queries as Prim's algorithm (\cref{thm:NA-A-gap}).
\section{The Formal Model} \label{sec:model}

We use $\Z^{\geq 0}$ to denote the set of non-negative integers. We use $\tvd(\cdot,\cdot)$ to denote the total variation distance between two distributions. 

We start by reviewing the standard notions of (partially observable) Markov chains. 
A sequence of random variable $X_0,X_1,X_2,\ldots$ with $X_t \in \Omega$ for all $t$ is a {\em Markov chain} if there exists a conditional probability distribution $P_{X|Y}$  such that for every $t\in \Z^{\geq 0}$ we have $X_{t+1}$ is distributed according to $P_{X|Y=X_t}$.\footnote{We note that such Markov chains are called {\em time-invariant} since transition probabilities, conditioned on the current state, are independent of time.} A sequence $Z_0,Z_1,\ldots$ with $Z_t \in \Sigma$ is a {\em partially observable (p.o.) Markov chain} if there exists a Markov chain $X_0,X_1,\ldots$ and a function $O:\Omega \to \Sigma$ such that $Z_t = O(X_t)$ for all $t$. (While a general theory would allow probabilistic functions $O$, we restrict our attention to deterministic functions in this work.) We refer to $\Omega$ as the state space and $\Sigma$ as the observable space. 
Note that a partially observable chain $M$ is given by a triple $(\Omega, P = P_{X|Y},O)$. 

Next, we turn to the central objects of study in this paper, namely Markov chains with rewinding. Before doing so, we briefly touch upon two views of a rewinding strategy. The formal view is simply as a function that maps a history of observations to a non-negative number (which we view as the number of steps we rewind the current time by). An alternate view is that the rewinding strategy builds a tree where, at each time step, the algorithm picks a node of the current tree, and the next time step produces a new child for this node whose state is independent of that of all other nodes, conditioned on the parent's state. The following definition uses the formal view, but we often switch to the tree view in our analyses.

\begin{definition}[\textbf{Markov Chains with Rewinding}]
\label{def:rewinding-strategy}
\label{def:MCR}
For a partially observable Markov chain $M = (\Omega, P,O)$ with observable space
$\Sigma$, a rewinding strategy is a function $A:\Sigma^* \to \Z^{\geq 0}$. A Markov chain $M=(\Omega,P,O)$ 
in initial state $X_0$ with rewinding strategy $A$ generates a sequence of random variables $Z_0,Z_1,Z_2,\ldots$, with companion variables $X_1,X_2,\ldots$ as follows: For every $t \geq 0$, we let $Z_t = O(X_t)$ and $X_{t+1} \sim P_{X|Y = X_{t'}}$ 
where $t' = \max\{0,t-A(Z_0,\ldots,Z_t)\}$. We say that a sequence of random variables $Z_0,Z_1,\ldots,$ is a Markov chain with rewinding if there exists a chain $M$, an initial state $X_0$, and a rewinding strategy $A$ that generates this sequence.
\end{definition} 

Markov chains with rewinding strictly generalize the class of Markov chains, which correspond to {\em passive observation}, i.e., the rewinding strategy $A(\cdot)$ is the constant function $0$, or equivalently, the rewinding tree is a path (and hence ``chain''). Another class of rewinding strategies, widely used in algorithm design, picks some threshold $\tau \in \Z^{\geq 0}$ and sets $A(Z_0,\ldots,Z_t) = t$ if $t = 0 \pmod\tau$ and $A(Z_0,\ldots,Z_t)=0$ otherwise. Equivalently, the rewinding tree consists of paths of length $\tau$ intersecting at the root. Such strategies may be termed {\em resetting strategies}. 

An important subclass of strategies which we study in this work are {\em non-adaptive} rewinding strategies, where $A(Z_0,\ldots,Z_t)= A'(t)$, i.e., the rewinding strategy is a fixed function of the length and does not depend on the actual observations. Note that resettable strategies are special cases of non-adaptive strategies. A simple example of a non-adaptive rewinding strategy inspired by the ``go-with-the-winners'' algorithm \cite{AldousVazirani} is the following: 
$A(Z_0,\ldots,Z_t) = 0$ if $t \not= 0 \pmod\tau$ and $\argmax_\ell\{Z_{t-\ell}\}$ otherwise. That is, the chain-rewinder seeks to maximize the observed variable, and once in every $\tau$ steps, it rewinds to a previous state with the highest observable.

The main property of Markov chains we explore is the {\em identifiability} of the initial state. Specifically, given a p.o.\ Markov chain $M$ and two candidate initial states $a,b \in \Omega$, we ask how long a rewinding strategy has to run (or how many ``queries'' it must make to $M$) to distinguish between these two starting states. The following definition formalizes this notion.
It only considers identifiability from an information-theoretical point of view, i.e.\ the query complexity:

\begin{definition}[\textbf{Query Complexity}] \label{def:QC} \label{def:NAQC}
    Given a p.o.\ Markov chain $M$ and rewinding strategy $A$, let $Z^a_0,Z^a_1,Z^a_2,\ldots$ denote the Markov chain with rewinding obtained from the initial state $X_0 = a$ and let $Z^b_0,Z^b_1,Z^b_2,\ldots$ correspond to the case $X_0 = b$. We say that $(M,A,a,b)$ are $(\epsilon,T)$-identifiable if 
    $\tvd((Z^a_0,\ldots,Z^a_T),(Z^b_0,\ldots,Z^b_T))\geq \epsilon$. The {\em query complexity} of distinguishing $a$ from $b$ in chain $M$ with rewinding strategy $A$, denoted $\QC^A_M(a,b)$, is the minimum $T$ such that $(M,A,a,b)$ are $(1/3,T)$-identifiable. Having this, we define:
    \begin{itemize}[topsep=0pt,itemsep=-10pt]
    \item the (adaptive) query complexity of distinguishing $a$ from $b$ to be: $$\AQC_M(a, b) = \min_A \QC^A_M(a, b),$$
    \item and define the analogous notion for non-adaptive rewinding strategies. Namely,
     \begin{flalign*}
     \NAQC_M(a, b) &= \min_{\text{non-adaptive } A} \QC^A_M(a, b).
     \end{flalign*}
\end{itemize}
\end{definition}

We also consider identifiability from a computational standpoint. That is, we study the time complexity for algorithms that identify the initial state.
Such an algorithm produces the rewinding strategy and queries the Markov chain accordingly. Then, it identifies the initial state based on all the observations.
We call such algorithms \emph{state identification algorithms}.

\begin{definition}[\textbf{State Identification Algorithms}]
\label{def:state-identification-algorithm}
The input of a state identification algorithm consists of a p.o.\ Markov chain $M = (\Omega, P, O)$, two candidate initial states $a, b \in \Omega$, and a hidden initial state $X_0 \in \{a, b\}$.
An (adaptive) state identification algorithm in every step $t \geq 0$, computes the rewinding strategy $A(Z_0, \ldots, Z_t)$, after which $X_{t+1}$ is sampled according to $P_{X|Y=X_{t'}}$ with $t' = \max\{0, t - A(Z_0, \ldots, Z_t)\}$, and the observation $Z_{t+1} = O(X_{t+1})$ is revealed to the algorithm.
A non-adaptive algorithm computes the rewinding strategy before sampling any states.
The algorithm decides the number of steps $T$, and eventually outputs the identified initial state ($a$ or $b$).
We say the algorithm successfully identifies the initial state if the output is correct with probability $2/3$.
\end{definition}

Finally, we define a special subclass that we refer to as {\em canonical} p.o.\ Markov chains.
While general p.o.\ Markov chains are of interest, we state our results in terms of canonical p.o.\ Markov chains for simplicity.  
We show later, in~\cref{thm:canonical}, that canonical chains capture all chains in our context. 
\begin{definition}[\textbf{Canonical p.o.\ Markov Chains}]\label{def:canonical}
We say that p.o.\ Markov chain $M = (\Omega, P,O)$ is a {\em canonical} p.o.\ Markov chain if there exists a state $\sink \in \Omega$ that is a sink state (i.e., $P_{X|Y=\sink} = 1$ if $X = \sink$ and zero otherwise), and $O(a) = 1$ if $a=\sink$ and $0$ otherwise.    
\end{definition} 
Thus, in a canonical p.o.\ Markov chain, the only observable signal is whether the chain has reached the sink or not; and once one reaches the sink, no further observations reveal information. Such chains are the hardest to learn and hence our focus.

\subsection{Partitions}\label{sec:partitions}
In the rest of this section, we introduce the notion of state partitions, i.e.\ partitions of $\Omega$, a key tool both for designing rewinding strategies to distinguish between states, and proving lower bounds on the query complexity of this task. 
Throughout the section we use $n$ as a shorthand for $\card{\Omega}$, $p(a, b)$ for $P_{X|Y=b}(a)$,
and $p(a, C)$ for $\sum_{b \in C} p(a, b)$.
We use $\sink$  to denote the sink state.

\begin{definition}[\textbf{Partition}]  
Given a set $\Omega$, a partition $\P$ is a collection of sets such that each element $a \in \Omega$ appears in exactly one set of the partition.
We often refer to these sets as classes, and use $\P(a)$ to denote the class of $a$ in $\P$. 
\end{definition}

\begin{definition}[\textbf{Total Variation Distance w.r.t.\ Partitions}]
    For two states $a$ and $b$, and a partition $\P$, the total variation distance of $P_{X|Y=a}$ and $P_{X|Y=b}$ w.r.t.\ to $\P$ is
    $$
    d^{\p}_{TV}(a,b):= \frac{1}{2} \sum_{C \in \p} |p(a,C) - p(b,C)| 
    = \max_{A\subseteq \p} \sum_{C \in A} p(b,C) - p(a,C).
    $$ 
    Here, $A$ is a collection of the classes in $\P$. 
\end{definition}

\begin{lemma} \label{lem:distinguish-pair-using-oracle}
    Let $\P$ be a partition of $\Omega$.
    Given two states $a$ and $b$, and oracle access to $\P(x)$ for any state $x \in \Omega$,
    there exists a non-adaptive rewinding strategy that successfully distinguishes between initial states $a$ and $b$ with probability $1 - \epsilon$ using 
    $$
    O\left(\frac{\log 1/\epsilon}{\left(\tvd^\P(a,b)\right)^2} \right)
    $$
    queries.
\end{lemma}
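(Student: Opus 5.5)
The strategy is the natural one: rewind to the root repeatedly to draw $m$ independent one-step samples $Y_1,\dots,Y_m$ from the initial state $X_0\in\{a,b\}$, then classify $X_0$ from the classes $\P(Y_1),\dots,\P(Y_m)$. Concretely, the strategy queries $X_1$ from $X_0$, rewinds to time $0$, queries again, and repeats $m$ times. The rewinding function $A$ therefore depends only on $t$, so the strategy is non-adaptive. Each $Y_i$ is an independent draw from $P_{X\mid Y=X_0}$. With the oracle, we observe $\P(Y_i)$, which are i.i.d.\ samples from the distribution $\mu_{X_0}$ on classes, where $\mu_x(C)=p(x,C)$. By the definition of $\tvd^\P$, we have $\tvd(\mu_a,\mu_b)=\tvd^\P(a,b)=:\delta$.

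For the decision rule, take a maximizing collection $A^\ast\subseteq\P$ from the second form of the definition, so that
\[
\sum_{C\in A^\ast}\bigl(p(b,C)-p(a,C)\bigr)=\delta .
\]
Let $q_x=\sum_{C\in A^\ast}p(x,C)$, so $q_b-q_a=\delta$. The statistic is the empirical fraction $\hat q=\frac1m\sum_i \mathbf 1[\P(Y_i)\in A^\ast]$. Output $b$ if $\hat q\ge (q_a+q_b)/2$ and $a$ otherwise. The quantities $q_a,q_b,A^\ast$ depend only on the known chain, so they can be fixed in advance.

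For the analysis, under $X_0=x$ the indicators are i.i.d.\ Bernoulli$(q_x)$. Hoeffding's inequality gives
\[
\Pr\bigl[|\hat q-q_x|\ge \delta/2\bigr]\le 2e^{-m\delta^2/2}.
\]
Hence choosing $m=\lceil 2\ln(2/\epsilon)/\delta^2\rceil=O(\log(1/\epsilon)/\delta^2)$ makes the error probability at most $\epsilon$ in both cases. The number of queries is exactly $m$, and the oracle calls are not counted as queries.

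No step here is a real obstacle. The only points to check are that $\tvd^\P$ equals the total variation distance between the projected one-step distributions, which is immediate from the definition, and that rewinding to the root makes the samples independent, which follows from the tree view in \cref{def:MCR}. If $\delta=0$ the bound is vacuous, so we may assume $\delta>0$.
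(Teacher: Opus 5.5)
Your proposal is correct and matches the paper's proof essentially step for step. Both take repeated one-step samples from the root, count the empirical frequency of the maximizing collection of classes, threshold at $p(a,A)+\delta/2$, and apply Hoeffding's inequality. The only cosmetic difference is that you use a two-sided bound with $m=\lceil 2\ln(2/\epsilon)/\delta^2\rceil$, while the paper applies a one-sided bound in each case; this does not change the $O(\log(1/\epsilon)/\delta^2)$ query count.
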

\begin{proof}
    Let $d  = \tvd^\P(a,b)$, and let $A \subseteq \p$ be the collection of classes where the difference in the distributions is maximized:
    $$
    A := \argmax_{A \subseteq \p} \sum_{C \in A} p(b, C) - p(a, C).
    $$
    Note that by definition $p(b, A) - p(a, A) = d$.
    Given a state $x \in \{a, b\}$, we draw $\frac{2 \log 1/\epsilon}{d^2}$ samples of the next step of $x$ (according to distribution $P_{X|Y=x}$), and let $X$ be the fraction of them that land in $A$.
    If the state is $a$, the expected value of $X$ is $p(a, A)$; if the state is $b$, the expected value is $p(b, A)$; and the difference between the two is equal to $d$.
    We declare that the state is $a$ if $X < p(a, A) + \frac{d}{2}$ and that it is $b$ otherwise.
    The probability of error when $x = a$ is equal to:
    $$
    \Pr\left(X \geq p(a, A) + \frac{d}{2}\right) \leq \exp\left(-2 \cdot \frac{2 \log 1/\epsilon}{d^2} \cdot \left(\frac{d}{2}\right)^2 \right) \leq \epsilon.
    $$
    Similarly, the probability of error when $x = b$ is equal to:
    \begin{equation*}
    \Pr\left(X < p(a, A) + \frac{d}{2}\right)
    = \Pr\left(X < p(b, A) - \frac{d}{2}\right)
    \leq \epsilon. \qedhere
    \end{equation*}
\end{proof}


\section{A Polynomially Optimal Non-Adaptive Algorithm}\label{sec:prim}

\subsection{A Non-Adaptive Algorithm with Polynomial Queries}

In this section, we present a non-adaptive algorithm that distinguishes any two (distinguishable) states $a$ and $b$ of an $n$-state canonical Markov chain $M=(\Omega, P)$ (see \cref{def:canonical}) using  ${O_n(1) \cdot \AQC_M(a, b)^{O(n)}}$ queries, where recall $\AQC_M(a, b)$ is the optimal query complexity of distinguishing $a$ from $b$ with a possibly adaptive rewinding strategy, and $O_n(1)$ is used to suppress terms dependent only on $n$. In particular, this implies that the gap between adaptive and non-adaptive query strategies is only polynomial so long as the number of states of the Markov chain is constant. We later show in \cref{sec:gap} that this is the best one can hope for: there are choices of $M, a, b$ where any non-adaptive rewinding strategy requires $\AQC_M(a, b)^{\Omega(n)}$ queries.

Formally, we show:
\begin{theorem}\label{thm:opt-to-the-t}
    Let $M=(\Omega, P)$ be a canonical p.o.\ Markov chain with $n = \card{\Omega}$ states. There is a non-adaptive algorithm (\Cref{alg:opt-to-the-t}) that distinguishes any distinguishable states $a, b \in \Omega$  in time 
    $$O_n(1) \cdot (\AQC_M(a, b)\log \AQC_M(a, b))^{2(n-2)} = O_n(1) \cdot \AQC_M(a, b)^{O(n)}.$$
\end{theorem}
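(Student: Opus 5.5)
The plan is to follow the refinement framework of \cref{sec:techniques} and split the argument into an algorithmic half and a lower-bound half. \textbf{Algorithmic half.} Call a partition $\P$ \emph{testable with cost $c$} if there is a non-adaptive procedure that, given any node of the query tree whose hidden state is $x$, outputs $\P(x)$ with error at most $\epsilon$ using $c\cdot\mathrm{polylog}$ queries. The trivial partition $\P_0=\{\Omega\setminus\{\sink\},\{\sink\}\}$ is testable with cost $0$, since the observation $O(x)$ reveals the class directly.

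For the refinement step $\P_1\to\P_2$, take a node with hidden state $x$ and draw $m=O(w(\P_1,\P_2)\log(n/\epsilon))$ children of it. Run the $\P_1$-test on each child to obtain the empirical distribution of $\P_1(X^x)$. We already know the class $\P_1(x)$ by recursion, since $\P_2$ refines $\P_1$. Within that class, we then decide the $\P_2$-class by a tournament: for each pair $c,d$ in the same $\P_1$-class but different $\P_2$-classes, apply the threshold rule of \cref{lem:distinguish-pair-using-oracle}. The key observation is that the empirical frequency of the set $A$ maximizing the TV distance is concentrated to within $\tvd^{\P_1}(c,d)/2$. Hence the true class $\P_2(x)$ wins every comparison against classes it is separated from, and a union bound over $n^2$ pairs handles the rest. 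To make this work I will choose the weight as in the overview, $w(\P_1,\P_2)=\max 1/\tvd^{\P_1}(c,d)^2$. The errors of the $\P_1$-tests on the children are driven down to $\epsilon'$ with $\epsilon'\ll 1/m$, which costs the extra logarithmic factors (hence the $\log\AQC$ in the exponent). Composing along a chain $\P_0\to\cdots\to\P_k$ gives cost $\prod_i O(w(\P_{i-1},\P_i)\log(\cdot))$. Everything here is non-adaptive: the tree shape is fixed as a product of branching factors, and only the post-processing depends on observations.

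\textbf{Lower-bound half.} I need two claims. The first is the refinement claim: if $c,d$ lie in the same class of $\P$ with $\tvd^\P(c,d)\ge D$, then build the graph on states joining pairs at $\tvd^\P$-distance below $D/n$, and intersect the classes of $\P$ with its connected components. This yields a strict refinement $\P'$ separating $c,d$, because the triangle inequality along a path of fewer than $n$ edges gives distance less than $D$. It also satisfies $w(\P,\P')\le n^2/D^2$. The second is the coupling claim: if every same-class pair has $\tvd^\P\le\theta$, then any adaptive strategy distinguishing $a,b$ with $\P(a)=\P(b)$ needs $\Omega(1/\theta)$ queries. I will prove this by inductively coupling the two query trees $T_a,T_b$. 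At each query, the chosen nodes are in the same $\P$-class, so by the optimal coupling of the projected next-step distributions the children land in the same $\P$-class except with probability $\le\theta$. Since $\P$ refines $\P_0$ (every partition reachable does; I will restrict to partitions keeping $\sink$ as a singleton, which holds for all refinements of $\P_0$), identical classes give identical observations. After $T$ queries the TV distance between the transcripts is at most $T\theta$, so $T\ge 1/(3\theta)$.

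\textbf{Combining.} Combining the two claims: starting at $\P_0$, as long as the current $\P$ does not separate $a,b$, the coupling claim with $T=\AQC_M(a,b)$ forces some same-class pair at distance $\theta\ge 1/(3\AQC_M(a,b))$. The refinement claim then produces an edge of weight $O(n^2\AQC_M(a,b)^2)$. The sink is already a singleton and each step strictly refines, so at most $n-2$ steps reach a separating partition, giving total cost $O_n(1)\cdot(\AQC\log\AQC)^{2(n-2)}$. Algorithmically, a Prim-style search over the finitely many partitions, computing weights exactly from $P$, finds such a chain without knowing $\AQC$. The search runs in time $O_n(1)$, and the final separating class is distinguished via the $\P_k$-test. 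I expect the main obstacle to be the error bookkeeping in the recursive non-adaptive test: the children's test errors compound multiplicatively with depth. I will handle this by setting the target error at level $i$ to $\epsilon_i=\epsilon/\prod_{j>i}m_j$ and checking that this inflates each factor only by a logarithm of $\AQC$.
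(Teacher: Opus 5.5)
Your proposal is correct and takes essentially the same route as the paper. It uses the same three ingredients: the recursive non-adaptive class test along a chain of refinements (\cref{lem:using-a-path}), the connected-component refinement (\cref{lem:separating-pairs}), and the inductive coupling of the two query trees (\cref{clm:almost-coupling}). These combine, exactly as in \cref{lem:short-path-exists}, to show that a partition separating $a$ and $b$ is reachable from $\P_0$ in at most $n-2$ edges, each of weight $O(n^2\AQC_M(a,b)^2)$. Your deviations are cosmetic and give the same bound: a greedy or Prim search instead of shortest paths (the paper itself remarks this suffices), reusing one batch of children across all pairwise tests, and per-level error budgets in place of a single global union bound.
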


This is equivalent to \cref{thm:NA-ub}.

\subsection{An Informal Overview of the Algorithm}

The algorithm defines a weighted directed graph $G_M=(V, E)$ where each vertex of the graph corresponds to a partitioning of $\Omega$, the state space of the Markov chain. There is an edge from $\mc{P} \in V$ to $\mc{P}' \in V$ iff $\mc{P}'$ is a refinement of $\mc{P}$. Informally speaking, it would be helpful to view each partitioning $\mc{P}$ as a {\em test} that reveals which class of $\mc{P}$ a hidden state belongs to. With this view, we define the weight $w(\mc{P}, \mc{P'})$ of the directed edge $(\mc{P}, \mc{P'}) \in E$ such that it upper bounds the cost of solving test $\mc{P'}$ having free access to test $\mc{P}$. Once we define this graph $G_M$, we compute the shortest path tree on $G_M$ (e.g.\ using Dijkstra's algorithm) starting from the trivial partitioning $\mc{P}_0 = \{ \Omega \setminus \{s\}, \{s\}\}$ that only separates the sink from the other states. Let $\mc{P}_0, \mc{P}_1, \ldots, \mc{P}_k$ be the unique path from $\mc{P}_0$ to a partitioning $\mc{P}_k$. We will define the cost of $\mc{P}_k$ as
$$
    c(\mc{P}_k) = \prod_{i=0}^{k-1} w(\mc{P}_i, \mc{P}_{i+1}).
$$
We prove that each test $\mc{P}$ can indeed be solved with (slightly more than) $c(\mc{P})$ queries. To distinguish state $a$ from $b$, the algorithm takes the partitioning $\mc{P}$ that separates $a$ from $b$ and has the minimum cost $c(\mc{P})$. Our analysis shows that $c(\mc{P})$ can be upper bounded by $O_n(1) \cdot \AQC_M(a, b)^{O(n)}$.

\subsection{The Formal Argument}

\begin{definition}[Source Partition]
The \emph{source partition} $\P_0$ is the trivial partition where the sink is a singleton class and all the remaining states form another class, i.e.\ $\P_0 =\{\Omega \setminus \{\sink\}, \{ \sink\} \}$. 
\end{definition}

For the purposes of this section, we only consider the partitions that separate $s$ from every other state, i.e.\ partitions that refine $\P_0$.

\begin{definition}[Partition Graph] \label{def:partition-graph}
Given a canonical Markov chain $M = (\Omega, P)$, let the partition graph $G_M$ be a weighted directed graph where each node is a partition of the states $\Omega$.
For a partition $\P_1$ and a refinement $\P_2$, there is a directed edge from $\P_1$ to $\P_2$ with weight
$$
w(\P_1, \P_2) = \max_{\substack{a, b: \\ \P_1(a) = \P_1(b) \\ \P_2(a) \neq \P_2(b)}} 1/\left(\tvd^{\P_1}(a, b)\right)^2.
$$
The lengths of the paths are defined multiplicatively. That is, the length of a path is the product of the weights of its edges. For a partition $\p$, its cost $c(\p)$ is equal to the length of the shortest path from $\p_0$ to $\p$.
\end{definition}

Below is a formal overview of the algorithm. The details of using the shortest paths to distinguish between states are given in \cref{lem:using-a-path}.
We use the notion of trees for rewinding. That is, if $X_t$ is sampled by rewinding to $X_{t'}$ (i.e.\ according to distribution $P_{X|Y=X_{t'}}$), then $X_{t'}$ is the parent of $X_t$ in the tree.

\begin{algorithm}[H]
    \caption{A non-adaptive algorithm with polynomial queries.}
    \label{alg:opt-to-the-t}
    \textbf{Input:} A Markov chain $M = (\Omega, P)$, and two states $a$ and $b$

    Let $G_M$ be the partition graph of the Markov chain as in \cref{def:partition-graph}.

    Compute the (multiplicative) shortest path tree on $G_M$ rooted at the trivial partition $\p_0$.

    Let $\p$ be a partition separating $a$ and $b$ that minimizes $c(\p)$. \label{step:find-best-partition}

    Let $\p_0, \p_1, \ldots, \p_k=\p$ be the shortest path to $\p$.

    Let $\epsilon = \Theta_n\left(\frac{1}{c(\p) \log c(\p)}\right)$. 

    Query a tree of height $k$, where the vertices at height $i$ have degree $\Theta_n(\log(1 / \epsilon) \cdot w(\p_{i-1}, \p_i))$. \label{step:begin-using-the-path}

    Use the observations in the tree to identify the class of each vertex at height $i$ in $\p_i$ recursively (see \cref{lem:using-a-path} for the details). \label{step:end-using-the-path}
\end{algorithm}

To analyze the query complexity of \Cref{alg:opt-to-the-t}, we show there exists a path of length \linebreak $O_n(1)\cdot \AQC(a, b)^{2(n-2)}$ to a partition that separates $a$ and $b$ (\cref{lem:short-path-exists}).
Then, we show how this path can be used to distinguish between $a$ and $b$ (\cref{lem:using-a-path}).

First, we prove some auxiliary claims.
The following lemma essentially upper-bounds the weight of the smallest outgoing edge of a partition $\P$, based on two states $a$ and $b$ with $\P(a) = \P(b)$.

\begin{lemma} \label{lem:separating-pairs}
    Take a partition $\P$ and any two states $a, b \in \Omega$ such that $\P(a) = \P(b)$.
    Then, there exists another partition $\P'$ that refines $\P$
    and 
    $$
    w(\P, \P') \leq \left(\frac{n-1}{\tvd^\P(a, b)}\right)^2.
    $$
\end{lemma}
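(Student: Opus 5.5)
Write $D = \tvd^\P(a,b)$. If $D = 0$ the claimed bound is infinite, so any refinement works (for instance $\P' = \P$, or an arbitrary refinement if a strict one is needed). From now on assume $D > 0$. The plan is the threshold-graph construction sketched in the technical overview.

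\textbf{Construction.} Build an auxiliary undirected graph $H$ on $\Omega$. Put an edge between $x$ and $y$ exactly when $\P(x) = \P(y)$ and $\tvd^\P(x,y) < D/(n-1)$. Let $\P'$ be the partition whose classes are the connected components of $H$. Every edge of $H$ stays inside a single class of $\P$, so each component lies inside one class of $\P$. Hence $\P'$ refines $\P$. Since $\P$ refines $\P_0$, so does $\P'$.

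\textbf{$\P'$ is a strict refinement separating $a$ and $b$.} The function $\tvd^\P(x,y)$ is the total variation distance between the pushforwards of $P_{X|Y=x}$ and $P_{X|Y=y}$ onto the classes of $\P$. It is therefore a pseudometric, and in particular satisfies the triangle inequality. Suppose $a$ and $b$ were in the same component. Then there is a simple path from $a$ to $b$ in $H$ with at most $n-1$ edges, each of length $< D/(n-1)$. By the triangle inequality this would give $\tvd^\P(a,b) < D$, a contradiction. So $\P'(a) \neq \P'(b)$ while $\P(a) = \P(b)$. Thus $\P' \neq \P$, and the edge $(\P,\P')$ is present in $G_M$.

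\textbf{Weight bound.} Take any pair $x,y$ with $\P(x) = \P(y)$ and $\P'(x) \neq \P'(y)$. Because $x$ and $y$ lie in different components, they are not adjacent in $H$. Since they share a $\P$-class, the only way to be non-adjacent is $\tvd^\P(x,y) \geq D/(n-1)$. It follows that $1/\tvd^\P(x,y)^2 \leq ((n-1)/D)^2$. Taking the maximum over all such pairs, as in \cref{def:partition-graph}, gives the claimed bound on $w(\P,\P')$. The maximum is over a nonempty set because $(a,b)$ is one such pair.

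\textbf{Where care is needed.} The argument is essentially routine. The one step needing attention is the path-length count: it must be at most $n-1$ edges, which is what produces the factor $n-1$ rather than $n$.
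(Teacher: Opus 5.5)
Your proof is correct and follows the paper's argument: a threshold graph at $\tvd^\P(a,b)/(n-1)$, the triangle inequality along a path of at most $n-1$ edges to separate $a$ and $b$, and non-adjacency of same-class pairs to bound the weight. The only difference is cosmetic. You restrict edges to lie within $\P$-classes, whereas the paper intersects the components of a global threshold graph with $\P$; this can give a finer $\P'$ but the same bound. You also explicitly handle the degenerate case $\tvd^\P(a,b)=0$, which the paper leaves implicit.
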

\begin{proof}
Let $G$ be a graph where each node corresponds to a state.
There is an edge between two states $x$ and $y$ if $$\tvd^\p(x, y) < \frac{\tvd^\p(a, b)}{n-1}.$$
We let two states $x$ and $y$ be in the same class of $\p'$ if and only if they are in the same class of $\p$ and in the same connected component of $G$.

First, observe that $a$ and $b$ are separated by $\p'$ because they are in different connected components of $G$.
Assume for the sake of contradiction that they are connected by a path in $G$.
The length of this path is at most $n-1$, and each edge has a weight strictly smaller than $\frac{\tvd^\p(a, b)}{n-1}$.
This is a contradiction since $\tvd^\p$ is a metric, and the total length of the path cannot be smaller than the distance of its endpoints (i.e.\ $a$ and $b$).
Also, if two states are separated in $\p$, then they are separated in $\p'$.
Therefore $\p'$ is a refinement of $\p$.

Second, for any two states $x$ and $y$ such that $\p(x) = \p(y)$ and $\p'(x) \neq \p'(y)$,
it holds that $\tvd^\p(x, y) \geq \frac{\tvd^\p(a, b)}{n-1}$.
Otherwise, there would have been a direct edge between them, and they would have been in the same connected component of $G$, which contradicts $\p'(x) \neq \p'(y)$.
Therefore, by definition we have 
$$w(\p, \p') = \max_{\substack{x, y: \\ \p(x) = \p(y) \\ \p'(x) \neq \p'(y)}} 1/\left(\tvd^{\p}(x, y)\right)^2 \leq 1/\left(\tvd^\p(a, b)/(n-1)\right)^2.$$ This concludes the proof. 
\end{proof}

\begin{corollary} \label{cor:outgoing-edge}
    For a partition $\p$,
    if $\p$ has no outgoing edge of weight at most $h$ in the partition graph,
    then for any two states $a$ and $b$ in the same class of $\p$,
    it holds that 
    $$
    \tvd^\p(a, b) < \frac{n-1}{\sqrt{h}}.
    $$
\end{corollary}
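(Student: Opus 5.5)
The corollary is the contrapositive of \cref{lem:separating-pairs}, so the plan is to argue by contradiction and let that lemma supply an edge that should not exist.

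Fix a partition $\p$ with no outgoing edge of weight at most $h$. Take two states $a,b$ with $\p(a)=\p(b)$, and suppose for contradiction that
$$\tvd^\p(a,b) \geq \frac{n-1}{\sqrt{h}}.$$
In particular $\tvd^\p(a,b)>0$, so the bound in \cref{lem:separating-pairs} is finite. Applying that lemma to $\p$, $a$, $b$ gives a refinement $\p'$ of $\p$ with
$$w(\p,\p') \le \left(\frac{n-1}{\tvd^\p(a,b)}\right)^2 \le \left(\frac{n-1}{(n-1)/\sqrt h}\right)^2 = h.$$

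The one point to check is that $\p'$ is a genuine outgoing edge of $\p$, i.e.\ a strict refinement. This holds because the construction in the proof of \cref{lem:separating-pairs} separates $a$ and $b$ while $\p(a)=\p(b)$. So $\p'\neq\p$, and $(\p,\p')$ is an edge of $G_M$ with weight at most $h$.

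We also need $\p'$ to lie in the family of partitions considered in this section, namely refinements of $\p_0$. This is automatic, since $\p'$ refines $\p$, which already refines $\p_0$.

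The edge $(\p,\p')$ contradicts the hypothesis, so $\tvd^\p(a,b) < (n-1)/\sqrt{h}$. Nothing here is a real obstacle; the only care needed is the strictness of the refinement and the degenerate case $\tvd^\p(a,b)=0$, which already satisfies the conclusion.
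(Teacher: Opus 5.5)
Your proposal is correct and matches the paper's approach: the paper says only that the corollary follows directly from \cref{lem:separating-pairs}, and your contrapositive argument spells that out. You also add two worthwhile checks the paper leaves implicit. First, $\p'$ is a strict refinement, so the edge $(\p,\p')$ really exists in $G_M$. Second, $\tvd^\p(a,b)>0$ under your contradiction hypothesis, which tacitly uses $n\ge 2$.
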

\begin{proof}
Follows directly from \cref{lem:separating-pairs}.
\end{proof}

The following claim states that if there exists a partition $\P$ such that every two states in the same class are closer than $\theta$ w.r.t.\ $\tvd^\P$, then distinguishing any two states in the same class of $\P$ is difficult (roughly speaking, requires $1/\theta$ queries.

\begin{claim} \label{clm:almost-coupling}
    For a partition $\p$ that does not separate $a$ and $b$, let 
    $$\theta(\p) \coloneq \max_{\substack{x,y: \\ \p(x) = \p(y)}} \tvd^\p(x, y).$$
    Then, any (possibly adaptive) rewinding strategy that uses $Q$ queries,
    can successfully distinguish between $a$ and $b$ with probability at most $\frac{1}{2} + \frac{Q \cdot \theta(\p)}{2}$.
\end{claim}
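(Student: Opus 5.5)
We construct a coupling between the run of the rewinding strategy started from $a$ and the run started from $b$. Under this coupling, the two transcripts $(Z^a_0,\ldots,Z^a_Q)$ and $(Z^b_0,\ldots,Z^b_Q)$ agree except with probability at most $Q\cdot\theta(\P)$. By the coupling characterization of total variation distance, this gives $\tvd\big((Z^a_t)_{t\le Q},(Z^b_t)_{t\le Q}\big)\le Q\theta(\P)$. Any decision rule, adaptive or not, then succeeds with probability at most $\frac12+\frac12\tvd\le \frac12+\frac{Q\theta(\P)}{2}$ under a uniform prior on the initial state.

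We build the coupling step by step, in the tree view of the strategy. We maintain the invariant that, on the event $\mathcal{G}_t$ that no failure has occurred through step $t$, the coupled states satisfy $\P(X^a_{t'})=\P(X^b_{t'})$ for every $t'\le t$. The base case holds because $\P(a)=\P(b)$.

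Since $\P$ refines $\P_0$ (the standing assumption of this section), being in the same class of $\P$ implies giving the same observation (sink or non-sink). Hence on $\mathcal{G}_t$ the observation histories coincide. The strategy $A$ therefore outputs the same rewind amount in both runs, so both runs rewind to the same index $t'$, with $\P(X^a_{t'})=\P(X^b_{t'})$.

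We now draw $X^a_{t+1}\sim P_{X|Y=X^a_{t'}}$ and $X^b_{t+1}\sim P_{X|Y=X^b_{t'}}$ jointly. First we couple the projections $\P(X^a_{t+1})$ and $\P(X^b_{t+1})$ maximally, so they differ with probability exactly $\tvd^\P(X^a_{t'},X^b_{t'})\le\theta(\P)$. The bound $\theta(\P)$ applies because the two parent states lie in the same class of $\P$. Then, conditioned on the classes, we sample the actual states from the appropriate conditional distributions, so that each marginal is exactly correct.

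Because the fresh draw is independent of the past given the parent state, the joint process has the right marginals: run $a$ is distributed exactly as the strategy started from $a$, and likewise for $b$. This Markov property of the rewinding process is what makes the construction valid, and it is the main point to verify carefully. In particular, the coupling at step $t+1$ may depend on the entire past but must have the correct conditional law given the past. Once failure occurs, we continue both runs independently.

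Each step fails with probability at most $\theta(\P)$ conditioned on $\mathcal{G}_t$. A union bound over the $Q$ queries gives $\Pr[\neg\mathcal{G}_Q]\le Q\theta(\P)$. On $\mathcal{G}_Q$ the transcripts are identical, which yields the total variation bound and hence the claim.

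The only subtle step is justifying that the observation equality follows from $\P$-class equality, which requires $\P$ to refine $\P_0$, and that adaptivity is harmless because the strategy sees identical histories on the good event.
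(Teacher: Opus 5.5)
Your proposal is correct and follows essentially the same route as the paper: a step-by-step coupling of the two query trees that keeps corresponding nodes in the same class of $\P$, fails with probability at most $\theta(\P)$ per query, and is closed by a union bound over the $Q$ queries. You also make explicit two points the paper leaves implicit: that $\P$ must refine $\P_0$ so that equal classes yield equal observations, and the maximal-coupling-plus-conditional-resampling construction that preserves the correct marginals.
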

\begin{proof}
Given a rewinding strategy $\mA$ that makes at most $Q$ queries, we consider the query tree of the strategy when the initial state is $a$ and when it is $b$, and present a coupling such that the coupled trees are isomorphic (w.r.t.\ $\p$, meaning the corresponding nodes are in the same class of $\P$) with probability at least $1 - Q \cdot \theta(\p)$.
Then it can be inferred that $\mA$ cannot distinguish between $a$ and $b$ with probability at least $\frac{1}{2}(1 - Q \cdot \theta(\p))$, since the strategy makes the same observation as long as the trees are isomorphic w.r.t.\ $\P_0$. If the strategy is randomized, we fix its random tape and couple the query trees for each possible random tape. That is, we assume without loss of generality that the strategy is deterministic.

Considering two parallel runs of the rewinding strategy where the initial states are $a$ and $b$ respectively, we couple the query trees as follows.
We present this coupling step by step based on what the strategy does. 
In certain cases, we say the coupling has \emph{failed} and couple the rest of the process on each tree independently.
While the coupling has not failed, the two trees are the same w.r.t.\ to $\p$.
That is, they are the same topologically when the states are ignored,
and the states of corresponding nodes in the two trees are in the same class of $\p$.

Let $T_a$ and $T_b$ be the trees obtained in each run, and assume that the coupling has not failed so far. 
Initially, each of the two trees is a single vertex with state $a$ or $b$ respectively.
Observe that $\p(a) = \p(b)$, therefore the trees are indeed the same w.r.t.\ $\p$ in the beginning.
Let $u_a$ be a node of the tree from which the strategy draws a child in the next step on $T_a$.
Then, the corresponding node in $T_b$ will be queried by the strategy in the next step, because the trees are the same w.r.t.\ $\p$, and as a result, the strategy makes the same observations, i.e.\ the sink and none-sink nodes are the same. We also use $u_a$ and $u_b$ to indicate the state of the Markov chain at these nodes.

Since the trees are the same w.r.t.\ $\p$, it holds that $\p(u_a) = \p(u_b)$, therefore, by the definition of $\theta(\P)$, it holds $\tvd^\P(u_a, u_b) \leq \theta(\P)$.
As a result, by the definition of total variation distance, the children drawn from them $u_a$ and $u_b$ can be coupled together such that they are in the same class of $\p$ with probability at least $1 - \tvd^\p(a, b) \geq 1 - \theta(\p)$.
In case they are in the same class, we continue the coupling as is.
Otherwise, we say the coupling has failed, and couple the rest of the process independently for the two trees. The probability of failure at every step is at most $\theta(\p)$. As a result, since the rewinding strategy uses at most $Q$ queries, the overall probability that the coupling fails is at most $Q \cdot \theta(\p)$.

When the coupling does not fail, the rewinding strategy has the same output in the two runs because the two trees are the same w.r.t.\ $\p$.
That is, given the two initial states $a$ and $b$, the output is the same with probability at least $1 - Q \cdot \theta(\p)$.
Therefore, for one of $a$ and $b$ the output is incorrect with probability at least $\frac{1}{2}(1 - Q \cdot \theta(\p))$. This concludes the proof.
\end{proof}

Combining \cref{cor:outgoing-edge,clm:almost-coupling}, for a partition $\P$ with $\P(a) = \P(b)$, we can upper-bound the smallest outgoing edge of $\P$, and prove there exists a short path from $\P_0$ to a partition that separates $a$ and $b$.

\begin{lemma} \label{lem:short-path-exists}
    There exists a path of length $O_n(1)\cdot \AQC(a, b)^{2(n-2)}$ from $\P_0$ to a partition that separates $a$ and $b$.
\end{lemma}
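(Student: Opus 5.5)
We build the path greedily, starting from $\P_0$ and repeatedly refining. Set $Q := \AQC(a,b)$ and $h := \bigl(3(n-1)Q\bigr)^2$. At a current partition $\P$ with $\P(a) = \P(b)$, we argue that $\P$ has an outgoing edge of weight at most $h$ and follow it. We stop as soon as the current partition separates $a$ and $b$.

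\textbf{Existence of a cheap edge.} Suppose, for contradiction, that $\P$ has no outgoing edge of weight at most $h$.
\begin{itemize}
\item By \cref{cor:outgoing-edge}, every two states $x,y$ in the same class of $\P$ satisfy $\tvd^\P(x,y) < (n-1)/\sqrt{h} = 1/(3Q)$. Hence $\theta(\P) < 1/(3Q)$.
\item By definition of $\AQC$, some strategy using $Q$ queries induces observation distributions at total variation distance at least $1/3$.
\item The coupling in the proof of \cref{clm:almost-coupling} makes the two query trees agree with respect to $\P$, and hence give identical observations, except with probability at most $Q\,\theta(\P) < 1/3$.
\end{itemize}
Therefore the total variation distance between the two observation sequences is less than $1/3$, a contradiction. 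Note that the coupling argument bounds the total variation distance directly, not only the success probability, so it matches \cref{def:QC}. So $\P$ has a refinement $\P'$ with $w(\P,\P') \le h = O_n(1)\cdot Q^2$.

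\textbf{Counting steps.} Every edge of $G_M$ goes to a strict refinement; if an edge went to $\P$ itself, its weight would be a max over an empty set, which we exclude. So each step increases the number of classes by at least one. $\P_0$ already has two classes and any partition has at most $n$. We also stop once $a$ and $b$ are separated, which happens at the latest when all classes are singletons. Hence the path has at most $n-2$ edges, and its multiplicative length is at most $h^{n-2} = O_n(1)\cdot Q^{2(n-2)}$.

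\textbf{Main obstacle.} The delicate point is the interface between \cref{clm:almost-coupling} and the definition of query complexity. We need a strict inequality $Q\,\theta(\P) < 1/3$ to contradict $(1/3,Q)$-identifiability, which is why $h$ carries the constant $3$. The coupling must also apply at every intermediate $\P$, which holds because the claim only requires $\P(a)=\P(b)$ and is maintained until we stop. Finally, there are edge cases where some step is degenerate: $a$ or $b$ equals the sink, or $a,b$ are already separated by $\P_0$. In those cases the path is empty with length $1$, and the bound holds trivially.
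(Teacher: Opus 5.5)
Your proof is correct and follows essentially the paper's argument. It uses the same threshold $h=\bigl(3(n-1)\AQC(a,b)\bigr)^2$, the same combination of \cref{cor:outgoing-edge} and \cref{clm:almost-coupling} to show that a non-separating partition must have an outgoing edge of weight at most $h$, and the same refinement count giving at most $n-2$ edges. The differences are cosmetic: you walk greedily where the paper takes a partition with no outgoing edge inside the set reachable via edges of weight at most $h$, and you bound the total variation distance directly through the coupling, which matches \cref{def:QC} more cleanly than the paper's appeal to success probability $2/3$.
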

\begin{proof} 
    To prove a path exists, let $S$ be the set of partitions that are reachable from the source partition $\p_0$ using edges of weight at most $h = \left(3(n-1)\AQC(a, b)\right)^2 = O_n(1) \cdot (\AQC(a, b))^2$.
    Take a partition $\p \in S$ that has no outgoing edges inside $S$. Such a partition can be found because an edge $\p_1 \to \p_2$ can exist only if $\p_2$  refines $\p_1$.
    We show that $a$ and $b$ are separated in $\p$, and conclude that the shortest path from the source partition to $\p$ has length at most $h^{n-2}$.

    Assume for the sake of contradiction that $a$ and $b$ are not separated in $\p$.
    By the choice of $\p$, it has no outgoing edges of weight at most $h$.
    By \Cref{cor:outgoing-edge}, it holds that 
    $$\theta(\p) < \frac{n-1}{\sqrt{h}} \leq \frac{1}{3 \AQC(a, b)}. $$
    Since $\p$ does not separate $a$ and $b$, \cref{clm:almost-coupling} implies that any rewinding strategy which uses $\AQC(a, b)$ queries can distinguish between $a$ and $b$ with probability at most
    $$
    \frac{1}{2} + \frac{\AQC(a, b) \cdot \theta(\p)}{2} < \frac{2}{3}.
    $$
    This is a contradiction since by definition there exists a rewinding strategy distinguishing between $a$ and $b$ with $\AQC(a, b)$ queries and success probability $\frac{2}{3}$.

    Finally, observe that $\p$ is reachable from the source partition using at most $n - 2$ edges of weight at most $h$.
    We have already shown there exists a path consisting of edges of weight at most $h$ from the source partition to $\p$.
    The number of edges in this path is at most $n - 2$ because each partition in the path is a refinement of the previous one.
    Hence the length is at most $$h^{n-2} = \left(3(n-1)\AQC(a, b)\right)^{2(n-2)} = O_n(1)\cdot \AQC(a, b)^{2(n-2)}.$$
    This concludes the proof.
\end{proof}

The following lemma describes how a path of length $Q$ can be used to distinguish the states with (slightly more than) $Q$ queries.

\begin{lemma} \label{lem:using-a-path}
    Given a path $\p_0, \p_1, \ldots, \p_k$ of length $Q$, one can identify the class of the initial state in $\p_k$ with $O_n(1) \cdot  Q \log^k Q$ queries,\footnote{recall $k \leq n-2$}
    using a non-adaptive tree of height $k$ where on the $i$-th level the degree of every node is chosen proportionally to $w(\p_{i-1}, \p_i)$ and the class of each node on that level is determined w.r.t.\ $\p_i$.
\end{lemma}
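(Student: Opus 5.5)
We argue by induction on the level, working bottom-up in the query tree: a node at height $i$ is classified with respect to $\p_i$ using only its children, which sit at height $i+1$. The children have already been classified with respect to $\p_{i+1}$, and since $\p_{i+1}$ refines $\p_{i+1}$'s predecessors, this also fixes their class in $\p_{i}$. Define a test $T_j$ which, given a node $v$ that has a subtree of the prescribed shape below it, outputs $\p_j(v)$. The base case is $T_0$. It needs no further queries, because in a canonical chain the observation at $v$ already reveals $\p_0(v)$. Note that the indexing in the statement is reversed: the root must be classified in $\p_k$, so its children use $\p_{k-1}$, and so on down to the leaves, which use $\p_0$. Accordingly, the root has degree proportional to $w(\p_{k-1},\p_k)$, and in general level $i$ of the tree corresponds to step $k-i$ of the path. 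Up to this relabeling, the structure is as stated.

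The inductive step builds $T_j$ from $T_{j-1}$, with $j\ge1$. Given $v$, we first know $\p_{j-1}(v)$. We obtain this by recursion, or directly, since the children already determine it. Fix a class $C\in\p_{j-1}$ containing $v$. We must decide which $\p_j$-class inside $C$ contains $v$. Let $D_1,\dots,D_m$ be the $\p_j$-classes inside $C$, with $m\le n$. For each pair $x,y\in C$ with $\p_j(x)\ne\p_j(y)$, the definition of the edge weight gives $\tvd^{\p_{j-1}}(x,y)\ge w(\p_{j-1},\p_j)^{-1/2}$. We draw $N=\Theta_n(w(\p_{j-1},\p_j)\log(1/\epsilon'))$ children of $v$, classify each child in $\p_{j-1}$ using $T_{j-1}$, and form the empirical distribution $\hat\mu$ over the classes of $\p_{j-1}$. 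The output is the class $D_\ell$ containing a state $x$ that minimizes $\|\hat\mu-\mu_x\|_1$, where $\mu_x$ is the projection of $P_{X|Y=x}$ onto $\p_{j-1}$. Suppose every child test is correct. A Chernoff bound plus a union bound over the at most $n$ classes then shows that, with probability $1-\epsilon'$, $\hat\mu$ lies within $\frac{1}{4}w^{-1/2}$ of $\mu_v$ in $\ell_1$, and we have $\|\hat\mu-\mu_v\|_1\le n\cdot\max_C|\hat\mu(C)-\mu_v(C)|$, which contributes the $O_n(1)$ factor. Any $x$ in a different $\p_j$-class has $\|\mu_x-\mu_v\|_1\ge 2w^{-1/2}$, so the minimizer lies in the correct class. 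This is the multi-hypothesis version of \cref{lem:distinguish-pair-using-oracle}.

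The main obstacle is error propagation, since the child tests are themselves randomized. We handle it with a global union bound. The whole tree has $\prod_j N_j = O_n(1)\cdot Q\log^k(1/\epsilon)$ nodes. If every node-level test fails with probability at most $\epsilon$, where failure is defined as failure conditioned on all the node's children being correctly classified, then all classifications are correct with probability at least $1-(\#\text{nodes})\cdot\epsilon$. Choosing $\epsilon=\Theta_n(1/(Q\log^k Q))$ makes this at least $2/3$. With that choice $\log(1/\epsilon)=O_n(\log Q)$, as long as $k\le n-2$ and $Q\ge2$. The total number of queries is therefore $\prod_{j=1}^k\Theta_n(w(\p_{j-1},\p_j)\log(1/\epsilon))=O_n(1)\cdot Q\log^k Q$. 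There is a mild circularity, because $\epsilon$ depends on the node count and the node count depends on $\log(1/\epsilon)$. It is resolved by noting that $\log(Q\log^kQ)=O_n(\log Q)$. Finally, the tree shape depends only on the path and the weights, not on any observations, so the strategy is non-adaptive.
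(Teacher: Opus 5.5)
Your architecture is the paper's. You classify bottom-up, give each node that must be resolved in $\p_j$ about $\Theta_n(w(\p_{j-1},\p_j)\log(1/\epsilon))$ children labelled in $\p_{j-1}$, and close with one union bound and $\epsilon=\Theta_n(1/(Q\log^kQ))$. Your decision rule (nearest $\mu_x$ in $\ell_1$ within the candidate class, from one shared batch) is a good replacement for the paper's $n^2$ pairwise tests with a ``winner'' rule.

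The gap is the step ``we first know $\p_{j-1}(v)$ \ldots\ directly, since the children already determine it.'' The children's $\p_{j-1}$-labels are just $N_j$ samples from $\mu_v$. But telling $v$ apart from a state $x$ that was split from it at an earlier step $i<j$ is only guaranteed with about $w(\p_{i-1},\p_i)$ samples, and weights along a path can decrease. For example, take states $s,a,b,c$ with the following transitions:
\begin{itemize}
\item $a\to s,b$ with probabilities $\tfrac12+\delta,\tfrac12-\delta$;
\item $b\to s,b,c$ with probabilities $\tfrac12,\tfrac14,\tfrac14$;
\item $c\to s,a$ with probabilities $\tfrac12,\tfrac12$.
\end{itemize}
Use the path $\p_1=\{\{s\},\{a\},\{b,c\}\}$, $\p_2=$ all singletons, so $w(\p_0,\p_1)=\delta^{-2}$ and $w(\p_1,\p_2)=4$.

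Suppose $v\in\{a,b\}$. Over $\p_1$ we have $\|\mu_a-\mu_b\|_1=2\delta$, so $\Theta(\log Q)$ children do not determine $C=\p_1(v)$. Nor do they give the right $\p_2$-class if you minimize over all states instead of over $C$. In fact, every non-sink child of $a$ or $b$ is $b$ or $c$, each of whose children is the sink with probability exactly $\tfrac12$. So in the uniform height-$2$ tree, the only information about a root in $\{a,b\}$ is $\Theta(\log Q)$ coin flips with bias $\tfrac12+\delta$ versus $\tfrac12$. The uniform tree itself cannot classify the root. Your other option, ``by recursion'', means running $T_{j-1}$ at $v$ itself, which needs a second subtree under $v$ that your tree does not contain.

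The paper's proof passes over the same point. It assumes an oracle for $\p_{k-1}$ and uses it on the node itself for pairs already split by $\p_{k-1}$ (``for free''). It then removes the oracle ``recursively'' while counting only the uniform tree.

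The repair is cheap. To resolve $v$ in $\p_j$, run $T_{j-1}$ at $v$ on a separate batch of children, in addition to the $N_j$ children each labelled by $T_{j-1}$. The cost then satisfies $1+C_j=(1+N_j)(1+C_{j-1})$ with $C_0=0$, so $C_k\le\prod_j(1+N_j)\le 2^k\prod_jN_j=O_n(1)\cdot Q\log^kQ$. Your union bound over the testing nodes goes through unchanged, and the strategy stays non-adaptive. You only have to drop the claim that all nodes on a level have the same degree.

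A minor point: the statement's ``level $i$'' is the height above the leaves, as in Algorithm~\ref{alg:opt-to-the-t}, so nothing is reversed. Your opening sentences have the direction backwards, but the construction you actually carry out (root in $\p_k$, leaves in $\p_0$) is the right one.
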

\begin{proof}
    Let $\p_0, \ldots, \p_k$ be a path of length $Q$.
    First,
    assuming that given a state $a$, we have oracle access to $\p_{k-1}(a)$,
    we recall a subroutine that can identify the class of a state in $\p_k$
    by drawing children from it and observing their class in $\p_{k-1}$.
    To do so, we show how to distinguish between any two states $a$ and $b$ that are in different classes of $\p_k$.

    Let $d = \tvd^{\P_{k-1}}(a, b)$
    If $a$ and $b$ are also in different classes of $\p_{k-1}$, then we can distinguish between them simply by accessing their class in $\p_{k-1}$ for free.
    Otherwise, we can invoke \cref{lem:distinguish-pair-using-oracle} to distinguish between $a$ and $b$ by drawing $\frac{2\log 1/\epsilon}{d^2}$ samples of the next step and looking at their class in $\P_{k-1}$.

    Given any state $x$, to identify its class in $\p_k$,
    we perform the above test for every pair $(a, b)$ such that $\p_k(a) \neq \p_k(b)$.
    A state $a$ is called a \enquote{winner} if for every state $b$ with $\P_k(a) \neq \P_k(b)$ the test $(a, b)$ outputs $a$.
    We report that $x$ is in the class $C \in \p_k$ if there exists a winner state in $C$.
    Assuming all the tests involving $x$ were carried out successfully, only  $\P_k(x)$ contains a winner, and $\P_k(x)$ is determined correctly.

    The total number of queries to identify the class of a state in $\p_k$
    is the number of tests ($n^2$) times the number of queries per test ($\frac{2\log 1/\epsilon}{d^2}$).
    Also, it holds by definition:
    $$
        \frac{1}{d^2} \leq w(\p_{k-1}, \p_k).
    $$
    Therefore, the total number of queries is at most
    $$
    2 n^2 \cdot \log (1/\epsilon) \cdot w(\p_{k-1}, \p_k).
    $$

    So far, we have assumed oracle access to the class of a state in $\p_{k-1}$.
    To lift this assumption, we simply perform this procedure recursively.
    That is, to obtain for a state $s$, its class in $\p_{k-1}$, we draw a number of children and look at their classes in $\p_{k-2}$, and so on.
    As a result, we get a tree, where the total number of queries is:
    \begin{align*}
    \prod_{i=1}^k 2 n^2 \cdot \log (1/\epsilon) \cdot w(\p_{i-1}, \p_i)
    & \leq \left(2n^2\log(1/\epsilon)\right)^{k}\prod_{i=1}^k w(\p_{i-1}, \p_i) \\
    & \leq \left(2n^2\log(1/\epsilon)\right)^{k} \cdot Q.
    \end{align*}
    Here, the first inequality follows from rearranging the terms, and the second inequality holds by the definition of $Q$ (the length of the path).

    It remains to set the value of $\epsilon$ such that all tests are carried out successfully with the desired probability of $\frac{2}{3}$.
    Applying the union bound, the probability that at least one of the tests is unsuccessful is at most
    $$
    \epsilon \cdot \left(n^2\log(1/\epsilon)\right)^{k} \cdot Q.
    $$
    By letting $\epsilon = \Theta\left(\frac{1}{n^{2k}Q \log (n^{2k}Q)}\right)$, this probability becomes smaller than $1/3$. Plugging $\epsilon$ back in, the total number of queries this method uses to identify the class of any state in $\p_k$ with success probability $2/3$ is:
    \begin{equation*}
        O\left(\left(2n^2\log(1/\epsilon)\right)^{k} \cdot Q\right)
        = O_n(1) \cdot Q \log^{k} Q. \qedhere
    \end{equation*}
\end{proof}

Finally, we put \cref{lem:short-path-exists,lem:using-a-path} together to derive \cref{thm:opt-to-the-t}.

\begin{proof}[Proof of \cref{thm:opt-to-the-t}]
    By \cref{lem:short-path-exists}, for any two states $a$ and $b$, there exists a path of length at most $\AQC(a, b)^{2(n-2)}$ ending at a partition that separates $a$ and $b$.
    Therefore, the partition $\p$ obtained by computing shortest paths in step \ref{step:find-best-partition} of the algorithm has cost at most $\AQC(a, b)^{2(n-2)}$.
    \cref{lem:using-a-path} shows how the shortest path to $\p$ can be used in steps \ref{step:begin-using-the-path}-\ref{step:end-using-the-path}, to identify the class of the initial state in $\p$ using $O_n(1) \cdot (\AQC(a, b)\log \AQC(a, b))^{2(n-2)}$ queries, hence distinguishing between $a$ and $b$.

    Regarding the runtime, observe that the strategy is implicitly (non-adaptively) computed in $O_n(1)$.
    The number of state partitions (number of vertices in $G_M$) is at most $n^n = O_n(1)$.
    For an edge $\P \to \P'$, the weight can be computed in time $O(n^3)$ by iterating over every relevant pair of states and computing the distance $\tvd^\P$ in $O(n)$.
    Then the shortest path tree can be computed using a polynomial-time algorithm (e.g.\ Dijkstra's).
    After the shortest path is computed, the degree on each level of the query tree is determined.
    For each node of height $i$, its calls in $\P_i$ can be determined in time $O(2n^2 \cdot \log (1/\epsilon) \cdot w(\p_{k-1}, \p_k))$, i.e.\ proportional to the degree.
    Therefore, the run time of the algorithm is $O_n(1) \cdot (\AQC(a, b)\log \AQC(a, b))^{2(n-2)} = O_n(1) \cdot (\QC(a, b))^{O(n)}$.
\end{proof}

\section{The Polynomial Gap of Adaptivity}
\label{sec:gap}

In this section, we prove the following theorem which directly implies \cref{thm:NA-A-gap}.

\begin{theorem}\label{thm:gap}
    For any parameters $n, d \geq 1$, there exists a canonical Markov chain $M=(\Omega, P)$ with $n = |\Omega|$ states and two states $a, b \in \Omega$ such that 
    $$
    \AQC_M(a, b) = O(n^2 d), \qquad\qquad
    \NAQC_M(a, b) = \Omega(d^{(1-o(1))n}).
    $$ 
\end{theorem}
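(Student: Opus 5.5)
We use the chain of \cref{fig:Gap-A-NA} with $n-2$ path states $q_1,\dots,q_{n-2}$, the dummy state $D$ and the sink $\sink$, and take $a=q_1$, $b=q_2$. Each path state stays put with probability $(1-1/d)/2$, advances with probability $1/(2d)$, and jumps to $D$ with probability $1/2$; from $D$ the chain moves to $\sink$ deterministically. The proof has an adaptive upper bound and a non-adaptive lower bound.

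\textbf{Adaptive upper bound.} We simulate the walk on the path without the dummy. From the current node $v$ we draw a child $u$. If $u$ is non-sink, we draw one child of $u$: if that child is the sink, then $u=D$ (no path state reaches $\sink$ in one step except $q_{n-2}$, and $q_{n-2}$ does so only with probability $1/(2d)$), so we discard $u$ and rewind to $v$. Otherwise $u$ is a genuine path step.
\begin{itemize}[topsep=0pt]
\item Misclassifying $q_{n-2}$ as $D$ happens with probability only $1/(2d)$ per test and can be absorbed. More simply, we repeat the child test a constant number of times, or treat $u$ as $D$ only when both of two children are sink.
\item Each simulated step costs $O(1)$ expected queries and advances with probability $1/d$.
\item The hitting time of $\sink$ from $q_1$ is a sum of $n-2$ geometric variables with mean $d$; from $q_2$ it is a sum of $n-3$.
\item The means therefore differ by $d$, while the standard deviation is $O(\sqrt n\, d)$.
\item Running $O(n)$ independent walks, each of length $O(nd)$, and thresholding the average distinguishes $a$ from $b$ by Chebyshev. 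The total cost is $O(n^2 d)$.
\end{itemize}

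\textbf{Non-adaptive lower bound.} Fix a non-adaptive query tree $T$ with $Q$ nodes. We couple the runs from $q_1$ and $q_2$ and show the observation distributions are $o(1)$-close unless $Q\ge d^{(1-o(1))n}$. Each edge of $T$ is independently a dummy edge with probability $1/2$, and the child of a dummy edge together with its whole subtree is then forced: $D$, then $\sink$ forever.
\begin{itemize}[topsep=0pt]
\item The observations that carry information come from root-to-node paths that avoid dummies. Along such a path the walk is the lazy path chain with advance probability $1/d$ per step.
\item A node at depth $\ell$ is dummy-free with probability $2^{-\ell}$.
\item The laws of the two runs differ only through the position at which the path walk enters $q_{n-2}\to\sink$, that is, through the hitting time of $\sink$ measured along dummy-free paths.
\end{itemize}
The idea is a level-by-level coupling in the spirit of \cref{clm:almost-coupling}, but with a refined potential. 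We couple the walk from $q_1$ with the walk from $q_2$ by letting the $q_1$-walk first make its extra advance, after which the two are identical. This coupling fails only at nodes where one walk has reached the sink and the other has not. Distinguishing $q_1$ from $q_2$ requires resolving a shift of a single geometric$(1/d)$ step in a sum of $n-2$ such steps. Such a shift shows up only through unusual events, essentially the walk covering all $n-2$ advances within a window where dummy-free survival is non-negligible. In the regime where surviving paths of length $\ell$ are numerous ($\ell\lesssim \log_2 Q$), the walk reaches $\sink$ only with probability about $\binom{\ell}{n-3}d^{-(n-2)}$. I would bound the total variation distance by
\[
\sum_{v\in T} 2^{-\mathrm{depth}(v)}\cdot \Pr[\text{sink status at }v\text{ differs under the coupling}],
\]
and show that each term is at most about $\mathrm{poly}(\ell)\, d^{-(n-2)}$ times a factor decaying in $\ell$. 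This would give $\tvd\le Q\cdot \mathrm{polylog}(Q)\, d^{-(n-3)}$, hence $Q\ge d^{(1-o(1))n}$.

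\textbf{Main obstacle.} The hardest step is the lower bound for deep nodes, where the walk typically has reached the sink. There the sink status at a node is correlated with the statuses of its ancestors, so a naive union bound fails. The $2^{-\ell}$ dummy factor must exactly cancel the $\binom{\ell}{n}d^{-n}$ growth, and the optimum $\ell\approx d$ threatens to lose the exponent. I would first try a likelihood-ratio (Hellinger / $\chi^2$) bound on the observed tree, conditioning on the set of dummy-free nodes, and show that each dummy-free path contributes information at most $d^{-(n-O(1))}$ after averaging over the dummy pattern.
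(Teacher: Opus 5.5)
Your adaptive upper bound is the paper's argument: a constant-query $D$-test to discard dummy children, $O(n)$ dummy-free walks, then Chebyshev. That part is fine. The gap is in the non-adaptive lower bound, which is not proved. You write down the right quantity, a union bound over the nodes of $T$ with a $2^{-\mathrm{depth}(v)}$ dummy-free factor. You then assert that this union bound fails for deep nodes because sink statuses are correlated with those of ancestors, and you defer the step to an unspecified Hellinger/$\chi^2$ argument.

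That obstacle does not exist. Two facts settle it:
\[
\tvd \le \Pr[\exists v\in T:\ \text{observations at } v \text{ differ}] \le \sum_{v\in T}\Pr[\text{observations at } v \text{ differ}].
\]
The first inequality is the coupling bound, and the second is a union bound, which needs no independence. Deep nodes are the easy case: under the coupling both copies there are almost surely at $\sink$ via $D$. What remains is a per-node bound that is uniform in depth, and it is elementary.

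To get it, couple the two labelings of the fixed tree node by node. On every edge whose parent is a path state, let both copies make the same move type (stay, advance, or jump to $D$); from $D$ or $\sink$ both go to $\sink$. Your time-shift coupling, in which the $q_1$-walk makes its extra advance and the walks are then identical, does not define such a coupling. That advance occurs at different depths on different branches of a fixed tree.

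Under the move-type coupling, the $q_2$-copy is always exactly one position ahead, or both copies are in $\{D,\sink\}$ together. So the first node on any root path where observations differ is one where the $q_2$-copy moves $q_{n-2}\to\sink$. For such a node at depth $\ell$, the root path must be dummy-free and contain exactly $n-3$ advances. This has probability at most
\[
\binom{\ell}{n-3}\Big(\frac{1}{2d}\Big)^{n-3}\Big(\frac{1}{2}\Big)^{\ell-n+3} = 2^{-\ell}\binom{\ell}{n-3}\, d^{-(n-3)} \le d^{-(n-3)},
\]
using $\binom{\ell}{n-3}\le 2^{\ell}$. Summing over the at most $Q$ nodes gives $\tvd\le Q\,d^{-(n-3)}$, hence $Q\ge d^{n-3}/3=d^{(1-o(1))n}$.

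There is no tension between the $2^{-\ell}$ factor and the binomial growth, so you never need to optimize over $\ell$. The product $2^{-\ell}\binom{\ell}{n-3}$ never exceeds $1$ and peaks near $\ell\approx 2(n-3)$, not near $\ell\approx d$. This is the paper's proof; its case split into $\ell>2n\log d$ and $\ell\le 2n\log d$ amounts to this same estimate.
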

    We remark that the optimal rewinding strategy, which distinguishes between $a$ and $b$ using $O(n^2d)$ queries, can be trivially turned into an algorithm with the same run time. Therefore, \cref{thm:gap} implies the same gap between the optimal adaptive and non-adaptive run times of state identification algorithms. As such, in this section, we abuse the notation and use the terms rewinding strategy and algorithm interchangeably.

    We prove the claim given $d>0$ and, the Markov chain $M$ appeared in \cref{fig:Gap-A-NA}. 
    In this Markov chain, the states form a directed path toward the sink state, with states labeled as $q_1,..., q_{n-2}$ from left to right.
    We label the last state as $D$. 
    For all $i \in [n-3]$, state $q_i$ goes to $q_{i+1}$ with probability $\frac{1}{2d}$.
    The state $q_{n-2}$ goes to the sink state $s$ with the same probability.
    Each state in $q_1,..., q_{n-2}$ goes to itself with probability $\left(1-\frac{1}{d}\right)/2$.
    All $q_1,..., q_{n-2}$ go to the state $D$ with probability $\frac{1}{2}$, while state $D$ goes to the sink with probability 1.

     We prove that there exists an adaptive algorithm distinguishing $q_1$ and $q_2$ with $O(n^2d)$ queries, while any non-adaptive algorithm needs at least $\Omega(d^{(1-o(1))n})$ queries. 
     The intuition is as follows. 
     An adaptive algorithm can easily test whether each new state opened is the $D$ state or not by simply doing a test with constant queries, and opening another child instead to continue its path. 
     This is not possible for non-adaptive algorithms that come across the $D$ state with probability $\frac{1}{2}$, and when this happens the algorithm intuitively loses any information on the state it started from. 
     So it needs a lot of paths to account for this loss of information. 

\subsection{The Adaptive Algorithm}
\begin{lemma}\label{lem:a-gap}
    There exists an adaptive algorithm that distinguishes $q_1$ and $q_2$ in Markov chain $M$ of \cref{fig:Gap-A-NA} with $O(n^2d)$ queries.
\end{lemma}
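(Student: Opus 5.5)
The plan is to have the adaptive algorithm simulate a walk on the ``simple'' chain with the dummy state $D$ filtered out, and then read off the absorption time of that walk. The first step is a constant-query test for whether a non-sink node is in state $D$. Draw one child of the node. If the node is $D$, the child is the sink with probability $1$. If the node is some $q_i$, the child is non-sink with probability at least $1/2$, since it goes to $D$ with probability $1/2$ and otherwise mostly stays on the path. Repeating this with $O(\log(nd))$ children lets us declare the node ``$D$'' exactly when all its children are sinks, with error $1/\mathrm{poly}(nd)$. Even simpler: a node is $D$ iff it is non-sink and its child is the sink, and a $q_i$ node has a sink child only with probability at most $1/(2d)$. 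This probability is small, so a constant number of children suffices per node, and it is safe to union bound because the error per node is $O(2^{-r})$.

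The second step builds the filtered walk. From the current node $u$ (initially the root) we repeatedly draw a child $v$. If $v$ is the sink, we stop and record the step count. If the test says $v=D$, we rewind to $u$ and draw again. Otherwise we move to $v$. Conditioned on not going to $D$, the transition from $q_i$ is: stay with probability $1-1/d$, advance with probability $1/d$. So the accepted steps form exactly the simple chain, and the absorption time $T$ from $q_j$ is a sum of $n-1-j$ independent $\mathrm{Geometric}(1/d)$ variables. Its mean is $(n-1-j)d$ and its variance is at most $(n-2)d^2$. Each accepted step costs $O(1)$ queries in expectation, namely a geometric number of rejected $D$ children, each tested with $O(1)$ extra queries.

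The third step distinguishes $q_1$ from $q_2$. Their means differ by $d$, while the standard deviation is about $\sqrt{n}\,d$. Running $m=O(n)$ independent filtered walks from the root and averaging gives an average whose standard deviation is about $d$ times a small constant. By Chebyshev, thresholding the average at $(n-2.5)d$ then succeeds with probability at least $2/3$. Each walk has expected length $O(nd)$, so the total expected query count is $O(n\cdot nd)=O(n^2d)$. We convert this to a worst-case bound by truncating at a constant times the expectation (Markov's inequality), which costs only a constant in success probability.

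The main obstacle is making the $D$-test errors and the truncation compatible with the $O(n^2d)$ budget. Misclassification can happen in two ways. A $q_i$ node whose child happens to be the sink would be mistaken for $D$. A rewind caused by misclassifying $D$ cannot occur, since a $D$ node's child is always the sink. I would handle the first case by noting that it merely replaces the transition ``$q_{n-2}\to$ sink'' or ``$q_i\to$ child sink'' with a rejection. This only perturbs the conditional step distribution by $O(1/d)$, which shifts each walk's expected time by $O(n)$, far below the separation $d$, assuming $d\gg n$. Otherwise we use two children per test to make this error $O(1/d^2)$. Constants then need adjusting so that the Chebyshev margin absorbs this bias.
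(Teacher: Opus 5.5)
Your proposal is correct and follows essentially the same route as the paper's proof. Both use a constant-query $D$-test and an adaptive walk that rewinds whenever the drawn child is $D$, so that accepted steps follow the simple chain with $\mathrm{Geometric}(1/d)$ holding times; both then apply Chebyshev to the average absorption time of $\Theta(n)$ such walks, for $O(n^2d)$ queries in total, and your Markov-inequality truncation to a worst-case bound is a detail the paper skips.

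Your handling of $D$-test errors is more pessimistic than needed and does not require $d\gg n$. Among the $q_i$, only $q_{n-2}$ can have a sink child, so misclassifications only perturb the holding times at $q_{n-3}$ and $q_{n-2}$, by $O(1)$ in total, and they do so identically for walks started at $q_1$ and at $q_2$. Hence the gap between the two means stays at least $d$, and the algorithm, which knows $M$, can threshold at the exact midpoint.
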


\begin{proof}
    To describe the algorithm, we first introduce a boolean $D$-test that can distinguish the state $D$ from any other state. Given an unknown state $a$, we take a constant number of children of $a$. If all of these children are the sink, then $a$ is the $D$ state; otherwise, $a$ is not the $D$ state. Note that the algorithm encounters state $D$ in only an $\epsilon$ fraction of its queries.

    Now we describe the adaptive algorithm $\mathcal{A}$ as follows:
    \begin{itemize}
        \item Given an initial state $a \in \{q_1, q_2\}$, open a child $c$ of $a$.
        \item If $c$ is not a sink, run the $D$-test on $c$. 
            \begin{itemize}
                \item If the $D$-test is positive (indicating $c$ is $D$), then open another child from the parent of $c$.
                \item If the $D$-test is negative, continue to open a child from $c$.
            \end{itemize}
    \end{itemize}
    
    This algorithm will produce a path from $a$ to the sink that avoids state $D$. The length of this path helps determine whether the algorithm started at $q_1$ or $q_2$. Let $E_1$ and $E_2$ represent the expected lengths of such paths when starting from $q_1$ and $q_2$, respectively, and let $\text{Var}_1$ and $\text{Var}_2$ denote the variances of these path lengths. Note that the paths do not include encounters with state $D$ or additional queries from the $D$-tests. 

    Since each transition has a probability of $ \frac{1}{2d}$ of moving to the next state in the sequence, 
    \[
    E_1 = (n-2) \cdot \frac{d}{2} \quad \text{and} \quad E_2 = (n-3) \cdot \frac{d}{2}.
    \]
    
    The variance of a path length in both cases, $\text{Var}_1$ and $\text{Var}_2$, is bounded by the variance of the sum of $(n-2)$ or $(n-3)$ independent geometric random variables with success probability $ \frac{1}{2d}$. We have
    \[
    \text{Var}_1 = \frac{2(n-2)(1-1/d)}{(1/d^2)} \leq 2nd^2 \quad \text{and} \quad \text{Var}_2 = 2\frac{(n-3)(1-1/d)}{(1/d^2)} \leq 2nd^2.
    \]

    Let $X_1, \dots, X_k$ denote the lengths of the $n$ paths generated by $\mathcal{A}$. To distinguish between $q_1$ and $q_2$, we compare the sample mean of path lengths, $\overline{X} = \frac{1}{k} \sum_{i=1}^k X_i$, to $E_1$ and $E_2$. 

    Let $\sigma$ be the standard deviation of random variable $X$. We have $\sigma^2 \leq nd^2/k$.
    By applying Chebyshev’s Inequality to the sample mean $\overline{X}$, when the starting state is $q_1$, we have
    \[
    \Pr\left( |\overline{X} - E_1| \geq d/2 \right) \leq  \frac{4\sigma^2}{d^2}\leq \frac{8n}{k}.
    \]
    Similarly for the case that the starting state is $q_2$ we have
    \[
    \Pr\left( |\overline{X} - E_2| \geq d/2 \right) \leq  \frac{4\sigma^2}{d^2}\leq \frac{8n}{k}.
    \]

    To  distinguish between the two expected path lengths $E_1$ and $E_2$ with high probability, we want the sample mean $\overline{X}$ to be close to $E_1$ or $E_2$ depending on the starting state. Thus, we set $k = 100n$. In the case that the starting state is $q_1$ with probability $2/3$ we have $|\overline{X} - E_1| \leq d/2$, and if the starting state is $q_2$ with high probability we have $|\overline{X} - E_1| \leq d/2$. Therefore, the algorithm succeeds with probability $2/3$.

    The total size of the tree that this algorithm opens is $O(nd \cdot n) = O(n^2 d)$, and with probability at least $\frac{2}{3}$, this algorithm will distinguish between $q_1$ and $q_2$.
\end{proof}

\subsection{The Non-Adaptive Lower Bound}

\begin{lemma}\label{lem:na-gap}
    Any non-adaptive algorithm distinguishing $q_1$ and $q_2$ in Markov chain $M$ of \cref{fig:Gap-A-NA} needs $\Omega(d^{(1-o(1))n})$ queries.
\end{lemma}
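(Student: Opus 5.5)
The plan is a coupling argument in the spirit of \cref{clm:almost-coupling}, with a \enquote{shift by one} coupling in place of a partition. The point is that the union bound can be taken over the nodes of a fixed (non-adaptive) query tree, and for a fixed node the relevant bad event has probability computable directly from its ancestral path.

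\textbf{The coupling.} Fix a non-adaptive strategy. It is a fixed tree $T$ with $Q$ nodes, and without loss of generality it is deterministic. Run it once from $q_2$, giving $T_b$, and once from $q_1$, giving $T_a$. Couple the two runs node by node through the correspondence $q_{i+1}\leftrightarrow q_i$ (for $i\le n-3$), $D\leftrightarrow D$, $\sink\leftrightarrow\sink$. The parent states are in correspondence, and the transition probabilities match under this map: self-loop $(1-1/d)/2$, advance $1/(2d)$, jump to $D$ with $1/2$, $D\to\sink$ with $1$. So each child can be drawn identically in both trees.

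The correspondence breaks only when a node of $T_b$ is at $q_{n-2}$ and advances to $\sink$. The matching node of $T_a$, at $q_{n-3}$, then advances to $q_{n-2}$, which is non-sink. While the correspondence holds, sink/non-sink observations agree in the two trees. Hence the two observation distributions differ in total variation by at most the probability that some node $v$ of $T_b$ is reached from the root by a path of $q$-states that advances $n-2$ times, ending at $\sink$, and never visits $D$. Call this event $F_v$.

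\textbf{Bounding $\Pr[F_v]$ for a fixed node.} Here non-adaptivity is essential. Since $T$ is fixed, the depth $\ell=\ell(v)$ is deterministic, and the states along the root-to-$v$ path form an honest Markov chain path of length $\ell$. Therefore
\[
\Pr[F_v]\;\le\;\binom{\ell}{n-2}\Bigl(\frac{1}{2d}\Bigr)^{n-2}\Bigl(\frac12\Bigr)^{\ell-(n-2)}\;\le\;\frac{\ell^{\,n-2}}{(n-2)!}\,2^{-\ell}\,d^{-(n-2)}.
\]
Here every one of the $\ell$ steps must avoid $D$, each with probability $1/2$, and exactly $n-2$ of them are advances. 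Maximizing over $\ell$ gives $\ell\approx (n-2)/\ln 2$, so $\Pr[F_v]\le c_n\, d^{-(n-2)}$ with $c_n\le 2^{O(n)}$ depending only on $n$.

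\textbf{Concluding.} A union bound over the $Q$ nodes gives $\tvd\le Q c_n d^{-(n-2)}$. Success probability $2/3$ requires this to be at least $1/3$, so
\[
Q=\Omega\bigl(d^{\,n-2}/c_n\bigr)=\Omega\bigl(d^{(1-o(1))n}\bigr),
\]
where the $o(1)$ absorbs the $n$-dependent constant, say for $d\ge 2^{\omega(1)}$, and the additive $-2$ in the exponent.

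\textbf{Main obstacle.} The delicate step is the second one: making sure that $\Pr[F_v]$ is really the unconditioned path probability. This is exactly where adaptivity breaks the bound. The adaptive algorithm of \cref{lem:a-gap} chooses to extend only non-$D$ nodes, so its deep nodes avoid $D$ with probability close to $1$ rather than $2^{-\ell}$. I would therefore write the coupling so that the shape of $T$ is fixed before any sampling. Concretely, I would define the coupled trees by sampling every edge of $T$ up front using shared randomness, and define $F_v$ purely in terms of the edge variables on $v$'s ancestral path.

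A minor point to check is the boundary behavior after the correspondence breaks. It suffices to declare failure at that moment and then couple arbitrarily, exactly as in \cref{clm:almost-coupling}.
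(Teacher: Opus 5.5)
Your proposal is correct and is essentially the paper's argument: the same shift-by-one coupling of the runs from $q_1$ and $q_2$, with failure only when the $q_2$-run advances from $q_{n-2}$ to $\sink$. It also uses the same per-node bound $\binom{\ell}{m}2^{-\ell}d^{-m}$ (with $m$ the number of required advances, relying on non-adaptivity to fix each node's depth $\ell$), followed by a union bound over the nodes of the fixed tree.

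The only differences are small. First, you bound $\ell^{m}2^{-\ell}/m!$ by maximizing over $\ell$, which costs an $n$-dependent constant instead of the $(2e\log d)^{n}$ loss from the paper's case split at $\ell \approx 2n\log d$. Second, starting from $q_2$ the sink is reached after $m=n-3$ advances, not $n-2$ (the paper has the same slip); this does not affect the $d^{(1-o(1))n}$ conclusion.
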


\begin{proof}
Assume there exists a non-adaptive algorithm $\mathcal{A}$ designed to distinguish the states $q_1$ and $q_2$ in the Markov chain $M$.
Given either $q_1$ or $q_2$ as the initial state, $\mathcal{A}$ generates a tree $T$ that correspond to states of the Markov chain. 
For both starting states $q_1$ and $q_2$, the tree $T$ has the same structure due to the non-adaptive nature of $\mathcal{A}$.

We will construct a coupling of the execution of $\mathcal{A}$ when starting from $q_1$ versus starting from $q_2$, to show that the algorithm needs a large number of queries to successfully distinguish between the two initial states.

Let $T$ denote the tree generated by $\mathcal{A}$, with each node in $T$ representing a queried state in the Markov chain. 
We define the two coupled versions of the tree as $T_{q_1}$ starting at $q_1$, and $T_{q_2}$, starting at $q_2$.
We define the coupling as follows. 
If $T_{q_1}$ remains in its current state, then $T_{q_2}$ also self-loops in its current state.
If $T_{q_1}$ transitions to the next state to the right (from $q_i$ to $q_{i+1}$), then $T_{q_2}$ makes the same transition to the right.
If $T_{q_1}$ transitions to the dummy state $D$, then $T_{q_2}$ also transitions to $D$.

If the sink $s$ is reached in either tree, 
but not the other, 
we say the trees have decoupled, 
as this reveals information about the starting state.  
Since we coupled reaching to the state $D$, 
the decoupling happens when in $T_{q_2}$ there is a transition from $q_{n-2}$ to the $s$ and 
 $T_{q_1}$ transitions from $q_{t-3}$ to $q_{n-2}$. 

Let $p_{\text{decouple}}$ denote the probability that the trees $T_{q_1}$ and $T_{q_2}$ decouple within a path of length $k$. Specifically, $p_{\text{decouple}}$ is the probability that one of the trees reaches $s$ while the other does not:
$$
p_{\text{decouple}} = \Pr(\text{reach } s \text{ in } T_{q_2} \text{ but not in } T_{q_1}).
$$

By the structure of $M$ the probability that using a path of length $k$ starting from $q_2$, we reach sink is 
$$
p_{\text{decouple}} \leq \left(\frac{1}{2}\right)^k  \min \left( \binom{k}{n-2} \left(  \frac{1}{d} \right)^{n-2} , 1\right).
$$
We prove that $p_{\text{decouple}} \leq \frac{1}{d^{(1-o(1))n}}$, 
in the case that $k > 2n\log(d)$, 
since $\min ( \binom{k}{n-2} \left(  \frac{1}{d} \right)^{n-2} , 1) \leq 1$, we have 
$$p_{\text{decouple}} \leq (\frac{1}{2})^k \leq e^{-n\log(d)} \leq \left(\frac{1}{d}\right)^n.$$
In the case that $k \leq 2n\log(d)$, 
$$p_{\text{decouple}} \leq \binom{k}{n-2} \left(\frac{1}{d} \right)^{n-2} \leq \left(\frac{ek}{n}\right)^n\left(\frac{1}{d} \right)^{n-2} \leq \left(2e \log(d)\right)^{n} \left(\frac{1}{d} \right)^{n-2}\leq \left(\frac{1}{d}\right)^{(1-o(1)) n}.$$

Now since, $p_{\text{decouple}} \leq \frac{1}{d^{(1-o(1))n}}$, using union bound on all the paths in tree $T$,the total probability that decoupling happens is at most $\frac{|T|}{d^{(1-o(1))n}}$, if $|T| < d^{(1-o(1))n}/3$, then the coupling fails with probability at most $1/3$ and the algorithm $\mathcal{A}$ distinguishes $q_1$ and $q_2$ with probability at most $1/3$. Therefore, any non-adaptive algorithm that distinguishes $q_1$ and $q_2$ should use a tree of size $\Omega(d^{(1-o(1))n})$.
\end{proof}

\begin{proof}[Proof of \cref{thm:gap}]
    Follows from combining \cref{lem:na-gap,lem:a-gap}.
\end{proof}

\section{Generality of Canonical Markov Chains}\label{sec:generality}

In this section, we show that identifying the initial state of a canonical Markov chain (\cref{def:canonical}) is (essentially) as hard as the same task in any partially observable Markov chain via a reduction. That takes into account both the query complexity of the rewinding strategy and the runtime of the algorithm. 
For a state identification algorithm $\mathcal{A}$ we write $\mathrm{T}(\mathcal{A})$ for the time it spends outside the oracle that samples the next state.

In the generalized version, a Markov chain $M = (\Omega, P)$ is complemented by an observation function $O: \Omega \to \Sigma$, i.e.\ every state $x \in \Omega$ in the Markov chain outputs an observation $O(x) \in \Sigma$.
The states remain hidden, but in every step, the algorithm can see the observation and use it to identify the initial states.
A canonical Markov chain can be expressed in this generalized version by letting the set of possible observations be $\Sigma = \{0, 1\}$, and letting $O(x) = \mathbf{1}(x=s)$ where $s$ is the sink state. That is, the observation made at any state is whether it is the sink state.
First, We show that any adaptive strategy taking $Q$ queries in the generalized model can be emulated in the canonical Markov chain using $O(\card{\Sigma}Q)$ queries. Secondly, we show that for non-adaptive algorithms, the gap between the query complexity of canonical and generalized Markov chains is at most near-quadratic. 

Formally, we define a reduction from a generalized Markov chain $M$ to a canonical Markov chain $\Mh$ below. See \cref{fig:gen,fig:gen-red} for an example of this reduction.

\begin{definition}[\textbf{Reduction to Canonical Markov chains}]

     Let $M = (\Omega, P)$ be a Markov chain with state space $\Omega$, transition matrix $P$, and an observation function $O: \Omega \to \Sigma$ that maps each state to an observable output. Let the emulated Markov chain of $M$ be a canonical Markov chain $\Mh = (\hat{\Omega}, \Ph)$ with state space $\hat{\Omega}$ and transition matrix $\Ph$ as follows:

    For each state $x \in \Omega$, there exists a unique corresponding state $\hat{x} = \phi(x)$ in $\hat{\Omega}$, where $\phi: \Omega \to \hat{\Omega}$ is an injective mapping. Let $k = \card{\Sigma}$ and and $\Sigma = \{\sigma_1, \ldots, \sigma_k\}$. Given a parameter $0<q<1$ we have
    \begin{enumerate}
        \item For every pair of states $x, x' \in \Omega$ with transition probability $p_{x, x'}$ in $P$, we introduce an extended path in $\Mh$ consisting of a series of intermediate states $d^{x,x'}_1, d^{x,x'}_2, \ldots, d^{x,x'}_{k-1}$ with transition probability 1 between consecutive states, such that $\hat{x}$ transitions to $\hat{x}'$ via these intermediate states. $\hat{x}$ transitions to $d^{x,x'}_1$ with probability $qp_{x, x'}$ and $d^{x,x'}_{k-1}$ transitions to $\hat{x}'$ with probability $1$.
        \item To emulate the observation function $O$, we introduce a set of \emph{special states} $ \sigma_1, \sigma_2, \ldots, \sigma_{k} =s$ in $\Mh$, each representing a unique observation in $\Sigma$. $s$ is the sink state of the canonical Markov chain $\Mh$. Each state $\hat{x} \in \hat{\Omega}$ in $\Mh$ transitions to a special state corresponding to its observation in $M$ with probability $1-q$. Special state $\sigma_i$ transitions to $\sigma_{i+1}$ with probability $1$  for $i \in [k-1]$.
    \end{enumerate}
\end{definition}

\begin{figure}[h]
    \centering
    \scalebox{0.8}{%
        \begin{tikzpicture}[->, >=stealth', shorten >=1pt, auto, node distance=2.5cm, scale=1, transform shape]

    \node[state] (s1) {$s_1$};
    \node[state] (s2) [below left of=s1] {$s_2$};
    \node[state] (s3) [below right of=s1] {$s_3$};

    \path[every loop/.style={min distance=10mm, looseness=10}]
        (s1) edge[bend right] node[pos=0.5, left] {$p_{12}$} (s2)
        
        (s2) edge[bend right] node[pos=0.5, left] {$p_{21}$} (s1)
        (s2) edge[bend right] node[pos=0.5, below] {$p_{23}$} (s3)
        (s3) edge[bend right] node[pos=0.5, right] {$p_{31}$} (s1);

\end{tikzpicture}
    }
    \caption{Represents a general Markov chain $M$ with observations $\Sigma = \{\sigma_1, \sigma_2, \sigma_3\} $, where $O_{s_1} = \sigma_1, O_{s_2} = \sigma_2$, and $O_{s_3} = \sigma_2$. }
    \label{fig:gen}
\end{figure}
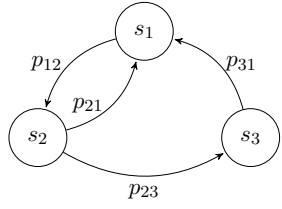

\begin{figure}[h]
    \centering
    \scalebox{0.8}{%
        \begin{tikzpicture}[->, >=stealth', shorten >=1pt, auto, node distance=3cm, scale=1, transform shape]

    \tikzstyle{every state}=[ fill={rgb,255:red,140; green,200; blue,220}, draw=black, text=black, minimum size=1.25cm]
    \tikzstyle{dummy}=[fill=none, draw=black, text=black, minimum size=0.4cm, circle]
    \tikzstyle{sink}=[rectangle, fill=gray!30, draw=black, minimum size=0.8cm, text centered]

    \node[state] (s1) {$
    \hat{s}_1$};
    \node[dummy] (d1) [left=1cm of s1] {$d^{s_1,s_2}_1$};
    \node[dummy] (d2) [below left=1cm  of d1] {$d^{s_1,s_2}_2$};
    \node[state] (s2) [below=1cm of d2] {$\hat{s}_2$};

    \node[dummy] (d3) [right=1cm of s1] {$d^{s_1,s_3}_1$};
    \node[dummy] (d4) [below right=1cm of d3] {$d^{s_1,s_3}_2$};
    \node[state] (s3) [below=1cm of d4] {$\hat{s}_3$};

    \node[dummy] (d5) [right = 1cm of s2] {$d^{s_2,s_1}_1$};
    \node[dummy] (d6) [above right=1cm  of d5] {$d^{s_2,s_1}_2$};

    \node[dummy] (d7) [below right =10mm of s2] {$d^{s_2,s_3}_1$};
    \node[dummy] (d8) [below left=1cm of s3] {$d^{s_2,s_3}_2$};

    \node[dummy] (si1) [right=1cm of d4] {$\sigma_1$};
    \node[dummy] (si2) [right=1cm of si1] {$\sigma_2$};
    \node[sink] (sink) [right=1cm of si2] {$s$};

    \path
        (s1) edge node[pos=0.5, above]{$qp_{12}$} (d1)
        (d1) edge node[pos=0.5, above]{$1$}(d2)
        (d2) edge node[pos=0.5, above left]{$1$}(s2)
        (s3) edge node[pos=0.5, right]{$qp_{31}$}(d4)
        (d4) edge node[pos=0.5, above right]{$1$}(d3)
        (d3) edge node[pos=0.5, above]{$1$}(s1)
        (s2) edge node[pos=0.5, above]{$qp_{23}$}(d5)
        (d5) edge node[pos=0.5,  below right]{$1$}(d6)
        (d6) edge node[pos=0.5, right]{$1$}(s1)
        (s2) edge node[pos=0.5, below left]{$qp_{13}$} (d7)
        (d7) edge node[pos=0.5, below]{$1$}(d8)
        (d8) edge node[pos=0.5, below right]{$1$}(s3)
        (si1) edge node[pos=0.5, below]{$1$}(si2)
        (si2) edge node[pos=0.5, below]{$1$}(sink)

        (s1) edge[bend left,gray] node[pos=0.5, right]{$1-q$}(si1)
        (s2) edge[bend left,gray] node[pos=0.5, below]{$1-q$}(si2)
         (s3) edge[bend right,gray] node[pos=0.5, below]{$1-q$}(si2);

\end{tikzpicture}
    }
    \caption{Represents a canonical Markov chain $\Mh$, with parameter $0<q<1$, where there is an injection $\phi$ that reduces distinguishing states $a$ and $a'$ in $M$ to distinguishing $\phi(a)$ and $\phi(a')$ in $\Mh$.}
    \label{fig:gen-red}
    
\end{figure}
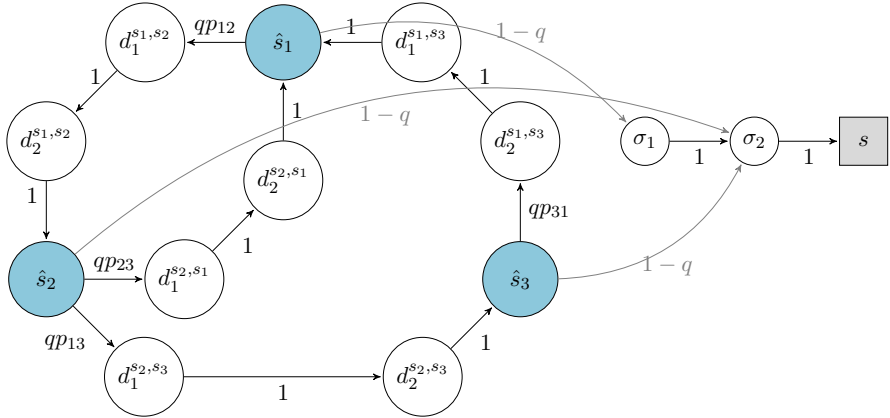

\subsection{Adaptive Query Complexity}
\begin{theorem}\label{thm:canonical}
    Let $M = (\Omega, P)$ be a Markov chain in the generalized model with an observation function $O: \Omega \to \Sigma$,
    Then, there exists a canonical Markov chain $\Mh = (\hat{\Omega}, \Ph)$  states and an injective mapping of the states $\phi: \Omega \to \hat{\Omega}$, 
    such that for any two states $a, a' \in \Omega$, 
    $$\AQC_\Mh(\phi(a), \phi(a')) = \Theta \big(\card{\Sigma}\AQC_M(a, a')\big).$$
\end{theorem}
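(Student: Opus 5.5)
We use the reduction defined just above with a constant $q$ (say $q=1/2$), and prove the two directions separately. Throughout, write $k=\card{\Sigma}$.

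\textbf{Upper bound: simulating a strategy for $M$ inside $\Mh$.} We show that a strategy $\mA$ for $M$ making $Q$ queries yields a strategy for $\Mh$ making $O(kQ)$ queries with essentially the same advantage. The key primitive is a ``read'' of a node $\hat x$: draw children of $\hat x$. A child goes to the special-state chain with probability $1-q$ and to some $d^{x,x'}_1$ with probability $q$.

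If $O(x)=\sigma_i$ and the child is $\sigma_i$, then following $k-i$ further deterministic steps reaches $s$, and the step count reveals $i$. If instead the child is $d^{x,x'}_1$, then following $k-1$ steps lands at $\hat x'$, which is not the sink, and the next step is not the sink either with probability at least $q$. To make this unambiguous, walk exactly $k$ steps from the child: a special-state child reaches $s$ within $k$ steps, while an intermediate-path child is at $\hat x'$ (non-sink) after $k-1$ steps. Hence $k$ steps suffice to decide which case occurred, and in the special case they reveal $O(x)$.

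To simulate drawing a child of $x$ in $M$, repeatedly draw a child of $\hat x$ and walk $k-1$ steps. With probability $q$ this lands at $\hat x'$ with $x'\sim P(x,\cdot)$; then read $O(x')$ by one more child of $\hat x'$ plus its walk. The expected cost per simulated query is $O(k/q\cdot 1/(1-q))=O(k)$. Retries are detectable because a special-state path hits the sink.

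The number of retries is random, so we truncate the simulation at $C\cdot kQ$ total queries. By Markov's inequality, failing to finish costs at most a small constant in advantage. Initial amplification of $\mA$ to advantage $0.9$ costs only a constant factor, so the success probability stays at least $2/3$. This gives $\AQC_\Mh(\phi(a),\phi(a'))=O(k\,\AQC_M(a,a'))$.

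\textbf{Lower bound: simulating a strategy for $\Mh$ inside $M$.} Given a strategy $\hat\mA$ for $\Mh$ with $\hat Q$ queries, we simulate it on $M$, keeping the hidden state of each $\Mh$-node as one of the following:
\begin{itemize}
\item an image $\hat x$ of a node $x$ in the $M$-tree;
\item a position on a path $d^{x,x'}_j$;
\item a special state $\sigma_i$.
\end{itemize}
When $\hat\mA$ draws a child of $\hat x$, we flip a $q$-coin. On tails, the child is $\sigma_{O(x)}$; we already know $O(x)$ from observing $x$, so no $M$-query is needed. On heads, we draw a real child $x'$ of $x$ in $M$, costing one query, and place the $\Mh$-child at $d^{x,x'}_1$.

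Moves along deterministic paths need no queries, and all $\Mh$-observations (sink or not) are computable from this bookkeeping. This simulation reproduces the exact distribution of $\hat\mA$'s transcript. Only children of $\hat x$-type nodes that land on a $d_1$ cost an $M$-query, so the number of $M$-queries is at most $\hat Q$.

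\textbf{Sharpening to the $\Theta(k)$ factor.} The simulation above gives only $\AQC_M\le\AQC_\Mh$. The main obstacle is the matching $\Omega(k\,\AQC_M)$ bound, i.e.\ showing that reaching a fresh useful $M$-node costs about $k$ queries in $\Mh$.

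We would charge each real $M$-query to the $k-1$ deterministic queries along $d^{x,x'}$ needed to reach $\hat x'$. The concern is that an $M$-query whose path is never walked is wasted, and such a query can be deferred. Concretely, we sample $x'$ lazily, only when $\hat\mA$ actually queries the child of $d^{x,x'}_{k-1}$ or a child of $\hat x'$. Before that point, the observations along the path are deterministic non-sink, and the identity $x'$ is irrelevant to them.

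With lazy sampling, every $M$-query is preceded by at least $k-1$ distinct $\Mh$-queries on its own path, so the number of $M$-queries is at most $\hat Q/(k-1)$. Combining the two directions gives the theorem. The care needed is in checking that lazy sampling preserves the joint distribution, which holds because path observations are independent of $x'$.
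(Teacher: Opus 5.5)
Your proposal is correct and follows the paper's proof: the upper bound is the same rejection-sampling emulation in $\Mh$ with $q=1/2$ and $O(|\Sigma|)$ expected queries per simulated step (your truncation-plus-amplification step supplies a detail the paper leaves implicit), and your lazy-sampling lower bound is a cleaner, online version of the paper's argument that charges each $M$-query to a fully traversed depth-$|\Sigma|$ path subtree of $\mAh$'s query tree. One small slip: the test separating special-state children from path children should check for the sink within $k-1$ steps after the child rather than $k$, since a path child can hit $s$ at step $k$ when $O(x')=\sigma_k$; your simulation paragraph in fact uses $k-1$, so nothing breaks.
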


\begin{proof}
    Let Markov chain $\Mh = (\hat{\Omega}, \Ph)$, be the emulation of $M$ with parameter $q = 1/2$.

    Now we show $$\AQC_\Mh(\ah, \ah') \leq O\big(\card{\Sigma} \AQC_M(a, a')\big),$$ where $a, a' \in S$, $\ah = \phi(a)$, and $\ah' = \phi(a')$.
    Given an algorithm $\mA$ that distinguishes between $a$ and $a'$ between $Q = \AQC_M(a, a')$ queries, we present an adaptive algorithm $\mAh$ in $\Mh$ using $O\big(\card{\Sigma} Q\big)$ queries as follows.
    
    Each node in the query tree of $\mA$ has a corresponding node in the query tree of $\mAh$.
    Initially, the roots correspond to each other,
    and if the root of $\mA$ is in state $x\in \Omega$,
    then we let the root of $\mAh$ be in state $\hat{x} = \phi(x)$.
    Whenever a new state is drawn by $\mA$, we need to emulate the observation.
    To do so, $\mAh$ repeatedly draws paths of length at most $\card{\Sigma}$ until one of them reaches the sink.
    The number of paths is constant in expectation since we rewind a special state on the first step of the path with probability $\frac{1}{2}$.
    The observation can be inferred afterward since it is equal to the length of the path that reaches the sink.
    With the observation in hand, $\mA$ would choose another node to draw a child from.
    To emulate this, $\mAh$ repeatedly draws paths of length at most $\card{\Sigma}$ until one of them \emph{does not} reach the sink.
    The end of the path corresponds to the child node.
    
    Observe that starting at a node with state $\hat{x}$ (corresponding to $x$),
    the probability that the node at the end of the path is in $\hat{x}'$ (corresponding to $x'$) is $p_{xx'}$ because we are conditioning on the fact that the path did not move to a special state on the first step. The expected number of these paths is also constant. Therefore, each query of $\mA$ can be emulated by $O(\card{\Sigma})$ queries of $\mAh$.

    To complete the proof, we aim to show that 
$$
\Omega\big(\card{\Sigma} \cdot \AQC_M(a, a')\big) \leq \AQC_\Mh(\phi(a), \phi(a')).
$$

Let $\mAh$ be a rewinding algorithm on $\Mh$ that starts at $\phi(a)$ or $\phi(a')$ and distinguishes these two states. Let $t'$ be the number of queries made by $\mAh$, and let $X'_0, \dots, X'_{t'}$ be the observed states in $\Mh$ with corresponding observations $Z'_0, \dots, Z'_{t'}$, where for any $i \in [t']$, $Z'_i \in \{0,1\}$. If $Z'_i = 1$, then $X'_i$ is a sink state.

    Let us define $L'_i$ as a set of observed states that are in distance $i$ from the root of the query tree of rewinding algorithm $\mA$. More formally we have $L'_0 = \{X'_0\}$, and for any $i>0$, let $L'_i$ be the set of observed states that are a child of any states in $L'_{i-1}$.  

    Now we aim to build a rewinding algorithm $\mA$ using $\mAh$.
    Let $X_0 = X'_0 $ be the starting state of the algorithm For any maximal disjoint subtree $T'$ that is rooted at $L'_1$ and $T' \subset L'_1 \cup \dots \cup L'_{\card{\Sigma}}$, 
    if there exists a state in $T'$ with observation $1$, or the depth of $T$ is less than $\card{\Sigma}$
    we do nothing, otherwise we rewind a child of $X_0$.
    Similarly, let $L_i$ denote the states rewinded in $\mA$, we set $L_0 = \{X_0\}$ and $L_1$ as the set of rewinded children of $X_0$.

Now we build $L_i$ inductively such that there is a function $S$ that for any state $X \in L_i$, we have $S(X)$ as a set of non-sink states in $L'_{i \cdot \card{\Sigma}}$, and each non-sink states in $L'_{i \cdot \card{\Sigma}}$ is covered by a state in $L_i$.

    The base Case is $i = 0$.
     For any $i > 0$, given any state $X \in L_i$ and $S(X)$ we define $L_{i+1}$ as follows.
     Consider the maximal disjoint subtrees $T$ that is rooted in a state $r$ in $L'_{i \cdot \card{\Sigma} + 1}$ and $r$ is a child of a state in $S(X)$. 
     Also let $T$ be a sub set of $L'_{i \cdot \card{\Sigma} + 1} \cup \dots \cup L'_{(i+1) \cdot \card{\Sigma}}$. 
     Let the set of all such subtrees $T_i$. For each subtree $T \in T_i$,
     if any state in the subtree has an observation $1$ (i.e., it is a sink state), or if the subtree has depth less than $\card{\Sigma}$, do nothing.
     Otherwise, rewind a child of $X$ in $\mA$ corresponding to the subtree, and include this child in $L_{i+1}$.

    Now we prove $\mA$ can distinguish between $a$ and $a'$. First, note that any state $y\in L'_{ik}$, is either the sink or there exists a state $x  \in \Omega$ where $\phi(x) = y$. Since the length of transition paths for any two states $\phi(x)$ and $\phi(x')$ is $k$ for any $x,x' \in \Omega$.  Also, since the transition paths are unique, any subtree $T\in T_i$ has the same states in one level of tree. 
    Therefore, any such tree reveals a transition from $x$ to $x'$ if it starts at a child of $\phi(x)$ and at least  one path in $T$ ends at $\phi(x')$. 
    
    In the case that $T$ reaches to the sink, $\mA$ simply uses the observation of the last state. Therefore, if $\mAh$ can distinguish $\phi(a)$ and $\phi(a')$, $\mA$ can distinguish between $a$ and $a'$.  
    Note that the size of any subtree of depth $\card{\Sigma}$ is at least $\card{\Sigma}$. Since for any rewinded state in $\mA$ there exists  subtree of depth $\card{\Sigma}$ in $\mAh$, $\mA$ can distinguish $a$ and $a'$ in $\frac{\AQC_\Mh(\phi(a), \phi(a'))}{\card{\Sigma}}$. 
\end{proof}

\begin{remark}
        This reduction preserves running time up to the same multiplicative overhead: for any adaptive state identification algorithm $\mathcal{A}$ for $M$ that makes $Q$ queries and runs in time $Time(\mathcal{A})$, there is an algorithm $\mAh$ for $\Mh$ that runs in time $Time(\mathcal{A}) + O(kQ)$, where $k=\card{\Sigma}$. Also, any adaptive algorithm $\mAh$ for $\Mh$ with $Q'$ queries and time $Time(\mAh)$ yields an algorithm for $M$ with $\Omega(Q'/k)$ queries and time $Time(\mAh) + O(Q')$.
\end{remark}

\subsection{Non-Adaptive Query Complexity}

\begin{theorem}\label{thm:NA-canonical}
    Let $M = (\Omega, P)$ be a Markov chain in the generalized model with an observation function $O: \Omega \to \Sigma$,
    Then, there exists a canonical Markov chain $\Mh = (\hat{\Omega}, \Ph)$  states and an injective mapping of the states $\phi: \Omega \to \hat{\Omega}$, 
    such that for any two states $a, a' \in \Omega$, 
    $$\Omega\big(\card{\Sigma}\NAQC_M(a, a')\big) \leq  \NAQC_\Mh(\phi(a), \phi(a')) \leq \Tilde{O}\big(\card{\Sigma}\NAQC_M(a, a')^2\big).$$
\end{theorem}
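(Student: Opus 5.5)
We use the same emulated chain $\Mh$ as in \cref{thm:canonical}, but choose the parameter $q$ more carefully, and prove the two inequalities separately.

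\textbf{Lower bound.} We reuse the simulation argument from the second half of the proof of \cref{thm:canonical}, now for a non-adaptive $\mAh$. Fix the (observation-independent) query tree of $\mAh$. Group its levels into blocks of $\card{\Sigma}$ consecutive levels. The children of $\phi(x)$ in $\Mh$ are either the unique transition path to some $\phi(x')$, or the special path $\sigma_i,\ldots,s$, which reveals $O(x)$ by the time it hits the sink. Hence every maximal depth-$\card{\Sigma}$ subtree rooted at a child of an emulated node carries at most the information of one draw $x\to x'$ together with the observation $O(x)$.

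A non-adaptive strategy $\mA$ in $M$ can therefore open one child for every such subtree of the fixed tree. Whether the subtree follows a transition path or a special path is decided by an independent $q$-coin, which $\mA$ simulates with its own randomness. Since $\mA$'s tree is determined by the fixed tree of $\mAh$ and these internal coins, it remains non-adaptive (modulo the standard step of fixing the random tape and averaging).

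Each such subtree has at least $\card{\Sigma}$ nodes, so $\mA$ uses at most $\NAQC_\Mh/\card{\Sigma}$ queries. This gives $\Omega(\card{\Sigma}\NAQC_M)\le \NAQC_\Mh$.

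\textbf{Upper bound.} Let $\mA$ be a non-adaptive strategy for $M$ with tree size $Q=\NAQC_M(a,a')$. The adaptive emulation of \cref{thm:canonical} cannot be used here, because it rewinds according to whether a path hit a special state. Instead we emulate non-adaptively by \emph{oversampling}.

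For each node $v$ of $\mA$'s tree, the emulating tree contains:
\begin{itemize}
\item $m=\Theta(\log Q)$ special-path probes of length $\card{\Sigma}$ hanging from $\phi(v)$. Each probe reaches the sink with probability $1-q$, and at least one does with probability $1-q^{m}$; the length of a probe that reaches the sink reveals $O(v)$.
\item For each child $u$ of $v$ in $\mA$'s tree, $m$ disjoint candidate paths of length $\card{\Sigma}$ from $\phi(v)$. The first candidate that does not hit the sink is declared to be $u$. Since the tree is fixed in advance, this selection is made only in post-processing.
\end{itemize}

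The obstacle is that $\mA$'s descendants of $u$ must hang off the selected candidate, which is unknown in advance. We therefore replicate the entire subtree below every candidate. This multiplies the size by $m$ per level, i.e.\ $m^{\mathrm{depth}}$, which can be exponential in the depth.

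To avoid this blow-up, we take $q$ close to $1$, say $q=1-1/(2Q)$. Then every first candidate follows a genuine transition path with probability at least $1/2$ by a union bound over the $Q$ nodes. On the other hand, detecting an observation now requires $\Theta(Q\log Q)$ probes per node, because each probe reaches the sink only with probability $1/(2Q)$. The total size is then $Q\cdot \card{\Sigma}\cdot O(Q\log Q)=\tilde O(\card{\Sigma}Q^2)$, and all observations are recovered with constant probability.

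The success probability drops from $2/3$ to a constant. We restore it by $O(1)$ independent repetitions followed by a majority vote. Amplification is fine here because the repetitions are again a non-adaptive tree, so the resulting strategy witnesses $\NAQC_\Mh$. I expect the choice of $q$ to balance candidate failures against probe cost to be the main delicate step.
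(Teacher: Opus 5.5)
Your proposal is correct and follows essentially the same route as the paper. The lower bound is the same block-of-$\card{\Sigma}$-levels simulation of a non-adaptive $\mAh$ by a non-adaptive strategy on $M$. The upper bound is the paper's oversampling: $q = 1-\Theta(1/Q)$, $\Theta(Q\log Q)$ probes of length $\card{\Sigma}$ per node, and a single continuation path fixed in advance, for $\tilde O(\card{\Sigma}Q^2)$ queries.

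The differences are minor. The paper takes $q = 1-1/(c_1Q)$ with $c_1$ large so that no amplification is needed. You instead accept failure probability $1/2$ and amplify. That works because the failure event is observable and independent of the initial state, so you can guess at random when it occurs. As in the paper, your $\Mh$ depends on $Q=\NAQC_M(a,a')$, i.e.\ on the pair. This is slightly weaker than the ``one $\Mh$ for all pairs'' quantifier order in the statement.
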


\begin{proof}
     Consider a non-adaptive algorithm $\mathcal{A}$ operating on a Markov chain $M$ to distinguish between the states $a$ and $a'$. Given a fixed constant $c_1>0$ and $Q$ as the query complexity of $\mathcal{A}$, let Markov chain $\Mh = (\hat{\Omega}, \Ph)$, be the emulation of $M$ with parameter $q = 1-c_1Q$.

     First, we prove  
     $$\NAQC_\Mh(\phi(a), \phi(a')) \leq \Tilde{O}\big(\card{\Sigma} Q^2\big).$$

We build a non-adaptive algorithm $\mAh$ on $\Mh$ that distinguishes $\phi(a)$ and $\phi(a')$ with $\Tilde{O}(\card{\Sigma}Q^2)$ queries. For any rewinded state $X$ to $Y$ in $\mA$ and a fixed constant $c_2> c_1$, we rewind $c_2Q\log(Q)$ paths of size $\card{\Sigma}$. 
We pick one of such paths at random to rewind from next, where the last state of the picked random path represents $\hat{Y}$. Now, at any point that $\mA$ rewinds $Y$, $\mAh$ will rewind $\hat{Y}$.
Therefore for any query in $\mA$, $\mAh$ produces $c_2\card{\Sigma}Q\log(Q)$ queries. To get to the observation of each state in $\hat{\Omega}$, $\mAh$ needs to query such paths such that at least one of the paths gets to the special states. The probability that there exists a state queried in $\mA$ whose random paths in $\mAh$ do not include a special state is 
    $$Q(1- \frac{1}{c_1Q})^{c_2Q\log(Q)} \leq Q\exp{(-\frac{c_2Q\log(Q)}{c_1Q})}= Q^{1-c_2/c_1}.$$

The probability that among the $Q$ picked random paths at least one of the random paths is the special path is $(\frac{1}{c_1Q})Q = \frac{1}{c_1}.$
By setting $c_2>c_1>0$ such that $\frac{1}{c_1}, Q^{1-c_2/c_1} < 1/10$, the algorithm distinguishes $\phi(a)$ and $\phi(a')$ with high probability.

  To complete the proof, we show that 
$$
\Omega\big(\card{\Sigma} \cdot \NAQC_M(a, a')\big) \leq \NAQC_\Mh(\phi(a), \phi(a')).
$$

The proof is similar to the lower bound of reduction for the adaptive algorithm. However, we state the proof for the sake of completeness.

Let $\mAh$ be a rewinding algorithm on $\Mh$ that starts at $\phi(a)$ or $\phi(a')$ and distinguishes these two states. Let $t'$ be the number of queries made by $\mAh$, and let $X'_0, \dots, X'_{t'}$ be the observed states in $\Mh$.

    Let us define $L'_i$ as a set of observed states that are in distance $i$ from the root of the query tree of rewinding algorithm $\mA$. More formally we have $L'_0 = \{X'_0\}$, and for any $i>0$, let $L'_i$ be the set of observed states that are a child of any states in $L'_{i-1}$.  

    Now we aim to build a rewinding algorithm $\mA$ using $\mAh$.
    Let $X_0 = X'_0 $ be the starting state of the algorithm For any maximal disjoint subtree $T'$ that is rooted at $L'_1$ and $T' \subset L'_1 \cup \dots \cup L'_{\card{\Sigma}}$, 
    if the depth of $T$ is less than $\card{\Sigma}$
    we do nothing, otherwise we rewind a child of $X_0$.
    Similarly, let $L_i$ denote the states rewinded in $\mA$, we set $L_0 = \{X_0\}$ and $L_1$ as the set of rewinded children of $X_0$.

Now we build $L_i$ inductively such that there is a function $S$ that for any state $X \in L_i$, we have $S(X)$ as a set of non-sink states in $L'_{i \cdot \card{\Sigma}}$, and each non-sink states in $L'_{i \cdot \card{\Sigma}}$ is covered by a state in $L_i$.

    The base Case is $i = 0$.
     For any $i > 0$, given any state $X \in L_i$ and $S(X)$ we define $L_{i+1}$ as follows.
     Consider the maximal disjoint subtrees $T$ that is rooted in a state $r$ in $L'_{i \cdot \card{\Sigma} + 1}$ and $r$ is a child of a state in $S(X)$. 
     Also let $T$ be a sub set of $L'_{i \cdot \card{\Sigma} + 1} \cup \dots \cup L'_{(i+1) \cdot \card{\Sigma}}$. 
     Let the set of all such subtrees $T_i$. For each subtree $T \in T_i$,
      if the subtree has depth less than $\card{\Sigma}$, do nothing.
     Otherwise, rewind a child of $X$ in $\mA$ corresponding to the subtree, and include this child in $L_{i+1}$.

    Now we prove $\mA$ can distinguish between $a$ and $a'$. First, note that any state $y\in L'_{ik}$, is either the sink or there exists a state $x  \in \Omega$ where $\phi(x) = y$. Since the length of transition paths for any two states $\phi(x)$ and $\phi(x')$ is $k$ for any $x,x' \in \Omega$.  Also, since the transition paths are unique, any subtree $T\in T_i$ has the same states in one level of the tree. 
    Therefore, any such tree reveals a transition from $x$ to $x'$ if it starts at a child of $\phi(x)$ and at least one path in $T$ ends at $\phi(x')$. 
    
    In the case that $T$ reaches the sink, $\mA$ simply uses the observation of the last state. Therefore, if $\mAh$ can distinguish $\phi(a)$ and $\phi(a')$, $\mA$ can distinguish between $a$ and $a'$.  
    Note that the size of any subtree of depth $\card{\Sigma}$ is at least $\card{\Sigma}$. Since for any rewinded state in $\mA$ there exists  subtree of depth $\card{\Sigma}$ in $\mAh$, $\mA$ can distinguish $a$ and $a'$ in $\frac{\NAQC_\Mh(\phi(a), \phi(a'))}{\card{\Sigma}}$. 
\end{proof}

\begin{remark}
        If a non-adaptive algorithm $\mathcal{A}$ for Markov chain $M$ uses $Q$ queries and runs in time $Time$, then the induced non-adaptive algorithm $\widehat{\mathcal{A}}$ for $\Mh$ can be implemented with $\tilde O(kQ^2)$ queries and total running time $Time(\mathcal{A}) + \tilde O(kQ^2)$, where $k=\card{\Sigma}$. Also, any non-adaptive $\mAh$ for $\Mh$ with $Q'$ queries and time $Time(\mAh)$ yields a non-adaptive algorithm for $M$ with $\Omega(Q'/k)$ queries and running time $Time(\mAh)+ O(Q')$.
\end{remark}

\clearpage
\bibliographystyle{plainnat}

\bibliography{references}

@inproceedings{BehnezhadRR-STOC23,
  author       = {Soheil Behnezhad and
                  Mohammad Roghani and
                  Aviad Rubinstein},
  title        = {Sublinear Time Algorithms and Complexity of Approximate Maximum Matching},
  booktitle    = {Proceedings of the 55th Annual {ACM} Symposium on Theory of Computing,
                  {STOC} 2023, Orlando, FL, USA, June 20-23, 2023},
  pages        = {267--280},
  year         = {2023},
  url          = {https://doi.org/10.1145/3564246.3585231},
  doi          = {10.1145/3564246.3585231},
  bibsource    = {dblp computer science bibliography, https://dblp.org}
}

@inproceedings{BehnezhadRR-FOCS23,
  author       = {Soheil Behnezhad and
                  Mohammad Roghani and
                  Aviad Rubinstein},
  title        = {Local Computation Algorithms for Maximum Matching: New Lower Bounds},
  booktitle    = {64th {IEEE} Annual Symposium on Foundations of Computer Science, {FOCS}
                  2023, Santa Cruz, CA, USA, November 6-9, 2023},
  pages        = {2322--2335},
  year         = {2023},
  url          = {https://doi.org/10.1109/FOCS57990.2023.00143},
  doi          = {10.1109/FOCS57990.2023.00143},
  bibsource    = {dblp computer science bibliography, https://dblp.org}
}

@inproceedings{BehnezhadRR-STOC24,
  author       = {Soheil Behnezhad and
                  Mohammad Roghani and
                  Aviad Rubinstein},
  title        = {Approximating Maximum Matching Requires Almost Quadratic Time},
  booktitle    = {Proceedings of the 56th Annual {ACM} Symposium on Theory of Computing,
                  {STOC} 2024, Vancouver, BC, Canada, June 24-28, 2024},
  pages        = {444--454},
  year         = {2024},
  url          = {https://doi.org/10.1145/3618260.3649785},
  doi          = {10.1145/3618260.3649785},
  bibsource    = {dblp computer science bibliography, https://dblp.org}
}

@inproceedings{AldousVazirani,
  author       = {David J. Aldous and
                  Umesh V. Vazirani},
  title        = {"Go With the Winners" Algorithms},
  booktitle    = {35th Annual Symposium on Foundations of Computer Science, Santa Fe,
                  New Mexico, USA, 20-22 November 1994},
  pages        = {492--501},
  publisher    = {{IEEE} Computer Society},
  year         = {1994},
  url          = {https://doi.org/10.1109/SFCS.1994.365742},
  doi          = {10.1109/SFCS.1994.365742},
  bibsource    = {dblp computer science bibliography, https://dblp.org}
}

@inproceedings{CanettiGGM00,
  author       = {Ran Canetti and
                  Oded Goldreich and
                  Shafi Goldwasser and
                  Silvio Micali},
  editor       = {F. Frances Yao and
                  Eugene M. Luks},
  title        = {Resettable zero-knowledge (extended abstract)},
  booktitle    = {Proceedings of the Thirty-Second Annual {ACM} Symposium on Theory
                  of Computing, May 21-23, 2000, Portland, OR, {USA}},
  pages        = {235--244},
  publisher    = {{ACM}},
  year         = {2000},
  url          = {https://doi.org/10.1145/335305.335334},
  doi          = {10.1145/335305.335334},
  bibsource    = {dblp computer science bibliography, https://dblp.org}
}

@inproceedings{Schoning99,
  author       = {Uwe Sch{\"{o}}ning},
  title        = {A Probabilistic Algorithm for k-SAT and Constraint Satisfaction Problems},
  booktitle    = {40th Annual Symposium on Foundations of Computer Science, {FOCS} '99,
                  17-18 October, 1999, New York, NY, {USA}},
  pages        = {410--414},
  publisher    = {{IEEE} Computer Society},
  year         = {1999},
  url          = {https://doi.org/10.1109/SFFCS.1999.814612},
  doi          = {10.1109/SFFCS.1999.814612},
  bibsource    = {dblp computer science bibliography, https://dblp.org}
}

@article{GoldwasserMR89,
  author       = {Shafi Goldwasser and
                  Silvio Micali and
                  Charles Rackoff},
  title        = {The Knowledge Complexity of Interactive Proof Systems},
  journal      = {{SIAM} J. Comput.},
  volume       = {18},
  number       = {1},
  pages        = {186--208},
  year         = {1989},
  url          = {https://doi.org/10.1137/0218012},
  doi          = {10.1137/0218012},
  bibsource    = {dblp computer science bibliography, https://dblp.org}
}

@inproceedings{AssadiCK19,
  author       = {Sepehr Assadi and
                  Yu Chen and
                  Sanjeev Khanna},
  editor       = {Timothy M. Chan},
  title        = {Sublinear Algorithms for ({\(\Delta\)} + 1) Vertex Coloring},
  booktitle    = {Proceedings of the Thirtieth Annual {ACM-SIAM} Symposium on Discrete
                  Algorithms, {SODA} 2019, San Diego, California, USA, January 6-9,
                  2019},
  pages        = {767--786},
  publisher    = {{SIAM}},
  year         = {2019},
  url          = {https://doi.org/10.1137/1.9781611975482.48},
  doi          = {10.1137/1.9781611975482.48},
  bibsource    = {dblp computer science bibliography, https://dblp.org}
}

@inproceedings{Assadi022,
  author       = {Sepehr Assadi and
                  Chen Wang},
  editor       = {Mark Braverman},
  title        = {Sublinear Time and Space Algorithms for Correlation Clustering via
                  Sparse-Dense Decompositions},
  booktitle    = {13th Innovations in Theoretical Computer Science Conference, {ITCS}
                  2022, January 31 - February 3, 2022, Berkeley, CA, {USA}},
  series       = {LIPIcs},
  volume       = {215},
  pages        = {10:1--10:20},
  publisher    = {Schloss Dagstuhl - Leibniz-Zentrum f{\"{u}}r Informatik},
  year         = {2022},
  url          = {https://doi.org/10.4230/LIPIcs.ITCS.2022.10},
  doi          = {10.4230/LIPICS.ITCS.2022.10},
  bibsource    = {dblp computer science bibliography, https://dblp.org}
}

@inproceedings{CzumajS04,
  author       = {Artur Czumaj and
                  Christian Sohler},
  editor       = {L{\'{a}}szl{\'{o}} Babai},
  title        = {Estimating the weight of metric minimum spanning trees in sublinear-time},
  booktitle    = {Proceedings of the 36th Annual {ACM} Symposium on Theory of Computing,
                  Chicago, IL, USA, June 13-16, 2004},
  pages        = {175--183},
  publisher    = {{ACM}},
  year         = {2004},
  url          = {https://doi.org/10.1145/1007352.1007386},
  doi          = {10.1145/1007352.1007386},
  bibsource    = {dblp computer science bibliography, https://dblp.org}
}

@inproceedings{ChazelleRT01,
  author       = {Bernard Chazelle and
                  Ronitt Rubinfeld and
                  Luca Trevisan},
  editor       = {Fernando Orejas and
                  Paul G. Spirakis and
                  Jan van Leeuwen},
  title        = {Approximating the Minimum Spanning Tree Weight in Sublinear Time},
  booktitle    = {Automata, Languages and Programming, 28th International Colloquium,
                  {ICALP} 2001, Crete, Greece, July 8-12, 2001, Proceedings},
  series       = {Lecture Notes in Computer Science},
  volume       = {2076},
  pages        = {190--200},
  publisher    = {Springer},
  year         = {2001},
  url          = {https://doi.org/10.1007/3-540-48224-5\_16},
  doi          = {10.1007/3-540-48224-5\_16},
  bibsource    = {dblp computer science bibliography, https://dblp.org}
}

@inproceedings{CKT23,
  author       = {Yu Chen and
                  Sanjeev Khanna and
                  Zihan Tan},
  editor       = {Nikhil Bansal and
                  Viswanath Nagarajan},
  title        = {Query Complexity of the Metric Steiner Tree Problem},
  booktitle    = {Proceedings of the 2023 {ACM-SIAM} Symposium on Discrete Algorithms,
                  {SODA} 2023, Florence, Italy, January 22-25, 2023},
  pages        = {4893--4935},
  publisher    = {{SIAM}},
  year         = {2023},
  url          = {https://doi.org/10.1137/1.9781611977554.ch179},
  doi          = {10.1137/1.9781611977554.CH179},
  bibsource    = {dblp computer science bibliography, https://dblp.org}
}

@inproceedings{BehnezhadRRS-ICALP24,
  author       = {Soheil Behnezhad and
                  Mohammad Roghani and
                  Aviad Rubinstein and
                  Amin Saberi},
  editor       = {Karl Bringmann and
                  Martin Grohe and
                  Gabriele Puppis and
                  Ola Svensson},
  title        = {Sublinear Algorithms for {TSP} via Path Covers},
  booktitle    = {51st International Colloquium on Automata, Languages, and Programming,
                  {ICALP} 2024, July 8-12, 2024, Tallinn, Estonia},
  series       = {LIPIcs},
  volume       = {297},
  pages        = {19:1--19:16},
  publisher    = {Schloss Dagstuhl - Leibniz-Zentrum f{\"{u}}r Informatik},
  year         = {2024},
  url          = {https://doi.org/10.4230/LIPIcs.ICALP.2024.19},
  doi          = {10.4230/LIPICS.ICALP.2024.19},
  bibsource    = {dblp computer science bibliography, https://dblp.org}
}

@inproceedings{CKT-ICALP23,
  author       = {Yu Chen and
                  Sanjeev Khanna and
                  Zihan Tan},
  editor       = {Kousha Etessami and
                  Uriel Feige and
                  Gabriele Puppis},
  title        = {Sublinear Algorithms and Lower Bounds for Estimating {MST} and {TSP}
                  Cost in General Metrics},
  booktitle    = {50th International Colloquium on Automata, Languages, and Programming,
                  {ICALP} 2023, July 10-14, 2023, Paderborn, Germany},
  series       = {LIPIcs},
  volume       = {261},
  pages        = {37:1--37:16},
  publisher    = {Schloss Dagstuhl - Leibniz-Zentrum f{\"{u}}r Informatik},
  year         = {2023},
  url          = {https://doi.org/10.4230/LIPIcs.ICALP.2023.37},
  doi          = {10.4230/LIPICS.ICALP.2023.37},
  bibsource    = {dblp computer science bibliography, https://dblp.org}
}

@inproceedings{Behnezhad21,
  author       = {Soheil Behnezhad},
  title        = {Time-Optimal Sublinear Algorithms for Matching and Vertex Cover},
  booktitle    = {62nd {IEEE} Annual Symposium on Foundations of Computer Science, {FOCS}
                  2021, Denver, CO, USA, February 7-10, 2022},
  pages        = {873--884},
  publisher    = {{IEEE}},
  year         = {2021},
  url          = {https://doi.org/10.1109/FOCS52979.2021.00089},
  doi          = {10.1109/FOCS52979.2021.00089},
  bibsource    = {dblp computer science bibliography, https://dblp.org}
}

@inproceedings{YoshidaYI09,
  author       = {Yuichi Yoshida and
                  Masaki Yamamoto and
                  Hiro Ito},
  editor       = {Michael Mitzenmacher},
  title        = {An improved constant-time approximation algorithm for maximum matchings},
  booktitle    = {Proceedings of the 41st Annual {ACM} Symposium on Theory of Computing,
                  {STOC} 2009, Bethesda, MD, USA, May 31 - June 2, 2009},
  pages        = {225--234},
  publisher    = {{ACM}},
  year         = {2009},
  url          = {https://doi.org/10.1145/1536414.1536447},
  doi          = {10.1145/1536414.1536447},
  bibsource    = {dblp computer science bibliography, https://dblp.org}
}

@inproceedings{NguyenO08,
  author       = {Huy N. Nguyen and
                  Krzysztof Onak},
  title        = {Constant-Time Approximation Algorithms via Local Improvements},
  booktitle    = {49th Annual {IEEE} Symposium on Foundations of Computer Science, {FOCS}
                  2008, October 25-28, 2008, Philadelphia, PA, {USA}},
  pages        = {327--336},
  publisher    = {{IEEE} Computer Society},
  year         = {2008},
  url          = {https://doi.org/10.1109/FOCS.2008.81},
  doi          = {10.1109/FOCS.2008.81},
  bibsource    = {dblp computer science bibliography, https://dblp.org}
}

@inproceedings{BhattacharyaKS23,
  author       = {Sayan Bhattacharya and
                  Peter Kiss and
                  Thatchaphol Saranurak},
  title        = {Dynamic  (1+{$\varepsilon$})-Approximate Matching Size in Truly Sublinear
                  Update Time},
  booktitle    = {64th {IEEE} Annual Symposium on Foundations of Computer Science, {FOCS}
                  2023, Santa Cruz, CA, USA, November 6-9, 2023},
  pages        = {1563--1588},
  publisher    = {{IEEE}},
  year         = {2023},
  url          = {https://doi.org/10.1109/FOCS57990.2023.00095},
  doi          = {10.1109/FOCS57990.2023.00095},
  bibsource    = {dblp computer science bibliography, https://dblp.org}
}

@inproceedings{GoldreichR97,
  author       = {Oded Goldreich and
                  Dana Ron},
  editor       = {Frank Thomson Leighton and
                  Peter W. Shor},
  title        = {Property Testing in Bounded Degree Graphs},
  booktitle    = {Proceedings of the Twenty-Ninth Annual {ACM} Symposium on the Theory
                  of Computing, El Paso, Texas, USA, May 4-6, 1997},
  pages        = {406--415},
  publisher    = {{ACM}},
  year         = {1997},
  url          = {https://doi.org/10.1145/258533.258627},
  doi          = {10.1145/258533.258627},
  bibsource    = {dblp computer science bibliography, https://dblp.org}
}

@article{BenderR00,
  title={Testing properties of directed graphs: acyclicity and connectivity},
  author={Bender, Michael A and Ron, Dana},
  journal={Random Structures \& Algorithms},
  volume={20},
  number={2},
  pages={184--205},
  year={2002},
  publisher={Wiley Online Library}
}

@article{DumitriuTW03,
  author       = {Ioana Dumitriu and
                  Prasad Tetali and
                  Peter Winkler},
  title        = {On Playing Golf with Two Balls},
  journal      = {{SIAM} J. Discret. Math.},
  volume       = {16},
  number       = {4},
  pages        = {604--615},
  year         = {2003},
  url          = {https://doi.org/10.1137/S0895480102408341},
  doi          = {10.1137/S0895480102408341},
  bibsource    = {dblp computer science bibliography, https://dblp.org}
}

@article{JansonP12,
  author       = {Svante Janson and
                  Yuval Peres},
  title        = {Hitting Times for Random Walks with Restarts},
  journal      = {{SIAM} J. Discret. Math.},
  volume       = {26},
  number       = {2},
  pages        = {537--547},
  year         = {2012},
  url          = {https://doi.org/10.1137/100796352},
  doi          = {10.1137/100796352},
  bibsource    = {dblp computer science bibliography, https://dblp.org}
}

@inproceedings{MitrovicRS24,
  title={Locally Computing Edge Orientations},
  author={Mitrovi{\'c}, Slobodan and Rubinfeld, Ronitt and Singhal, Mihir},
  booktitle={32nd Annual European Symposium on Algorithms (ESA 2024)},
  pages={89--1},
  year={2024},
  organization={Schloss Dagstuhl--Leibniz-Zentrum f{\"u}r Informatik}
}

@inproceedings{Shah26,
  title={Sublinear-Time Lower Bounds for Approximating Matching Size using Non-Adaptive Queries},
  author={Shah, Vihan},
  booktitle={Proceedings of the 2026 Annual ACM-SIAM Symposium on Discrete Algorithms (SODA)},
  pages={5027--5065},
  year={2026},
  organization={SIAM}
}

@inproceedings{AzarmehrBGS25,
  author       = {Amir Azarmehr and
                  Soheil Behnezhad and
                  Alma Ghafari and
                  Madhu Sudan},
  title        = {Lower Bounds for Non-adaptive Local Computation Algorithms},
  booktitle    = {66th {IEEE} Annual Symposium on Foundations of Computer Science, {FOCS}
                  2025, Sydney, Australia, December 14-17, 2025},
  year         = {2025},
}

\appendix
\section{Further Connections to Sublinear-Time Graph Algorithms}
\label{sec:sublinear}

In this section, we further elaborate on the connections to sublinear-time graph algorithms by giving a concrete example where the lower bound is captured by Markov chains with rewinding.

We examine the lower bound of \citet*{BenderR00} for testing acyclicity in directed graphs.
Given adjacency list access, the algorithm must determine whether the input graph is acyclic or $\epsilon$-far from acyclic (i.e., at least an $\epsilon$-fraction of the edges must be removed so that no directed cycles remain in the graph) with a probability of $\frac{2}{3}$.
They prove that, for $\epsilon \leq \frac{1}{16}$, any such algorithm requires at least $\Omega(n^{1/3})$ queries to the adjacency list.
We show that this lower bound is captured by our model.
First, we examine the case where the algorithm has access only to the outgoing edges of a vertex, and later extend it to a setting where the algorithm has access to both the outgoing and incoming adjacency lists at the same time.

We start by reviewing the construction of \cite{BenderR00}.
By Yao's minimax lemma, to prove the lower bound, it suffices to construct two input distributions $\DYes$ and $\DNo$, respectively consisting of acyclic and far-from-acyclic graphs, such that any deterministic algorithm with $O(n^{1/3})$ queries fails to distinguish between them with a sufficiently large probability.
The construction uses a constant degree-parameter $d \geq 128$ shared between $\DYes$ and $\DNo$.

The $\DNo$ distribution constructs a graph by randomly partitioning the vertices into two groups, $V_U$ and $V_D$, each of size $n/2$. For the edges, a bipartite $d$-regular graph between $V_U$ and $V_D$ is chosen uniformly at random and directed towards $V_D$, and another one is chosen similarly and directed towards $V_U$. As a result, each vertex in the graph has both in-degree and out-degree equal to $d$, and the graph is $\epsilon$-far from acyclic with a high probability of $1 - 2^{-n}$, when $\epsilon < \frac{1}{16}$ and $d \geq 128$ \cite[Lemma 5]{BenderR00}.

The $\DYes$ distribution constructs a graph by randomly partitioning the vertices into $n^{1/3}$ groups, $L_1, L_2, \ldots, L_{n^{1/3}}$, each of size $n^{2/3}$.
For the edges, for each $1 \leq i < n^{1/3}$, a $d$-regular graph between $L_i$ and $L_{i+1}$ is chosen uniformly at random and directed towards $L_{i + 1}$.
As a result, the graph has no directed cycles and every vertex, except those of $L_1$ and $L_{n^{1/3}}$, has an in-degree and out-degree equal to $d$. The distributions are depicted in \cref{fig:acyclicity-graphs}.

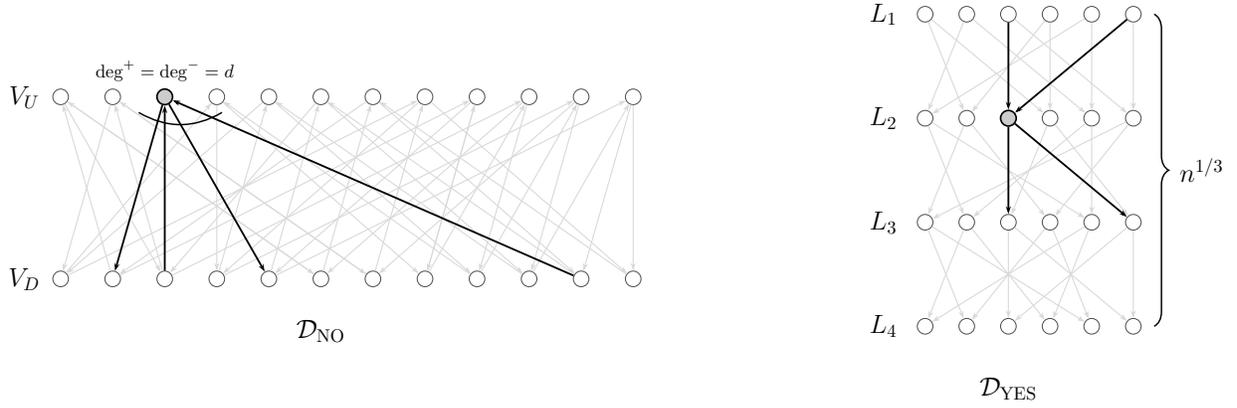
\begin{figure}
    \centering
    \resizebox{\textwidth}{!}{%
        \tikz[baseline]{
            \node {\begin{tikzpicture}[
    vertex/.style={circle, draw=black!70, fill=white, minimum size=3mm, inner sep=0pt, line width=0.5pt},
    edge/.style={-{Stealth[length=1.5mm, width=1mm]}, black!15, line width=0.6pt},
    highlight/.style={circle, draw=black, fill=black!20, minimum size=3mm, inner sep=0pt, line width=0.9pt},
    blackedge/.style={-{Stealth[length=1.5mm, width=1mm]}, black, line width=1pt}
]

\def\numVertices{12}
\def\hSpacing{1.0}  
\def\vSpacing{3.5}  
\def\highlightVertex{3}

\pgfmathsetmacro{\topRow}{\vSpacing}
\pgfmathsetmacro{\bottomRow}{0}

\foreach \i in {1,...,\numVertices} {
    \ifnum\i=\highlightVertex
        \node[highlight] (u\i) at (\i*\hSpacing, \topRow) {};
    \else
        \node[vertex] (u\i) at (\i*\hSpacing, \topRow) {};
    \fi
}

\foreach \i in {1,...,\numVertices} {
    \node[vertex] (d\i) at (\i*\hSpacing, \bottomRow) {};
}

\draw[edge] (u1) -- (d3);
\draw[edge] (u1) -- (d6);
\draw[edge] (u2) -- (d1);
\draw[edge] (u2) -- (d7);
\draw[edge] (u4) -- (d4);
\draw[edge] (u4) -- (d8);
\draw[edge] (u5) -- (d1);
\draw[edge] (u5) -- (d9);
\draw[edge] (u6) -- (d3);
\draw[edge] (u6) -- (d10);
\draw[edge] (u7) -- (d2);
\draw[edge] (u7) -- (d11);
\draw[edge] (u8) -- (d4);
\draw[edge] (u8) -- (d12);
\draw[edge] (u9) -- (d5);
\draw[edge] (u9) -- (d8);
\draw[edge] (u10) -- (d6);
\draw[edge] (u10) -- (d9);
\draw[edge] (u11) -- (d7);
\draw[edge] (u11) -- (d10);
\draw[edge] (u12) -- (d11);
\draw[edge] (u12) -- (d12);

\draw[edge] (d1) -- (u4);
\draw[edge] (d1) -- (u8);
\draw[edge] (d2) -- (u1);
\draw[edge] (d2) -- (u9);
\draw[edge] (d3) -- (u2);
\draw[edge] (d3) -- (u10);
\draw[edge] (d4) -- (u5);
\draw[edge] (d4) -- (u11);
\draw[edge] (d5) -- (u6);
\draw[edge] (d5) -- (u12);
\draw[edge] (d6) -- (u1);
\draw[edge] (d6) -- (u7);
\draw[edge] (d7) -- (u2);
\draw[edge] (d7) -- (u8);
\draw[edge] (d8) -- (u9);
\draw[edge] (d8) -- (u10);
\draw[edge] (d9) -- (u4);
\draw[edge] (d9) -- (u11);
\draw[edge] (d10) -- (u5);
\draw[edge] (d10) -- (u12);
\draw[edge] (d11) -- (u6);
\draw[edge] (d12) -- (u7);

\draw[blackedge] (u3) -- (d2);
\draw[blackedge] (u3) -- (d5);

\draw[blackedge] (d3) -- (u3);
\draw[blackedge] (d11) -- (u3);

\pgfmathsetmacro{\labelX}{0.3}
\node[font=\Large] at (\labelX, \topRow) {$V_U$};
\node[font=\Large] at (\labelX, \bottomRow) {$V_D$};

\pgfmathsetmacro{\arcLeft}{(\highlightVertex-0.5)*\hSpacing}
\pgfmathsetmacro{\arcRight}{(\highlightVertex+1.1)*\hSpacing}
\pgfmathsetmacro{\arcY}{\topRow-0.3}
\draw[thick, black] (\arcLeft, \arcY) to[out=-30, in=-150] (\arcRight, \arcY);

\pgfmathsetmacro{\textX}{\highlightVertex*\hSpacing}
\pgfmathsetmacro{\textY}{\topRow+0.5}
\node[font=\small, black] at (\textX, \textY) {$\deg^+ = \deg^- = d$};

\pgfmathsetmacro{\nameX}{(\numVertices*\hSpacing)/2}
\pgfmathsetmacro{\nameY}{\bottomRow-1.0}
\node[font=\Large] at (\nameX, \nameY) {$\DNo$};

\end{tikzpicture}};
        }\hspace{4cm}%
        \tikz[baseline]{
            \node {\begin{tikzpicture}[
    vertex/.style={circle, draw=black!70, fill=white, minimum size=3mm, inner sep=0pt, line width=0.5pt},
    edge/.style={-{Stealth[length=1.5mm, width=1mm]}, black!15, line width=0.6pt},
    highlight/.style={circle, draw=black, fill=black!20, minimum size=3mm, inner sep=0pt, line width=0.9pt},
    blackedge/.style={-{Stealth[length=1.5mm, width=1mm]}, black, line width=1pt}
]

\def\numLayers{4}
\def\numVertices{6}
\def\hSpacing{0.8}  
\def\vSpacing{2}    
\def\highlightLayer{2}
\def\highlightVertex{3}

\pgfmathsetmacro{\layerOne}{(\numLayers-1)*\vSpacing}
\pgfmathsetmacro{\layerTwo}{(\numLayers-2)*\vSpacing}
\pgfmathsetmacro{\layerThree}{(\numLayers-3)*\vSpacing}
\pgfmathsetmacro{\layerFour}{0}

\foreach \i in {1,...,\numVertices} {
    \node[vertex] (l1-\i) at (\i*\hSpacing, \layerOne) {};
}

\foreach \i in {1,...,\numVertices} {
    \ifnum\i=\highlightVertex
        \node[highlight] (l2-\i) at (\i*\hSpacing, \layerTwo) {};
    \else
        \node[vertex] (l2-\i) at (\i*\hSpacing, \layerTwo) {};
    \fi
}

\foreach \i in {1,...,\numVertices} {
    \node[vertex] (l3-\i) at (\i*\hSpacing, \layerThree) {};
}

\foreach \i in {1,...,\numVertices} {
    \node[vertex] (l4-\i) at (\i*\hSpacing, \layerFour) {};
}

\draw[edge] (l1-1) -- (l2-2);
\draw[edge] (l1-1) -- (l2-4);
\draw[edge] (l1-2) -- (l2-1);
\draw[edge] (l1-2) -- (l2-5);
\draw[edge] (l1-3) -- (l2-6);
\draw[edge] (l1-4) -- (l2-2);
\draw[edge] (l1-4) -- (l2-4);
\draw[edge] (l1-5) -- (l2-1);
\draw[edge] (l1-5) -- (l2-5);
\draw[edge] (l1-6) -- (l2-6);

\draw[blackedge] (l1-3) -- (l2-3);
\draw[blackedge] (l1-6) -- (l2-3);

\draw[edge] (l2-1) -- (l3-2);
\draw[edge] (l2-1) -- (l3-5);
\draw[edge] (l2-2) -- (l3-1);
\draw[edge] (l2-2) -- (l3-4);
\draw[edge] (l2-4) -- (l3-3);
\draw[edge] (l2-4) -- (l3-6);
\draw[edge] (l2-5) -- (l3-2);
\draw[edge] (l2-5) -- (l3-5);
\draw[edge] (l2-6) -- (l3-1);
\draw[edge] (l2-6) -- (l3-4);

\draw[blackedge] (l2-3) -- (l3-3);
\draw[blackedge] (l2-3) -- (l3-6);

\draw[edge] (l3-1) -- (l4-2);
\draw[edge] (l3-1) -- (l4-5);
\draw[edge] (l3-2) -- (l4-1);
\draw[edge] (l3-2) -- (l4-4);
\draw[edge] (l3-3) -- (l4-3);
\draw[edge] (l3-3) -- (l4-6);
\draw[edge] (l3-4) -- (l4-2);
\draw[edge] (l3-4) -- (l4-5);
\draw[edge] (l3-5) -- (l4-1);
\draw[edge] (l3-5) -- (l4-4);
\draw[edge] (l3-6) -- (l4-3);
\draw[edge] (l3-6) -- (l4-6);

\node[font=\Large] at (0, \layerOne) {$L_1$};
\node[font=\Large] at (0, \layerTwo) {$L_2$};
\node[font=\Large] at (0, \layerThree) {$L_3$};
\node[font=\Large] at (0, \layerFour) {$L_4$};

\pgfmathsetmacro{\bracketX}{(\numVertices+0.5)*\hSpacing}
\draw[thick, black, decorate, decoration={brace, amplitude=8pt, mirror}] 
    (\bracketX, \layerFour) -- (\bracketX, \layerOne) 
    node[midway, right=10pt, font=\Large] {$n^{1/3}$};

\pgfmathsetmacro{\nameX}{(\numVertices*\hSpacing)/2}
\pgfmathsetmacro{\nameY}{-1.2}
\node[font=\Large] at (\nameX, \nameY) {$\DYes$};

\end{tikzpicture}};
        }%
    }
    \caption{Sample graphs from the $\DYes$ (acyclic) and $\DNo$ (far from acyclic) distributions. The typical vertex has an in-degree and out-degree of $d$.}
    \label{fig:acyclicity-graphs}
\end{figure}

Now, we present the Markov chain that captures this lower-bound construction.
The Markov chain consists of two parts: one for $\DNo$ and one for $\DYes$.
In the $\No$ part, each of the vertex groups $V_U$ and $V_D$ is represented by a state, $v_U$ and $v_D$ respectively. Considering that in the graph, the outgoing edges of $V_U$ go to $V_D$ and the outgoing edges of $V_D$ go to $V_U$, in the Markov chain, we let $v_D$ transition to $v_U$, and let $v_U$ transition to $v_D$ with probability $1$.
To model starting at a random vertex, we use an initial state $x_{\No}$ that transitions to $v_D$ or $v_U$ with probability $\frac{1}{2}$ each.
Similarly, in the $\Yes$ part, each vertex group $L_i$ is represented by a state $\ell_i$. Each state $\ell_i$ for $1 \leq i < n^{1/3}$, transitions to the $\ell_{i+1}$ with probability $1$.
To model starting at a random vertex, we use an initial state $x_{\Yes}$ that transitions to each of $\ell_1, \ell_2 \ldots, \ell_{n^{1/3}}$ with probability $\frac{1}{n^{1/3}}$.
As the algorithm interacts with the Markov chain, the only observation it makes is whether the drawn state is $\ell_{n^{1/3}}$ or not. The Markov chain is illustrated in \cref{fig:acyclicity-chain}. 

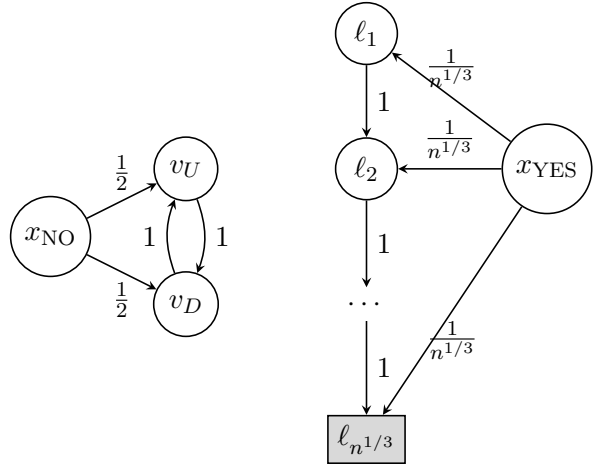
\begin{figure}
    \centering
    \begin{tikzpicture}[->, >=stealth, auto, semithick, scale=0.6]
  \def\statesize{0cm}   
  
  \tikzstyle{every state}=[fill=none,draw=black,text=black,minimum size=\statesize]
  \tikzstyle{initial state}=[circle,fill=none,draw=black,text=black,minimum size=\statesize]
  \tikzstyle{final state}=[fill=gray!30,draw=black,text=black,minimum size=\statesize]
  \tikzstyle{dots state}=[draw=none,text=black,minimum size=\statesize]
  
  \def\nohspace{3}      
  \def\novspace{3}      
  \def\yeshspace{4}     
  \def\yesvspace{3}     
  \def\casesep{4}       
  
  \node[initial state] (xno) at (-\nohspace, 0.5*\novspace) {$x_{\No}$};
  \node[state] (vu) at (0, \novspace) {$v_U$};
  \node[state] (vd) at (0, 0) {$v_D$};
  
  \path (xno) edge node[above] {$\frac{1}{2}$} (vu)
        (xno) edge node[below] {$\frac{1}{2}$} (vd)
        (vu) edge[bend left=20] node[right] {$1$} (vd)
        (vd) edge[bend left=20] node[left] {$1$} (vu);
  
  \node[initial state] (xyes) at (\casesep+\yeshspace, \yesvspace) {$x_{\Yes}$};
  \node[state] (l1) at (\casesep, 2*\yesvspace) {$\ell_1$};
  \node[state] (l2) at (\casesep, \yesvspace) {$\ell_2$};
  \node[dots state] (ldots) at (\casesep, 0) {$\cdots$};
  \node[final state] (ln) at (\casesep, -\yesvspace) {$\ell_{n^{1/3}}$};
  
  \path (xyes) edge node[above] {$\frac{1}{n^{1/3}}$} (l1)
        (xyes) edge node[above] {$\frac{1}{n^{1/3}}$} (l2)
        (xyes) edge node[below] {$\frac{1}{n^{1/3}}$} (ln)
        (l1) edge node[right] {$1$} (l2)
        (l2) edge node[right] {$1$} (ldots)
        (ldots) edge node[right] {$1$} (ln);
\end{tikzpicture}
    \caption{A Markov chain that models exploring the input graph drawn from the $\DYes$ and $\DNo$ distributions. Distinguishing the initial states $x_\Yes$ and $x_\No$ is equivalent to distinguishing between the two input distributions.}
    \label{fig:acyclicity-chain}
\end{figure}

Next, we elaborate on how this Markov chain captures the lower bound.
The key idea is that an algorithm that makes $O(n^{1/3})$ queries, does not discover any vertex more than once, with a sufficiently high constant probability. That is, the set of edges discovered by the sublinear algorithm does not contain any cycles, even when the directions are ignored. Let $(u_1, i_1), \ldots, (u_q, i_q)$ be the set of queries made to the adjacency lists. We assume without loss of generality that the queries are unique (i.e., the same query is not made twice), and that the degree of each vertex is revealed to the algorithm as it is discovered.
\begin{lemma}[{\cite[Lemma 7]{BenderR00}}]
    Consider any algorithm that makes $q \leq \frac{1}{4} \cdot n^{1/3}$ adjacency-list queries, and let $v_1, \ldots v_q$ be the answers, where $v_k \in V \cup \{\bot\}$. Then, with probability $1 - \frac{1}{16}$, no vertex appears in the answers more than once.
\end{lemma}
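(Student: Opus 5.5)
The plan is to use the principle of deferred decisions: reveal the random graph lazily, one query at a time, and bound the probability that each individual answer is a vertex that has already been seen. I would argue for each distribution separately, $\DYes$ and $\DNo$. In both, the relevant structure is a union of random $d$-regular bipartite graphs between vertex groups of size $m$, with $m = n^{2/3}$ for $\DYes$ and $m = n/2$ for $\DNo$. The hidden random partition into groups is also revealed lazily.

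\textbf{Step 1 (setup).} By Yao's principle, fix a deterministic algorithm making queries $(u_1,i_1),\ldots,(u_q,i_q)$. Before the $k$-th query, let $S_k$ be the set of vertices \enquote{seen} so far. This means every vertex either named in a query or returned as an answer. Each step adds at most two vertices, so $|S_k| \le 2k$. The bad event at step $k$ is that the answer $v_k$ lies in $S_k$ (or repeats an earlier answer). Every repeated appearance is of this form.

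\textbf{Step 2 (per-step bound).} Condition on the full history up to step $k$. This history is the partial graph, that is, the answers so far, each edge attached to its adjacency-list index, together with the degrees observed. I claim
\[
\Pr[v_k = w \mid \text{history}] \le \frac{1+o(1)}{m - 2q}
\]
for every fixed vertex $w$. To prove this, I would model each $d$-regular bipartite graph by its configuration (stub-matching) model. Conditioned on the revealed edges, the unrevealed outgoing stub $(u_k,i_k)$ is matched to a uniformly random unmatched stub on the opposite side. That side has at least $d(m-q)$ free stubs, and $w$ owns at most $d$ of them. This gives probability at most $1/(m-q)$ per vertex.

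The randomness of the group assignment is also exchangeable under permutations of unseen vertices. So an unseen queried vertex $u_k$ behaves like a uniformly random vertex of its (random) group, and the same bound holds.

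\textbf{Step 3 (union bound).} Summing over $w \in S_k$ and then over $k$,
\[
\Pr[\text{some repetition}] \le \sum_{k=1}^{q} \frac{2k\,(1+o(1))}{m-2q} \le \frac{q^2(1+o(1))}{m - 2q}.
\]
Plugging in $q \le n^{1/3}/4$ and $m \ge n^{2/3}$ gives roughly $q^2/m \le 1/16$, up to lower-order terms. If a slightly sharper constant is needed, I would tighten the count $|S_k|\le 2k$ by noting that a query's named vertex is either already an answer or is fresh. Alternatively, I would absorb the lower-order terms by assuming $n$ is large.

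\textbf{Main obstacle.} The delicate point is Step 2: justifying that, given the history, the answer is close to uniform over free endpoints. The model is a uniformly random \emph{simple} $d$-regular bipartite graph, not the configuration model. My first approach would be to note that, for constant $d$, the configuration model is simple with probability bounded below by a constant depending only on $d$. Conditioning on simplicity can therefore inflate the probability of the bad event by at most a constant factor. If that constant is too lossy for the stated $1/16$, I would instead use a switching argument. That argument shows directly, in the simple-graph model, that for any two free target vertices $w,w'$ the conditional probabilities that $(u_k,i_k)$ points to $w$ or to $w'$ differ by a factor $1+O(dq/m)$. This yields the per-vertex bound $(1+o(1))/(m-q)$ needed in Step 2.
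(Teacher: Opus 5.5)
The paper gives no proof of this lemma; it is imported from \cite[Lemma 7]{BenderR00}, so there is no in-paper argument to compare against. Your skeleton is the standard one and the right approach: deferred decisions, a per-step collision probability of about $|S_k|/m$, and a union bound. As written, however, two steps do not deliver the stated constant $1/16$.

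\textbf{Step 2.} The configuration-model fallback cannot work. At $q=n^{1/3}/4$ in $\DYes$, the leading term $q^2/m$ of your bound already equals the target $1/16$, so no constant-factor loss is affordable. The loss here is $1/\Pr[\text{simple}]\approx e^{(d-1)^2/2}$ with $d\ge 128$. The switching argument is therefore the proof, not a fallback, and its statement needs repair in two places.
\begin{enumerate}
\item Two targets $w,w'$ need not have conditional probabilities within a factor $1+O(dq/m)$. A seen $w$ with $r$ revealed in-edges has only $d-r$ free ports, so its probability is correspondingly smaller.
\item A direct switching between ``$(u_k,i_k)\to w$'' and ``$(u_k,i_k)\to w'$'' has no valid moves on completions where $u_k$ is already joined to $w'$ through an unqueried port.
\end{enumerate}
The fix is to switch against all other targets at once.
\begin{itemize}
\item Take a completion in which $(u_k,i_k)$ is matched to $w$. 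Choose any unqueried port $(x,l)$ in the same bipartite graph, matched to some $y$, with $x\ne u_k$, $y\ne w$, $x\notin N(w)$ and $y\notin N(u_k)$. There are at least $dm-(k-1)-2d-2d^2$ such choices.
\item Replace the edges $u_kw,\,xy$ by $u_ky,\,xw$, keeping all port labels. The result is simple and consistent with the history.
\item Each completion with $(u_k,i_k)\to y\ne w$ is reached by at most $d$ such moves, since the preimage is determined by which edge at $w$ is swapped back.
\end{itemize}
Hence $\Pr[v_k=w\mid\cdot]\le d/(dm-q-2d-2d^2)=1/(m-q/d-2d-2)$.

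It is simplest to condition on the entire hidden partition from the start. Given the partition, each bipartite graph is uniform and the queries are determined by the answers, so the bound holds pointwise and your exchangeability remark is unnecessary. This presumes uniformly random adjacency-list orderings, which your port model builds in; state it explicitly, since the paper only says ``arbitrarily ordered''.

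\textbf{Step 3.} Your sum is $q(q+1)/(m-2q)$, which strictly exceeds $1/16$ when $m=n^{2/3}=16q^2$, for every $n$. The excess is a relative factor $1+\Theta(1/q)$ with $q$ tied to $n$, so ``assume $n$ large'' does not absorb it.

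Use the exact count instead. $S_k$ is formed before query $k$, so $|S_k|\le 2(k-1)$, and $v_k\ne u_k$ by bipartiteness. Combined with the per-vertex bound above, the total is at most $q(q-1)/(m-q/d-2d-2)$.
\begin{itemize}
\item This is at most $1/16$ whenever $m\ge 16q^2$ and $q\ge (2d+2)/15$.
\item For smaller $q$ it is $O(d^2/n^{2/3})$, hence below $1/16$ for large $n$.
\end{itemize}

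If you wanted only the event literally in the statement, a repeated answer, then just $k-1$ vertices matter per step and you get roughly $q^2/(2m)\le 1/32$. Your stronger event also rules out an answer hitting a previously queried fresh root. That stronger event is what the forest reduction in the appendix actually needs, so keep it.
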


As a result, we can assume without loss of generality that no vertex appears twice in the answers, since $\frac{1}{16}$ is a negligible probability and when a vertex appears twice, we can simply presume that the algorithm successfully distinguishes between the input distributions.
Making this assumption, we argue that distinguishing between the input distributions $\DYes$ and $\DNo$ is equivalent to distinguishing between initial states $x_\Yes$ and $x_\No$ in the Markov chain.
Since there are no repeated vertices, the subgraph explored by the algorithm is a set of rooted trees.
When the algorithm explores the neighbors of a vertex $u$ in a group of vertices $A$, the discovered neighbor $v$ is a random vertex from a neighboring group $B$.
The distribution of $B$ is solely determined by $A$ and does not depend on $u$.\footnote{in this construction, $B$ is in fact a deterministic function of $A$.} 
Furthermore, the algorithm observes only the out-degree of each discovered vertex (which is different only for vertices in $L_{n^{1/3}}$), and not its group.
This is paralleled by the transition probabilities in the Markov chain, and the partial observation of the state, which only differentiates $\ell_{n^{1/3}}$.

As a result, to prove the lower bound for acyclicity testing, it suffices to prove the following in the context of Markov chains:
\begin{lemma}
    Any algorithm that distinguishes between the initial states $x_\Yes$ and $x_\No$ with a probability of $5/8$, must make $\frac{1}{4} \cdot n^{1/3}$ queries to the Markov chain.
\end{lemma}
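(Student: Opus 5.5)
We would prove the lemma with a coupling argument in the spirit of \cref{clm:almost-coupling}. We run the (possibly adaptive, w.l.o.g.\ deterministic) rewinding strategy in parallel from $x_\Yes$ and from $x_\No$, and show that the two query trees look identical with probability at least $3/4$. It then follows that the strategy succeeds with probability at most $\frac12 + \frac12\cdot\frac14 = 5/8$. Only the root's children receive genuinely different distributions: $x_\Yes$ sends them to a uniform $\ell_i$, while $x_\No$ sends them to $v_U$ or $v_D$. Every later transition is deterministic. So a tree started at $x_\Yes$ consists of root-children at levels $i_1, i_2,\ldots$, and each of these spawns a descendant chain $\ell_{i_j+1},\ell_{i_j+2},\ldots$. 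A tree started at $x_\No$ never produces an observation $1$.

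The key step is to couple the two runs so that the observations agree as long as no $\Yes$-node reaches $\ell_{n^{1/3}}$. We call such an event a \emph{hit}. Before any hit, every observation in both runs is $0$, so the strategy's choices coincide. It therefore suffices to bound the probability that the $x_\Yes$ run hits the sink-like state within $Q < \frac14 n^{1/3}$ queries. Suppose child $j$ of the root is drawn at level $i_j$ and the strategy later extends its subtree to depth $t_j$ below it. The subtree hits exactly when $i_j + t_j \ge n^{1/3}$. The subtree depth $t_j$ is at most the number of queries spent there, so $\sum_j (t_j+1) \le Q$. A root child is created at a moment when all past observations are $0$, so its level $i_j$ is uniform on $\{1,\ldots,n^{1/3}\}$. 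This uniformity holds conditioned on the history, since the history carries no information about $i_j$ beyond the fact that $\ell_{n^{1/3}}$ has not been reached in that subtree.

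The main obstacle is that adaptivity lets the strategy decide how deep to extend each subtree after seeing zeros. Those zeros slightly condition $i_j$ to be small, so the distribution is not literally uniform. We would handle this with a sequential union bound over queries. Consider any query that deepens subtree $j$ from depth $t-1$ to depth $t$, given no hit so far. Conditioned on $i_j + t - 1 < n^{1/3}$, $i_j$ is uniform on $\{1,\ldots,n^{1/3}-t+1\}$. The query therefore causes a hit with probability $1/(n^{1/3}-t+1) \le 1/(n^{1/3}-Q) \le \frac{4}{3n^{1/3}}$. A newly created root child hits with probability $1/n^{1/3}$. Summing over at most $Q \le \frac14 n^{1/3}$ queries gives a total hit probability of at most $\frac14\cdot\frac43 = \frac13$.

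That bound is slightly too weak for $5/8$, so we would tighten it. The cleaner approach is to compute the non-hit probability exactly as a product. Each query's conditional non-hit probability is $1-\frac{1}{m}$ with $m \ge n^{1/3}-Q$, which gives non-hit probability at least about $e^{-1/3}$. Alternatively, we would count more carefully: summing $1/(n^{1/3}-t)$ only over the depths actually used in each subtree gives at most $\sum_j \ln\frac{n^{1/3}}{n^{1/3}-t_j} \le \frac{Q}{n^{1/3}-Q}$. If the constants remain tight, we would note that the event $i_j + t_j \ge n^{1/3}$ depends only on the final depths, so the overall bound is $\sum_j (t_j+1)/n^{1/3} \le Q/n^{1/3} \le 1/4$. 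This yields a success probability of at most $\frac12+\frac18=\frac58$.
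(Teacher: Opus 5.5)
Your final step is the paper's argument. Only the $x_\Yes$ run can ever produce a nonzero observation, so it suffices to bound the probability that run reaches $\ell_{n^{1/3}}$. A union bound over the subtrees hanging off the root's children, each reaching $\ell_{n^{1/3}}$ with probability $(t_j+1)/n^{1/3}$ and with $\sum_j (t_j+1)\le Q$, gives hit probability below $1/4$ and hence success below $5/8$.

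What you should state explicitly, because it is what licenses that union bound, is that the ``adaptivity obstacle'' you raise does not exist. A deterministic strategy sees only zeros until the first hit. So up to that moment it builds a fixed tree $T^*$, which is exactly the tree it builds in the $x_\No$ run. The hit event is precisely the event that some node of $T^*$ sits at $\ell_{n^{1/3}}$. The $t_j$ in your last step are therefore the deterministic depths of $T^*$, independent of the uniform levels $i_j$. If the $t_j$ were random final depths, saying the event ``depends only on the final depths'' would not by itself give $\Pr[i_j+t_j\ge n^{1/3}]=(t_j+1)/n^{1/3}$.

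With that fixed, the intermediate bounds ($1/3$, $1-e^{-1/3}$, $Q/(n^{1/3}-Q)$) can simply be dropped, since each is too weak for $5/8$. The sequential computation also has an off-by-one. Given no hit at depth $t-1$, $i_j$ is uniform on $\{1,\dots,n^{1/3}-t\}$, not $\{1,\dots,n^{1/3}-t+1\}$. With the correct range, the per-step survival probabilities telescope to exactly $1-(t_j+1)/n^{1/3}$, which recovers the same bound.
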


The proof in \cite{BenderR00} essentially argues the same thing, without ever explicitly defining a Markov chain. We give a high-level overview of their proof here.
Since the only observation the algorithm can make is whether it has reached $\ell_{n^{1/3}}$ or not, it suffices to bound the probability of reaching $\ell_{n^{1/3}}$ when the initial state is $x_\Yes$.
The next state drawn from $x_{\Yes}$ is always a state $\ell_i$, where $i \in \{1, 2, \ldots, n^{1/3}\}$ is uniformly random.
As a result, exploring a tree of depth $D$ from such a state, reaches $\ell_{n^{1/3}}$ with probability of at most $\frac{D + 1}{n^{1/3}}$.
Therefore, since the sum of the depths of the explored trees is $O(n^{1/3})$, the probability of reaching $\ell_{n^{1/3}}$ throughout the algorithm can be bounded by a constant.

Finally, we show that the lower bound is captured by our model in the more general case, where the algorithm is allowed to query both the incoming and outgoing adjacency lists.
To do so, we use a simple gadget.
On the $\Yes$ side of the Markov chain, for each state $\ell_i$, we create two paths of length two:
A path $\ell_i \to d_i \to \ell_{i+1}$, corresponding to traversing the outgoing edges of $L_i$,
and a path $\ell_i \to u_i \to \ell_{i - 1}$, corresponding to traversing the incoming edges of $L_i$ (the $\No$ side is handled similarly).
Additionally, we extend the observations made by the algorithm.
Other than observing whether the current state is $\ell_{n^{1/3}}$, the algorithm can also observe 
whether the current state is 
(1) $\ell_{1}$, the first layer corresponding to vertices with zero in-degree,
(2) an auxiliary state $d_i$, indicating the traversal of an outgoing edge, and
(3) an auxiliary state $u_i$, indicating the traversal of an incoming edge.
This allows the algorithm to choose which of the directions it takes, while increasing the number of queries to the Markov chain only by a constant factor.
See \cref{sec:generality} for a discussion of Markov chains with a constant-sized observation alphabet.

\end{document}